\documentclass{amsart}
\usepackage{mathrsfs}
\usepackage{amssymb}
\usepackage{amsmath}
\usepackage{amsfonts}
\usepackage{enumerate}
\usepackage{pifont}
\usepackage{enumerate}
\usepackage[pdftex]{graphicx}
\usepackage{verbatim}
\usepackage{amsthm}
\usepackage[all]{xy}
\usepackage{xypic}

\addtolength{\textwidth}{1.1in}
\addtolength{\hoffset}{-0.65in}
\addtolength{\voffset}{-0.5in}
\addtolength{\textheight}{1.1in}

\numberwithin{equation}{section}
\newtheorem{theorem}{Theorem}[section]

\newtheorem{definition}{Definition}[section]

\newtheorem{proposition}{Proposition}[section]
\newtheorem{remark}{Remark}[section]
\newtheorem{example}{Example}[section]


\newcommand{\8}{\infty}

\newcommand{\be}{\begin{eqnarray*}}
\newcommand{\ee}{\end{eqnarray*}}
\newcommand{\beq}{\begin{equation}}
\newcommand{\eeq}{\end{equation}}
\newcommand{\beqn}{\begin{equation*}}
\newcommand{\eeqn}{\end{equation*}}
\newcommand{\bs}{\begin{split}}
\newcommand{\es}{\end{split}}

\begin{document}

\title{A mathematical formalism of non-Hermitian quantum mechanics and observable-geometric phases}

\author{Zeqian Chen}

\address{Wuhan Institute of Physics and Mathematics, Innovation Academy for Precision Measurement Science and Technology, Chinese Academy of Sciences, 30 West District, Xiao-Hong-Shan, Wuhan 430071, China.}

\thanks{Key words: Spectral operator; para-Hermitian operator; non-Hermitian quantum mechanics; non-Hermitian Born formula; observable-geometric phase.}

\date{}
\maketitle
\markboth{Zeqian Chen}%
{Non-Hermitian quantum mechanics}

\begin{abstract}
We present a mathematical formalism of non-Hermitian quantum mechanics, following the Dirac-von Neumann formalism of quantum mechanics. In this formalism, the state postulate is the same as in the Dirac-von Neumann formalism, but the observable postulate should be changed to include para-Hermitian operators (spectral operators of scalar type with real spectrum) representing observable, as such both the measurement postulate and the evolution postulate must be modified accordingly. This is based on a Stone type theorem as proved here that the dynamics of non-Hermitian quantum systems is governed by para-unitary time evolution. The Born formula on the expectation of an observable at a certain state is given in the non-Hermitian setting, which is proved to be equal to the usual Born rule for every Hermitian observable, but for a non-Hermitian one it may depend on measurement via the choice of a metric operator associated with the non-Hermitian observable under measurement. Our formalism is nether Hamiltonian-dependent nor basis-dependent, but can recover both PT-symmetric and biorthogonal quantum mechanics, and it reduces to the Dirac-von Neumann formalism of quantum mechanics in the Hermitian setting. As application, we study observable-geometric phases for non-Hermitian quantum systems.
\end{abstract}

\maketitle

\tableofcontents

\section{Introduction}\label{Intro}

Non-Hermitian quantum theory regards as observable some non-Hermitian (not necessarily self-adjoint) operators, such as $PT$-symmetric, pseudo-Hermitian or biorthogonal quantum mechanics. In $PT$-symmetric quantum mechanics developed by Bender {\it et al.} \cite{BB1998, BBM1999}, the Hamiltonian $H$ is not necessarily Hermitian, but has the unbroken $PT$-symmetry so that all its spectra are real. Bender {\it et al.} \cite{BBJ2002} have shown that the Hamiltonian $H$ with the unbroken $PT$-symmetry is always Hermitian (or {\it self-adjoint} in mathematical texts) in a new inner product defined by a symmetric operator $C$ associated with $H.$ However, this physical Hilbert-space inner product is dependent on the Hamiltonian $H$ itself. Mostafazadeh \cite{Mosta2010} further developed pseudo-Hermitian quantum mechanics by employing the concept of pseudo Hermiticity first introduced by Dirac and Pauli \cite{Pauli1943}. On the other hand, in biorthogonal quantum mechanics as developed by Brody \cite{Brody2014}, the observables are determined by a previously chosen (unconditional) basis in the associated Hilbert space. In such a theory, since an unconditional basis is not necessarily orthogonal in a Hilbert space \cite{LT1977}, some observables are not represented by Hermitian operators, and meanwhile, some Hermitian operators are excluded from the observable when given a basis.

Namely, $PT$-symmetric or pseudo-Hermitian quantum mechanics is Hamiltonian-dependent in the sense that the physical Hilbert-space inner product is determined by the non-Hermitian Hamiltonian of a system, while biorthogonal quantum mechanics is basis-dependent in the sense that observables are determined by a chosen basis in the Hilbert space of a system. Note that the conventional quantum mechanics, that is the Dirac-von Neumann formalism \cite{Dirac1958, vN1955}, is nether a Hamiltonian-dependent nor basis-dependent theory. To this end, we need to present a mathematical formalism of non-Hermitian quantum mechanics, which is nether Hamiltonian-dependent nor basis-dependent, but can recover both $PT$-symmetric and biorthogonal quantum mechanics, and it reduces to the Dirac-von Neumann formalism in the Hermitian setting.

In our formalism, the (pure) states are represented by ray lines or nonzero vectors as the same as the usual, but the observable is represented by a spectral operator of scalar type with real spectrum ({\it para-Hermitian operators} in our notion, including all Hermitian operators), as such both the measurement postulate and the evolution postulate must be modified accordingly. This is based on a theorem of Antoine and Trapani \cite{AT2014}, stating that a densely defined closed operator $T$ is para-Hermitian if and only if there is a metric operator $G$ such that $G^\frac{1}{2}TG^{-\frac{1}{2}}$ is Hermitian (see Section \ref{Pre:para-Hermi}). In particular, a Stone type theorem associated with the para-unitary operators is proved that the dynamics of non-Hermitian quantum systems is governed by para-unitary time evolution (see Section \ref{Pre:Stone} for the details). There are more general concepts than para-Hermitian operators commonly found in the literature, such as {\it quasi-Hermitian} and {\it pseudo-Hermitian} operators (cf. \cite{AT2014, Mosta2010}). However, there seems not to exist a general theory of functional calculus for them (cf. \cite{DS1971}), yet a Stone type theorem cannot hold in general for such classes of operator. Mathematically, this explains the reason why we chose para-Hermitian operators representing observables in the formalism.

This formalism presents the non-Hermitian Born formula on the expectation of an observable at a certain state, which we give in terms of a metric operator associated with the observable under measurement. This formula is proved to be equal to the usual Born rule for any Hermitian observable, but for a non-Hermitian observable, it is usually dependent on the choice of a metric operator involved for the measurement of the associated non-Hermitian observable. Thus, in our formalism, a metric operator associated with a para-Hermitian operator plays a role of measurement only, as pointed out in \cite{SGH1992}, but it needs not to introduce a new inner product.

The paper is organized as follows. In Section \ref{Pre}, we include some definitions and preliminary results on para-Hermitian and para-unitary operators and evolution systems. In particular, we prove a Stone type theorem associated with the para-unitary operator. In Section \ref{Axiom}, we present a mathematical formalism of non-Hermitian quantum mechanics and give some examples for illustration. In Section \ref{PTBIqm}, we explain how to recover PT-symmetric and biorthogonal quantum mechanics from our formalism. As application, in Section \ref{GeoPhase}, we study the observable-geometric phase of a time-dependent non-Hermitian quantum system, which was introduced in \cite{Chen2020} for Hermitian quantum systems. We give a summary in Section \ref{Sum}. Finally, we include an appendix, namely Section \ref{App}, on the geometry of non-Hermitian observable space, which is needed for a geometrical description of the observable-geometric phase associated with a non-Hermitian quantum system.

\section{Preliminaries}\label{Pre}

In what follows, we utilize the standard notions and notations from functional analysis (cf. \cite{Rudin1991}). $\mathbb{C}$ denotes the complex field and $\mathbb{C}_* = \mathbb{C} \setminus \{0\}.$ We denote by $\mathbb{H}$ a complex separable Hilbert space with an inner product $\langle \cdot, \cdot \rangle,$ linear in the second entry, and $\mathbb{H}_* = \mathbb{H} \setminus \{0\}.$ By an {\it operator} $T$ in $\mathbb{H}$ we shall mean a linear mapping whose domain $\mathcal{D} (T)$ is a (not necessarily closed) linear subset of $\mathbb{H}$ and whose range $\mathcal{R} (T)$ lies in $\mathbb{H}.$ We always use $I$ denote the identity operator; $T^*$ the adjoint operator for any densely defined operator $T$ in $\mathbb{H};$ $\mathcal{B} (\mathbb{H})$ the algebra of all bounded operators; $\mathcal{O} (\mathbb{H})$ the set of all Hermitian (self-adjoint) operators; $\mathcal{P} (\mathbb{H})$ the set of all orthogonal projections; $\mathcal{U} (\mathbb{H})$ the group of all unitary operators on $\mathbb{H},$ and $\mathcal{T} (\mathbb{H})$ the group of all bounded operators with bounded inverse on $\mathbb{H}.$ Note that $\mathcal{P} (\mathbb{H}),$ $\mathcal{U} (\mathbb{H})$ and $\mathcal{T} (\mathbb{H})$ are all subsets of $\mathcal{B} (\mathbb{H}).$

\subsection{Spectral operators}\label{Pre:SpecOper}

If an operator $P \in \mathcal{B} (\mathbb{H})$ satisfies $P^2 = P,$ it is called a {\it projection} as in \cite{DS1971,RS1980I,Rudin1991} (or a {\it skew projection} in some literatures, cf.\cite{Ovch1993}), and is an {\it orthogonal projection} if in addition $P^* =P.$ For a projection $P,$ its adjoint operator $P^*$ and complementary operator $P^\bot = I -P$ are both projections. We denote by $\tilde{\mathcal{P}} (\mathbb{H})$ the set of all projections in $\mathbb{H},$ and thus $\tilde{\mathcal{P}} (\mathbb{H}) \supset \mathcal{P} (\mathbb{H}).$ For two commuting projections $P, Q \in \tilde{\mathcal{P}} (\mathbb{H}),$ the intersection $P\wedge Q$ is defined by
\be
P\wedge Q = P Q
\ee
with the range $P\wedge Q (\mathbb{H}) = P (\mathbb{H}) \cap Q (\mathbb{H}),$ and the union $P \vee Q$ by
\be
P \vee Q = P+Q - P Q,
\ee
with the range $P \vee Q (\mathbb{H}) = P (\mathbb{H}) + Q (\mathbb{H}) = \overline{\mathrm{span}} [P (\mathbb{H}) \cup Q (\mathbb{H})],$ the closed subspace of $\mathbb{H}$ spanned by the sets $P (\mathbb{H})$ and $Q (\mathbb{H}).$ The order $P \le Q$ between two commuting projections $P, Q \in \tilde{\mathcal{P}} (\mathbb{H})$ is defined to be $P (\mathbb{H}) \subset Q (\mathbb{H}).$ A Boolean algebra of projections in $\mathbb{H}$ is a subset of $\tilde{\mathcal{P}} (\mathbb{H})$ which is a Boolean algebra under operations $\wedge, \vee$ and $\le$ together with its zero and unit elements being the operators $0$ and $I$ in $\mathcal{B} (\mathbb{H})$ respectively.

\begin{definition}\label{df:SpecMeasure}{\rm (cf. \cite{Dunf1958, DS1971})}\;
Let $\Sigma$ be a $\sigma$-field of subsets of a non-empty set $\Omega.$ A spectral measure on $\Sigma$ is a map $\mathbf{E}$ from $\Sigma$ into a Boolean algebra of projections in $\mathbb{H}$ satisfying the following conditions:
\begin{enumerate}[$1)$]

\item $\mathbf{E} (\emptyset) =0$ and $\mathbf{E} (\Omega) =I.$

\item For any $A, B \in \Sigma,$ $\mathbf{E} (\Omega \setminus A) = \mathbf{E} (A)^\bot,$
\be
\mathbf{E} (A \cap B) = \mathbf{E} (A) \wedge \mathbf{E} (B), \quad \mathbf{E} (A \cup B) = \mathbf{E} (A) \vee \mathbf{E} (B).
\ee

\item $\mathbf{E} (A)$ is countably additive in $A$ in the strong operator topology, i.e., for every sequence $\{A_n\}$ of mutually disjoint sets in $\Sigma,$
\be
\mathbf{E} (\cup_n A_n) x = \sum_n \mathbf{E} (A_n)x
\ee
holds for any $x \in \mathbb{H},$ where the series of the right hand side converges in $\mathbb{H}$ in the norm topology.

\end{enumerate}
\end{definition}

\begin{remark}\label{rk:DualSpecMeasure}\rm
For a spectral measure $\mathbf{E}$ on $\Sigma,$ define $\mathbf{E}^* (A) = [\mathbf{E} (A)]^*$ for every $A \in \Sigma.$ Then $\mathbf{E}^*$ is also a spectral measure $\mathbf{E}$ on $\Sigma,$ called the dual of $\mathbf{E}.$
\end{remark}

Note that every spectral measure $\mathbf{E}$ on $\Sigma$ is bounded, i.e., $\sup_{A \in \Sigma} \| \mathbf{E} (A) \| < \8.$ In this case, the integral $\int_\Omega f(\omega) \mathbf{E} (d \omega)$ may be defined for every bounded $\Sigma$-measurable (complex-valued) function defined $\mathbf{E}$-almost everywhere on $\Omega.$ Recall that a function $f$ is defined $\mathbf{E}$-almost everywhere on $\Omega,$ if there exists $\Omega_0 \in \Sigma$ such that $\mathbf{E} (\Omega_0) = I$ and $f$ is well defined for every $\omega \in \Omega_0.$ It was shown (cf. \cite[X.1]{DS1963}) that this integral is a bounded homomorphism of the $C^*$-algebra of $\mathcal{B}(\Omega, \Sigma)$ of bounded $\Sigma$-measurable functions in $\Omega$ with the norm $\|f\| = \sup_{\omega \in \Omega} |f(\omega)|$ into the $C^*$-algebra $\mathcal{B} (\mathbb{H}),$ that is, for any $\alpha, \beta \in \mathbb{C}$ and for $f, g \in \mathcal{B}(\Omega, \Sigma),$
\be\begin{split}
\int_\Omega [\alpha f (\omega) + \beta g (\omega)] \mathbf{E} (d \omega) & = \alpha \int_\Omega f (\omega) \mathbf{E} (d \omega) + \beta \int_\Omega g (\omega) \mathbf{E} (d \omega),\\
\int_\Omega f (\omega) g (\omega) \mathbf{E} (d \omega) & = \int_\Omega f (\omega) \mathbf{E} (d \omega) \int_\Omega g (\omega) \mathbf{E} (d \omega),\\
\Big \| \int_\Omega f (\omega) \mathbf{E} (d \omega) \Big \| & \le C_\mathbf{E} \sup_{\omega \in \Omega} |f(\omega)|,
\end{split}\ee
where $C_\mathbf{E}$ is a positive constant depending only upon the spectral measure $\mathbf{E}.$

In the sequel, we will focus on the spectral measures on the $\sigma$-field of Borel sets in the complex plane $\mathbb{C},$ denoted by $\mathcal{B}_\mathbb{C}.$

\begin{definition}\label{df:SpecOp}{\rm (cf. \cite[Definition XVIII.2.1]{DS1971})}\;
A densely defined closed operator $T$ in $\mathbb{H}$ with the domain $\mathcal{D} (T)$ is called a spectral operator, if there is a spectral measure $\mathbf{E}$ on $\mathcal{B}_\mathbb{C}$ such that
\begin{enumerate}[$1)$]

\item $\mathbf{E}$ is regular, i.e., for any $x,y \in \mathbb{H},$ the complex-valued measure $A \mapsto \langle x, \mathbf{E} (A) y \rangle$ is regular on $\mathcal{B}_\mathbb{C},$

\item for any bounded set $A \in \mathcal{B}_\mathbb{C},$ $\mathbf{E} (A) \mathbb{H} \subset \mathcal{D} (T),$

\item for any $B \in \mathcal{B}_\mathbb{C},$ $\mathbf{E} (B) \mathcal{D} (T) \subset \mathcal{D} (T)$ and
\be
T \mathbf{E} (B) x = \mathbf{E} (B) T x
\ee
for all $x \in \mathcal{D} (T),$

\item for any $B \in \mathcal{B}_\mathbb{C},$ the spectral set $\sigma (T|\mathbf{E} (B) (\mathbb{H}))$ of the restriction $T|\mathbf{E} (B) (\mathbb{H})$ of $T$ to $\mathbf{E} (B) (\mathbb{H})$ is contained in the closure $\bar{B}$ of $B,$ i.e.,
\be
\sigma (T|\mathbf{E} (B) (\mathbb{H})) \subset \bar{B},
\ee
where the domain $\mathcal{D} (T|\mathbf{E} (B) (\mathbb{H})) = \mathcal{D} (T) \cap \mathbf{E} (B) (\mathbb{H}).$

\end{enumerate}
The spectral measure $\mathbf{E}$ is called the {\it spectral resolution} or {\it resolution of the identity} for $T.$
\end{definition}

Note that the spectral measure $\mathbf{E}$ is uniquely determined by $T,$ i.e., the spectral resolution of a densely defined closed operator in $\mathbb{H}$ is unique whenever it exists (cf. \cite[Theorem XVIII.2.5]{DS1971}).

\begin{definition}\label{df:SpecOpScalar}{\rm (cf. \cite[Definition XVIII.2.12]{DS1971})}\;
A densely defined closed operator $T$ in $\mathbb{H}$ with the domain $\mathcal{D} (T)$ is of scalar type, if there is a spectral measure $\mathbf{E}$ on $\mathcal{B}_\mathbb{C}$ such that
\be
\mathcal{D} (T) = \{ x \in \mathbb{H}: \lim_n T_n x\;\text{exists}\}
\ee
and
\be
T x = \lim_n  T_n x,\quad \forall x \in \mathcal{D} (T),
\ee
where
\be
T_n = \int_{\{z \in\mathbb{C}: |z| \le n\} } z \mathbf{E} (d z).
\ee
The spectral measure $\mathbf{E}$ is said to be the spectral resolution for $T.$
\end{definition}

\begin{remark}\label{rk:ScalarTypeOp}\rm
It is shown in \cite[Lemma XVIII.2.13]{DS1971} that a scalar type operator $T$ in the sense of Definition \ref{df:SpecOpScalar} is a spectral operator in the sense of Definition \ref{df:SpecOp} and the spectral resolution of $T$ is unique. Thus, a scalar type operator is also called a {\it spectral operator of scalar type}. A theorem of Wermer (cf. \cite[Theorem XV.6.2.4]{DS1971}) states that a bounded spectral operator $T$ of scalar type is equivalent to a normal operator, that is, there exists a bounded self-adjoint operator $K$ with bounded inverse $K^{-1}$ such that the operator $K T K^{-1}$ is a normal operator.
\end{remark}

\subsection{Para-Hermitian operators}\label{Pre:para-Hermi}

Now we are ready to introduce the notion of {\it para-Hermitian} operators, which plays an essential role in the mathematical formulation of non-Hermitian quantum mechanics.

\begin{definition}\label{df:paraHermiOp}
A densely defined closed operator $T$ in $\mathbb{H}$ is called a para-Hermitian operator, if it is a spectral operator of scalar type with real spectrum, namely $\sigma (T) \subset \mathbb{R}.$

\end{definition}

We denote by $\tilde{\mathcal{O}} (\mathbb{H})$ the set of all para-Hermitian operators. Thus, $\mathcal{O} (\mathbb{H}) \subset \tilde{\mathcal{O}} (\mathbb{H}).$

Recall that an operator $G$ in $\mathbb{H}$ is called a {\it metric operator} (cf. \cite{AT2014, Mosta2010}), if $G$ is a bounded and strictly positive self-adjoint operator having bounded inverse $G^{-1}.$ Given a metric operator $G$ in $\mathbb{H},$ we can define a new inner product $\langle \cdot, \cdot \rangle_G$ in $\mathbb{H}$ by $\langle u, v \rangle_G = \langle u, Gv \rangle$ for any $u,v \in \mathbb{H}.$ Then the induced norm $\|u\|_G = \|G^\frac{1}{2} u \|$ is really equivalent to the original norm of $\mathbb{H}.$

\begin{proposition}\label{prop:paraHop}{\rm (cf. \cite[Proposition 3.12]{AT2014})}
Let $T$ be a densely defined closed operator in $\mathbb{H}.$ Then the following statements are
equivalent:
\begin{enumerate}[\rm 1)]

\item $T$ is a para-Hermitian operator.

\item There exists a metric operator $G$ such that $T$ is self-adjoint with respect to the inner $\langle \cdot, \cdot \rangle_G.$

\item There exists a metric operator $G$ such that $G^\frac{1}{2} T G^{-\frac{1}{2}}$ is self-adjoint.

\end{enumerate}
\end{proposition}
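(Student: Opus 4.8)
The plan is to prove the three statements equivalent through the chain $2)\Leftrightarrow 3)$, $3)\Rightarrow 1)$, and $1)\Rightarrow 3)$, treating the first two implications as essentially formal and reserving the real work for $1)\Rightarrow 3)$. The guiding principle is that a metric operator $G$ is nothing but a change of inner product implemented by the bounded invertible map $G^{\frac{1}{2}}$, which is a unitary from $(\mathbb{H},\langle\cdot,\cdot\rangle_G)$ onto $(\mathbb{H},\langle\cdot,\cdot\rangle)$; conjugation by $G^{\frac{1}{2}}$ should convert every $G$-metric statement into an ordinary Hilbert-space statement, and conversely.

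For $2)\Leftrightarrow 3)$ I would use that $\langle u,v\rangle_G=\langle G^{\frac{1}{2}}u,G^{\frac{1}{2}}v\rangle$, so $G^{\frac{1}{2}}$ intertwines $T$ acting on $(\mathbb{H},\langle\cdot,\cdot\rangle_G)$ with $S:=G^{\frac{1}{2}}TG^{-\frac{1}{2}}$ acting on $(\mathbb{H},\langle\cdot,\cdot\rangle)$, including the identification $\mathcal{D}(S)=G^{\frac{1}{2}}\mathcal{D}(T)$. Since unitary equivalence preserves self-adjointness together with the requisite equality of domains, $T$ is self-adjoint for $\langle\cdot,\cdot\rangle_G$ exactly when $S$ is self-adjoint for $\langle\cdot,\cdot\rangle$. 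For $3)\Rightarrow 1)$, starting from a self-adjoint $S=G^{\frac{1}{2}}TG^{-\frac{1}{2}}$ with its orthogonal spectral resolution $\mathbf{E}_S$ supported on $\sigma(S)\subset\mathbb{R}$, I would set $\mathbf{E}(A)=G^{-\frac{1}{2}}\mathbf{E}_S(A)G^{\frac{1}{2}}$ and check directly that $\mathbf{E}$ is a regular, uniformly bounded spectral measure: idempotency and the Boolean relations survive conjugation by the bounded invertible $G^{\pm\frac{1}{2}}$, strong countable additivity is preserved, and the domain, commutation and spectral-containment clauses of Definition \ref{df:SpecOp} transfer because $\mathcal{D}(T)=G^{-\frac{1}{2}}\mathcal{D}(S)$ and $T|\mathbf{E}(B)\mathbb{H}$ is similar to $S|\mathbf{E}_S(B)\mathbb{H}$. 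Since $T_n=\int_{|z|\le n}z\,\mathbf{E}(dz)=G^{-\frac{1}{2}}S_nG^{\frac{1}{2}}$ converges to $T$ on $\mathcal{D}(T)$, the operator $T$ is of scalar type with $\sigma(T)=\sigma(S)\subset\mathbb{R}$, i.e.\ para-Hermitian.

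The substantive direction is $1)\Rightarrow 3)$, where $T$ is given only as a scalar-type spectral operator with resolution $\mathbf{E}$ and $\sigma(T)\subset\mathbb{R}$, and I must manufacture the metric operator. The key observation is that $\{\mathbf{E}(A):A\in\mathcal{B}_\mathbb{C}\}$ is a uniformly bounded Boolean algebra of (in general non-orthogonal) projections, so I would reduce everything to the assertion that such an algebra is similar to a Boolean algebra of orthogonal projections: there is a bounded invertible $S_0$ with $S_0\mathbf{E}(A)S_0^{-1}$ self-adjoint for every $A$. Granting this, $G:=S_0^*S_0$ is a metric operator satisfying $\mathbf{E}(A)^*G=G\mathbf{E}(A)$, whence each $G^{\frac{1}{2}}\mathbf{E}(A)G^{-\frac{1}{2}}$ is an orthogonal projection; thus $\mathbf{E}_G(\cdot):=G^{\frac{1}{2}}\mathbf{E}(\cdot)G^{-\frac{1}{2}}$ is an orthogonal spectral measure on $\mathbb{R}$, and by the spectral theorem $G^{\frac{1}{2}}TG^{-\frac{1}{2}}=\int_{\mathbb{R}}\lambda\,\mathbf{E}_G(d\lambda)$ is self-adjoint, which is $3)$.

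The main obstacle is precisely the existence of this orthogonalizing similarity $S_0$, equivalently of the metric operator $G$. In the bounded case it is already available from Wermer's theorem quoted in Remark \ref{rk:ScalarTypeOp}: one gets a bounded self-adjoint invertible $K$ with $KTK^{-1}$ normal, hence self-adjoint since the spectrum is real, and then $G=K^2$ works because $KTK^{-1}=(KTK^{-1})^*=K^{-1}T^*K$ rearranges to $GT=T^*G$. For a general, possibly unbounded, $T$ I would instead symmetrize the spectral measure itself, which does not feel the unboundedness of $T$: the uniform bound $\sup_A\|\mathbf{E}(A)\|<\infty$ permits averaging the sesquilinear forms $(u,v)\mapsto\langle\mathbf{E}(A)u,\mathbf{E}(A)v\rangle$ against an invariant mean on the abelian (hence amenable) algebra generated by $\mathbf{E}$, producing a bounded strictly positive $G$ with bounded inverse for which every $\mathbf{E}(A)$ is $G$-self-adjoint; this is the classical symmetrization of uniformly bounded Boolean algebras of projections (Dunford--Schwartz). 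Verifying that this averaging converges to a genuine metric operator and simultaneously orthogonalizes every $\mathbf{E}(A)$ is the crux of the proof, while everything else is bookkeeping transported across the bounded invertible map $G^{\frac{1}{2}}$.
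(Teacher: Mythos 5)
Your proposal is correct, but note at the outset that the paper itself contains no proof of this proposition: it is quoted directly from Antoine and Trapani \cite[Proposition 3.12]{AT2014}, so the only in-paper material to compare against is the author's treatment of the parallel bounded statements, Proposition \ref{prop:paraUop} and Theorem \ref{th:StoneTh}. Your argument is essentially the canonical similarity/symmetrization proof underlying all of these. The implications $2)\Leftrightarrow 3)$ and $3)\Rightarrow 1)$ are, as you say, bookkeeping: conjugation by $G^{\pm\frac{1}{2}}$ transports the orthogonal resolution $\mathbf{E}_S$ to a regular, uniformly bounded spectral measure for $T,$ carries $S_n$ to $T_n$ so that the scalar-type domain characterization transfers, and preserves the spectrum. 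In $1)\Rightarrow 3),$ the reduction to orthogonalizing the uniformly bounded Boolean algebra $\{\mathbf{E}(A)\}$ is the right (and the classical) move, and your verification that $G=S_0^*S_0$ works --- $G\mathbf{E}(A)=\mathbf{E}(A)^*G,$ hence each $G^{\frac{1}{2}}\mathbf{E}(A)G^{-\frac{1}{2}}$ is an orthogonal projection --- is sound; you also implicitly use that the resolution of a spectral operator is supported on its spectrum, so that $\mathbf{E}_G$ is a spectral measure concentrated on $\mathbb{R},$ which deserves a citation (cf. \cite[Theorem XVIII.2.11]{DS1971}).

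The one repair needed is in your averaging step. An invariant mean lives on a (semi)group, not on the ``abelian algebra generated by $\mathbf{E}$,'' and the projections $\mathbf{E}(A)$ are not invertible, so they cannot themselves be the objects averaged over. Instead, pass to the involutions $U_A=2\mathbf{E}(A)-I,$ which satisfy $U_AU_B=I-2\mathbf{E}(A\,\Delta\,B),$ so that $\{\pm U_A: A\in\mathcal{B}_\mathbb{C}\}$ is a uniformly bounded abelian group; then \cite[Lemma XV.6.1]{DS1971} --- the very lemma the paper invokes to prove Theorem \ref{th:StoneTh} --- yields a bounded invertible $S_0$ with every $S_0U_AS_0^{-1}$ unitary, hence self-adjoint (a unitary involution equals its own inverse and so its own adjoint), and this is exactly your orthogonalizing similarity. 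With that substitution the proof is complete; your bounded-case shortcut via Wermer's theorem (taking $G=K^2$ once $KTK^{-1}$ is normal with real spectrum, hence self-adjoint) is also correct and mirrors the paper's own proof of Proposition \ref{prop:paraUop}.
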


\begin{remark}\label{rk:MetricOp}\rm
The metric operator $G$ associated with a para-Hermitian operator $T$ in the above proposition is dependent on $T$ itself, and needs not to be unique in general (see Example \ref{ex:NHObM} below). We denote by $\mathcal{M} (T)$ the set of all metric operator $G$ associated with a para-Hermitian operator $T.$ Evidently, for any Hermitian operator $T$ the identity operator $I \in \mathcal{M} (T)$ but $I$ is not necessarily a unique metric operator associated with a Hermitian operator.
\end{remark}

There are more general concepts than para-Hermitian operators commonly found in the literature, such as {\it quasi-Hermitian} and {\it pseudo-Hermitian} operators (cf. \cite{AT2014, Mosta2010}), we include their definitions here for the sake of convenience.

\begin{definition}\label{df:quasipseudoHermiOp}
Let $T$ be a densely defined closed operator in $\mathbb{H}.$
\begin{enumerate}[\rm 1)]

\item $T$ is called a quasi-Hermitian operator, if there exists a bounded and strictly positive self-adjoint operator $G$ such that
\be
G T = T^* G.
\ee

\item $T$ is called a pseudo-Hermitian operator, if there exists a bounded self-adjoint operator $\eta$ with bounded inverse $\eta^{-1},$ such that
\be
T^* = \eta T \eta^{-1}.
\ee

\end{enumerate}
\end{definition}

\begin{remark}\label{rk:QuasiPseudoOp}\rm
By definition, a quasi-Hermitian operator is para-Hermitian if the operator $G$ has bounded inverse $G^{-1},$ while a pseudo-Hermitian operator is para-Hermitian if the operator $\eta$ is a positive operator. Note that, the definitions of quasi-Hermitian and pseudo-Hermitian operators have been respectively adapted to the cases of a unbounded metric operator $G$ (cf. \cite{AT2014}) and a unbounded self-adjoint operator $\eta$ (cf. \cite{Mosta2013}).
\end{remark}


\begin{definition}\label{df:FunctCalculusparaHop}{\rm (cf. \cite[Definition XVIII.2.10]{DS1971})}\;
Let $T$ be a spectral operator of scalar type with the spectral resolution $\mathbf{E}$ on $\mathcal{B}_\mathbb{C}.$  For any $\mathcal{B}_\mathbb{C}$-measurable function $f,$ we define $f (T)$ by
\be
f(T) x = \lim_n T(f_n) x,\quad \forall x \in \mathcal{D} (f(T)),
\ee
where
\be\begin{split}
\mathcal{D} (f(T)) =& \{x \in \mathbb{H}: \lim_n T(f_n) x\;\text{exists}\},\\
T(f_n) = & \int_\mathbb{C} f_n (z) \mathbb{E} (d z),
\end{split}\ee
and
\be
f_n (z) = \left \{\begin{split} & f(z),\quad |f(z)| \le n,\\
& 0, \quad |f(z)| >0.
\end{split}\right.
\ee
\end{definition}

\begin{remark}\label{rk:FunctCalculusparaHop}\rm
It is shown in \cite[Theorem XVIII.2.17]{DS1971} that $f(T)$ in the above definition is a spectral operator of scalar type with the spectral resolution $\mathbf{E}_f (E) = \mathbf{E} (f^{-1} (E))$ for any $E \in \mathcal{B}_\mathbb{C}.$
\end{remark}

Thus, we have the well-defined functional calculus for para-Hermitian operators, which plays a role in the dynamics of non-Hermitian quantum mechanics as called the Stone-type theorem in the sequel. However, there seems no such functional calculus for either {\it quasi-Hermitian} or {\it pseudo-Hermitian} operators. Mathematically, this is the reason why we use para-Hermitian operators representing the observable beyond quasi-Hermitian and pseudo-Hermitian operators.

\subsection{A Stone-type theorem}\label{Pre:Stone}

At first, we need to introduce the notion of a {\it para-unitary operator}, corresponding to the one of a para-Hermitian operator.

\begin{definition}\label{df:paraUOp}
A bounded spectral operator $U$ of scalar type is said to be para-unitary if $\sigma (U) \subset \mathbb{T},$ namely $|\lambda| =1$ for all $\lambda \in \sigma (U).$
\end{definition}

We denote by $\tilde{\mathcal{U}} (\mathbb{H})$ the set of all para-unitary operators in $\mathbb{H}.$ Thus, $\mathcal{U} (\mathbb{H}) \subset \tilde{\mathcal{U}} (\mathbb{H}).$

\begin{proposition}\label{prop:paraUop}
A bounded spectral operator $U$ of scalar type in $\mathbb{H}$ is para-unitary if and only if there exists a metric operator $G$ such that $G^\frac{1}{2} U G^{-\frac{1}{2}}$ is unitary.
\end{proposition}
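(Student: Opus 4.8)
The plan is to exploit the parallel structure with Proposition \ref{prop:paraHop} (the Antoine--Trapani characterization), and to use the functional-calculus machinery already established for spectral operators of scalar type. The key observation is that para-unitarity, like para-Hermiticity, is a statement purely about the spectrum and the scalar-type structure, and the metric operator $G$ should be extracted from the spectral resolution of $U$ rather than constructed by hand.

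First I would prove the easy direction. Suppose there is a metric operator $G$ such that $V := G^{\frac{1}{2}} U G^{-\frac{1}{2}}$ is unitary. Then $U = G^{-\frac{1}{2}} V G^{\frac{1}{2}}$ is similar to a normal operator via the bounded, boundedly invertible operator $G^{-\frac{1}{2}}$, hence $U$ is a bounded spectral operator of scalar type (by the Wermer-type equivalence recalled in Remark \ref{rk:ScalarTypeOp}), and similarity preserves the spectrum, so $\sigma(U) = \sigma(V) \subset \mathbb{T}$. Thus $U$ is para-unitary by Definition \ref{df:paraUOp}.

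For the converse, suppose $U \in \tilde{\mathcal{U}}(\mathbb{H})$ with spectral resolution $\mathbf{E}$ supported on $\mathbb{T}$. The natural strategy is to apply the functional calculus of Definition \ref{df:FunctCalculusparaHop} to reduce to the para-Hermitian case, and then invoke Proposition \ref{prop:paraHop}. Concretely, since $\sigma(U) \subset \mathbb{T} \setminus \{-1\}$ need not hold, I would instead argue directly: because $U$ is a bounded scalar-type operator, Wermer's theorem (Remark \ref{rk:ScalarTypeOp}) supplies a bounded self-adjoint $K$ with bounded inverse such that $N := K U K^{-1}$ is normal, and $\sigma(N) = \sigma(U) \subset \mathbb{T}$ forces $N$ to be unitary. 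Setting $G := K^* K = K^2$ (a metric operator, being bounded, strictly positive, self-adjoint with bounded inverse), one checks that $G^{\frac{1}{2}} U G^{-\frac{1}{2}}$ is unitarily equivalent to $N$ and hence unitary; the verification uses that $G^{\frac{1}{2}}$ and $K$ differ by a unitary factor via polar decomposition.

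The main obstacle I anticipate is the passage from the existence of \emph{some} similarity $K$ to a similarity implemented by the square root $G^{\frac{1}{2}}$ of a genuine metric operator: Wermer's theorem gives an intertwiner $K$ that is self-adjoint but a priori has no positivity, whereas the proposition demands $G^{\frac{1}{2}} U G^{-\frac{1}{2}}$ unitary with $G>0$. I would resolve this exactly as in the proof of Proposition \ref{prop:paraHop}, by building $G$ from the spectral resolution of $U$ (or from $K^*K$) so that positivity and invertibility are automatic, and then showing that conjugation by $G^{\frac{1}{2}}$ symmetrizes the spectral projections $\mathbf{E}(\cdot)$ into orthogonal projections, which is precisely the condition that renders the scalar-type operator with unimodular spectrum unitary. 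The bounded-below and bounded-above estimates on $\mathbf{E}$ coming from boundedness of the spectral measure guarantee $G$ is a metric operator.
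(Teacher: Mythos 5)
Your proposal is correct and follows essentially the same route as the paper: Wermer's theorem yields a bounded self-adjoint similarity $K$ with $KUK^{-1}$ normal, the unimodular spectrum forces this normal operator to be unitary, and the polar decomposition $K = V G^{\frac{1}{2}}$ with $G = K^*K = |K|^2$ produces the metric operator, while the easy direction conjugates the spectral resolution by $G^{-\frac{1}{2}}$ and uses invariance of the spectrum under similarity. The only cosmetic deviations are your (rightly abandoned) detour through a logarithm reduction and the closing remark about estimates on $\mathbf{E}$, which is unneeded since $G = K^*K$ is automatically a metric operator because $K$ is bounded with bounded inverse.
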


\begin{proof}
Suppose that $U$ is a para-unitary operator. By a theorem of Wermer (cf. \cite[Theorem XV.6.4]{DS1971}), there exists a bounded self-adjoint operator $K$ with bounded inverse $K^{-1}$ such that $K U K^{-1}$ is normal. Since $\sigma (U) \subset \mathbb{T},$ then $\sigma (K U K^{-1}) \subset \mathbb{T}$ and so $K U K^{-1}$ is unitary with the spectral decomposition
\be
K U K^{-1} = \int_\mathbb{T} \lambda \mathbf{E} (d \lambda),
\ee
where $\mathbf{E}$ is a self-adjoint spectral resolution. Putting $G = |K|^2,$ by the polar decomposition we have $K = V G^\frac{1}{2}$ with $V$ unitary such that
\be
G^\frac{1}{2} U G^{-\frac{1}{2}} = \int_\mathbb{T} \lambda \mathbf{F} (d \lambda),
\ee
where $\mathbf{F} (\cdot) = V^{-1}\mathbf{E} (\cdot) V$ is a self-adjoint spectral resolution. Thus, $G^\frac{1}{2} U G^{-\frac{1}{2}}$ is unitary.

Conversely, if there exists a bounded and strictly positive self-adjoint operator $G$ with bounded inverse $G^{-1},$ such that $G^\frac{1}{2} U G^{-\frac{1}{2}}$ is unitary, then
\be
U = \int_\mathbb{T} \lambda G^{-\frac{1}{2}} \mathbf{E} G^\frac{1}{2}(d \lambda)
\ee
where $\mathbf{E}$ is a self-adjoint spectral resolution. Clearly, $\mathbf{F} (\cdot) = G^{-\frac{1}{2}}\mathbf{E} (\cdot) G^\frac{1}{2}$ is a spectral resolution for $U$ and $\sigma (U) \subset \mathbb{T}.$ Hence, $U$ is a para-unitary operator. This completes the proof.
\end{proof}

The following proposition shows the relationship between para-Hermitian and para-unitary operators through function calculus about spectral operators of scalar type.

\begin{proposition}\label{prop:paraHUop}
Let $H$ be a para-Hermitian operator. If $f(z) = e^{\mathrm{i}z},$ then $f(H)$ is a para-unitary operator, denoted by $e^{\mathrm{i} H}.$ 
\end{proposition}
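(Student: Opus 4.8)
The plan is to reduce the assertion to the self-adjoint case through a metric operator and then transport unitarity back by similarity, invoking Proposition \ref{prop:paraUop}. Since $H$ is para-Hermitian, Proposition \ref{prop:paraHop} supplies a metric operator $G \in \mathcal{M}(H)$ such that $A := G^{\frac{1}{2}} H G^{-\frac{1}{2}}$ is self-adjoint. Writing $\mathbf{E}$ for the spectral resolution of $H$, I would check that $\mathbf{E}_A(\cdot) := G^{\frac{1}{2}} \mathbf{E}(\cdot) G^{-\frac{1}{2}}$ is a (now self-adjoint) spectral resolution for $A$: each $\mathbf{E}_A(B)$ is idempotent, and the self-adjointness of $A$ forces these projections to be orthogonal. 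Since $G^{\frac{1}{2}}$ is bounded with bounded inverse, $\sigma(A) = \sigma(H) \subset \mathbb{R}$, consistent with the self-adjointness of $A$.

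Next I would feed $f(z) = e^{\mathrm{i} z}$ into the functional calculus of Definition \ref{df:FunctCalculusparaHop}. By Remark \ref{rk:FunctCalculusparaHop}, $f(H) = e^{\mathrm{i} H}$ is again a spectral operator of scalar type, with spectral resolution $E \mapsto \mathbf{E}(f^{-1}(E))$. To see that it is bounded, I would use that the resolution $\mathbf{E}$ of a spectral operator is concentrated on $\sigma(H)$, so that $\mathbf{E}(\sigma(H)) = I$; since $\sigma(H) \subset \mathbb{R}$ and $|e^{\mathrm{i} z}| = 1$ for real $z$, the integrand is bounded by $1$ on the support of $\mathbf{E}$, and the norm estimate with constant $C_{\mathbf{E}}$ recalled after Definition \ref{df:SpecMeasure} gives $e^{\mathrm{i} H} \in \mathcal{B}(\mathbb{H})$. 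The key algebraic step is that the functional calculus intertwines with the similarity: because $\mathbf{E}_A = G^{\frac{1}{2}} \mathbf{E} G^{-\frac{1}{2}}$, integrating $e^{\mathrm{i} z}$ against both resolutions yields $e^{\mathrm{i} A} = G^{\frac{1}{2}} e^{\mathrm{i} H} G^{-\frac{1}{2}}$.

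Finally, since $A$ is self-adjoint, the ordinary spectral theorem makes $e^{\mathrm{i} A}$ unitary. Thus $e^{\mathrm{i} H}$ is a bounded spectral operator of scalar type for which $G^{\frac{1}{2}} e^{\mathrm{i} H} G^{-\frac{1}{2}}$ is unitary, and Proposition \ref{prop:paraUop} immediately yields that $e^{\mathrm{i} H}$ is para-unitary; alternatively, one reads off $\sigma(e^{\mathrm{i} H}) \subset f(\sigma(H)) \subset \mathbb{T}$ directly from the support of its spectral resolution. I expect the main obstacle to be the rigorous justification of the intertwining identity $e^{\mathrm{i} A} = G^{\frac{1}{2}} e^{\mathrm{i} H} G^{-\frac{1}{2}}$ at the level of the limiting definition of the functional calculus in Definition \ref{df:FunctCalculusparaHop}, namely matching the truncated integrals and their domains under conjugation by $G^{\frac{1}{2}}$; once that commutation is secured, the remainder is bookkeeping. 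A secondary point to handle carefully is confirming that $\mathbf{E}$ is genuinely supported on $\sigma(H) \subset \mathbb{R}$, which is precisely what makes $e^{\mathrm{i} H}$ bounded rather than merely densely defined.
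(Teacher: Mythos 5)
Your proof is correct, but it follows a genuinely different route from the paper's. The paper disposes of the proposition in two lines by staying entirely inside the Dunford--Schwartz calculus for spectral operators: by the spectral mapping theorem for scalar-type operators (\cite[Theorem XVIII.2.21]{DS1971}), $\sigma(e^{\mathrm{i}H}) = \overline{f(\sigma(H))} \subset \mathbb{T}$, and by \cite[Theorem XVIII.2.11(c) and Theorem XVIII.2.17]{DS1971} the operator $e^{\mathrm{i}H}$ is a bounded spectral operator of scalar type; para-unitarity is then immediate from Definition \ref{df:paraUOp}, with no metric operator appearing at all. You instead reduce to the Hermitian case by conjugating with $G^{\frac{1}{2}}$, intertwine the two functional calculi to get $e^{\mathrm{i}A} = G^{\frac{1}{2}} e^{\mathrm{i}H} G^{-\frac{1}{2}}$, obtain unitarity of $e^{\mathrm{i}A}$ from the classical spectral theorem, and finish with Proposition \ref{prop:paraUop}. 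This is sound, and it is in fact exactly the technique the paper itself deploys later (in parts 3)--5) of the proof of Proposition \ref{prop:paraUgro} and in Theorem \ref{th:StoneTh}), where the phrase ``by the uniqueness of spectral resolution and functional calculus'' does the same work as your intertwining step; to make your claim that self-adjointness of $A$ forces the projections $\mathbf{E}_A(B)$ to be orthogonal fully airtight, you should invoke that uniqueness explicitly (\cite[Theorem XVIII.2.5]{DS1971}, quoted after Definition \ref{df:SpecOp}): $A$, being self-adjoint, is a scalar-type spectral operator whose resolution is orthogonal, and uniqueness forces $G^{\frac{1}{2}}\mathbf{E}(\cdot)G^{-\frac{1}{2}}$ to coincide with it. What the paper's route buys is brevity and freedom from any domain or intertwining bookkeeping; what yours buys is an explicit witness --- the same metric operator $G$ that Hermitizes $H$ also Hermitizes $e^{\mathrm{i}H}$ --- which is precisely the refinement the paper needs later anyway. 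One cosmetic slip in your closing alternative: the spectral mapping theorem gives $\sigma(e^{\mathrm{i}H}) \subset \overline{f(\sigma(H))}$ (the closure matters when $\sigma(H)$ is unbounded), though this is harmless here since $\mathbb{T}$ is closed.
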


\begin{proof}
Let $H$ be a para-Hermitian operator and $f(z) = e^{\mathrm{i}z}.$ By \cite[Theorem XVIII.2.21]{DS1971}, $\sigma (f(H)) = \overline{f(\sigma(H))} \subset \mathbb{T}.$ Therefore, by \cite[Theorem XVIII.2.11(c) and Theorem XVIII.2.17]{DS1971}, $e^{\mathrm{i}H}$ is a para-unitary operator.
\end{proof}

In what follows, we will prove Stone's theorem for the one-parameter group of para-unitary operators, which is fundamental for the dynamics of non-Hermitian quantum mechanics.

\begin{proposition}\label{prop:paraUgro}
Let $H$ be a para-Hermitian operator and define $U(t) = e^{\mathrm{i} t H}$ for every $t \in \mathbb{R}.$ Then
\begin{enumerate}[\rm 1)]

\item For each $t \in \mathbb{R},$ $U(t)$ is a para-unitary operator and
\be
U(t+s) = U(t) U(s),\quad \forall t,s \in \mathbb{R}.
\ee

\item For each $x \in \mathbb{H},$ $\lim_{t \to t_0} U(t) x = U(t_0) x$ in $\mathbb{H},$ that is, $t \mapsto U(t)$ is strongly continuous.

\item $U(0) = I$ and $\{U(t): t \in \mathbb{R} \}$ is a bounded Abelian group of para-unitary operators such that $U(t)^{-1} = U(-t)$ for every $t \in \mathbb{R}.$

\item For every $x \in \mathcal{D} (H),$
\be
\lim_{t \to 0} \frac{U(t) x - x}{t} = \mathrm{i} H x.
\ee

\item If $x \in \mathbb{H}$ such that $\lim_{t \to 0} (U(t) x - x)/t$ exists, then $x \in \mathcal{D} (H).$

\end{enumerate}
\end{proposition}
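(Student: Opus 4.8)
The plan is to reduce everything to the ordinary Stone theorem for unitary groups by conjugating with a fixed metric operator $G \in \M(H)$, then transfer each conclusion back across the similarity. Since $H$ is para-Hermitian, Proposition \ref{prop:paraHop} furnishes a metric operator $G$ such that $\wt{H} := G^{\frac{1}{2}} H G^{-\frac{1}{2}}$ is self-adjoint. By the functional calculus of Definition \ref{df:FunctCalculusparaHop} together with Remark \ref{rk:FunctCalculusparaHop}, conjugation by $G^{\frac{1}{2}}$ commutes with forming $f(H)$, so that $G^{\frac{1}{2}} U(t) G^{-\frac{1}{2}} = e^{\mathrm{i} t \wt{H}} =: \wt{U}(t)$, where the right-hand side is the genuine unitary group generated by the self-adjoint operator $\wt{H}$. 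This identity is the workhorse: every statement about $U(t)$ becomes a statement about the honest unitary group $\wt{U}(t)$ sandwiched between the bounded operators $G^{-\frac{1}{2}}$ and $G^{\frac{1}{2}}$.

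\emph{First} I would establish the conjugation identity rigorously, checking that the spectral resolution $\mathbf{F}(\cdot) = G^{-\frac{1}{2}} \mathbf{E}(\cdot) G^{\frac{1}{2}}$ for $H$ (as in the proof of Proposition \ref{prop:paraUop}) pushes forward correctly under $z \mapsto e^{\mathrm{i}tz}$. Then item (1) follows immediately: Proposition \ref{prop:paraHUop} gives that $U(t) = e^{\mathrm{i}tH}$ is para-unitary, and the group law $U(t+s) = U(t)U(s)$ transfers from the identity $\wt{U}(t+s) = \wt{U}(t)\wt{U}(s)$ by conjugating back, since $G^{\frac{1}{2}} U(t+s) G^{-\frac{1}{2}} = \wt{U}(t)\wt{U}(s) = G^{\frac{1}{2}} U(t) U(s) G^{-\frac{1}{2}}$. \emph{For item (2)}, strong continuity, I would write $U(t)x - U(t_0)x = G^{-\frac{1}{2}}\bigl(\wt{U}(t) - \wt{U}(t_0)\bigr) G^{\frac{1}{2}} x$ and invoke strong continuity of the unitary group $\wt{U}$ together with boundedness of $G^{\pm\frac{1}{2}}$. \emph{Item (3)} is bookkeeping: $U(0) = e^0 = I$ from the functional calculus, the group is Abelian because each $U(t)$ is a function of the single operator $H$, boundedness is uniform via $\sup_t \|U(t)\| \le \|G^{-\frac{1}{2}}\| \, \|G^{\frac{1}{2}}\|$ since each $\wt{U}(t)$ is unitary, and $U(t)^{-1} = U(-t)$ is the group law at $s = -t$.

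\emph{The analytic heart} is items (4) and (5), where I must handle domains carefully. For (4), I would show the domain $\D(H)$ corresponds to $\D(\wt{H})$ under $G^{\frac{1}{2}}$: for $x \in \D(H)$ we have $G^{\frac{1}{2}} x \in \D(\wt{H})$, so Stone's theorem for $\wt{U}$ gives $\lim_{t\to 0}(\wt{U}(t) - I)(G^{\frac{1}{2}}x)/t = \mathrm{i}\wt{H} G^{\frac{1}{2}} x = \mathrm{i} G^{\frac{1}{2}} H x$; applying $G^{-\frac{1}{2}}$ and using its boundedness yields $\lim_{t\to 0}(U(t)x - x)/t = \mathrm{i}Hx$. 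Item (5) is the converse inclusion and is the step I expect to be the main obstacle, because it requires the difference-quotient characterization of the generator's domain to survive conjugation by an unbounded-in-spirit similarity. The clean route is: if $(U(t)x-x)/t$ converges as $t\to 0$, then $(\wt{U}(t)-I)(G^{\frac{1}{2}}x)/t = G^{\frac{1}{2}}(U(t)x-x)/t$ converges (again by boundedness of $G^{\frac{1}{2}}$), so by Stone's theorem $G^{\frac{1}{2}}x \in \D(\wt{H})$, whence $x \in G^{-\frac{1}{2}}\D(\wt{H}) = \D(H)$. The only subtlety to verify is precisely this equality $\D(H) = G^{-\frac{1}{2}}\D(\wt{H})$, which must be read off from the scalar-type functional calculus defining $\D(H)$ as $\{x : \lim_n T_n x \text{ exists}\}$ in Definition \ref{df:SpecOpScalar} and its intertwining with $\wt{H}$.
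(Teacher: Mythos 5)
Your proposal is correct, and for items 1), 3), 4) and 5) it is essentially the paper's own argument: the paper likewise fixes $G$ via Proposition \ref{prop:paraHop}, establishes $G^{\frac{1}{2}} U(t) G^{-\frac{1}{2}} = e^{\mathrm{i} t G^{\frac{1}{2}} H G^{-\frac{1}{2}}}$ by uniqueness of the spectral resolution and the functional calculus, bounds $\|U(t)\| \le \|G^{-\frac{1}{2}}\| \|G^{\frac{1}{2}}\|$, and settles 4) and 5) by citing Stone's theorem for the conjugated unitary group (\cite[Theorem VIII.7 (c) and (d)]{RS1980I}) together with precisely your domain identity, namely $x \in \mathcal{D}(H)$ if and only if $G^{\frac{1}{2}} x \in \mathcal{D}(G^{\frac{1}{2}} H G^{-\frac{1}{2}})$; as you suspect, that identity follows painlessly from Definition \ref{df:SpecOpScalar}, since the spectral resolution of $\widetilde{H} = G^{\frac{1}{2}} H G^{-\frac{1}{2}}$ is $G^{\frac{1}{2}} \mathbf{E}(\cdot) G^{-\frac{1}{2}}$ and $G^{\pm\frac{1}{2}}$ are bounded, so it is not a real obstacle. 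The one genuine divergence is item 2): you deduce strong continuity by conjugating the strongly continuous unitary group $\widetilde{U}(t)$ through the bounded similarity, whereas the paper proves it directly from the spectral resolution of $H$ itself, expanding $\|e^{\mathrm{i}tH}x - x\|^2$ against the dual spectral measure $\mathbf{E}^*$, extending $\mathbf{F}(A \times B) = \mathbf{E}^*(A)\mathbf{E}(B)$ to a bounded operator-valued measure on $\mathcal{B}_{\mathbb{R}^2}$, and applying dominated convergence with the bound $|(e^{\mathrm{i}t\lambda}-1)(e^{\mathrm{i}t\gamma}-1)| \le 4$, before upgrading continuity at $t=0$ to all $t$ via the group law. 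Your route is valid and more economical, since the conjugation identity is needed for 3)--5) anyway; what the paper's direct argument buys is a proof of continuity intrinsic to the spectral resolution of $H$, independent of the choice of metric operator $G \in \mathcal{M}(H)$ (which, per Remark \ref{rk:MetricOp}, is not unique), and it exhibits the technique of handling non-self-adjoint spectral measures via $\mathbf{E}^*$, which has independent interest in this non-Hermitian setting.
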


\begin{proof}
1) follows immediately from Proposition \ref{prop:paraHUop} and the functional calculus for the complex-valued function $e^{\mathrm{i} t z}.$ To prove 2) note that
\be
\|e^{\mathrm{i} t H} x - x \|^2 = \int_\mathbb{R} \int_\mathbb{R} (e^{\mathrm{i} t \lambda} -1) (e^{\mathrm{i} t \gamma} -1) \langle \mathbf{E}^* (d \gamma) \mathbf{E} (d \lambda)x, x \rangle,
\ee
where $\mathbf{E}$ is the spectral resolution of $H$ and $\mathbf{E}^*$ is the dual of $\mathbf{E}$ (cf. Remark \ref{rk:DualSpecMeasure}). Define $\mathbf{F} (A \times B) = \mathbf{E}^* (A) \mathbf{E} (B)$ for any $A,B \in \mathcal{B}_\mathbb{R}$ (the $\sigma$-algebra of Borel sets in $\mathbb{R}$). Then $\mathbf{F}$ extends to a bounded operator-valued measure on $\mathcal{B}_{\mathbb{R}^2}$ (the $\sigma$-algebra of Borel sets in $\mathbb{R}^2$) such that
\be
\|e^{\mathrm{i} t H} x - x \|^2 = \int_{\mathbb{R}^2} (e^{\mathrm{i} t \lambda} -1) (e^{\mathrm{i} t \gamma} -1) \langle \mathbf{F} (d \gamma \times d \lambda)x, x \rangle.
\ee
Note that $\mu_x (K) = \langle \mathbf{F} (K)x, x \rangle$ is a complex measure on $\mathcal{B}_{\mathbb{R}^2}.$ Since $|(e^{\mathrm{i} t \lambda} -1) (e^{\mathrm{i} t \gamma} -1)| \le 4$ and
\be
\|e^{\mathrm{i} t H} x - x \|^2 \le \int_{\mathbb{R}^2} |(e^{\mathrm{i} t \lambda} -1) (e^{\mathrm{i} t \gamma} -1)| |\mu_x| (d \gamma \times d \lambda),
\ee
we conclude that $\lim_{t \to 0} \|e^{\mathrm{i} t H} x - x \|^2 =0$ by the Lebesgue dominated convergence theorem. Thus $t \mapsto U(t)$ is strongly continuous at $t=0,$ which implies by the group property that $t \mapsto U(t)$ is strongly continuous at any $t \in \mathbb{R}.$

For 3), by Proposition \ref{prop:paraHop} there is a metric operator $G$ such that $G^\frac{1}{2} H G^{-\frac{1}{2}}$ is self-adjoint. By the uniqueness of spectral resolution and functional calculus, we conclude that $G^\frac{1}{2} U(t) G^{-\frac{1}{2}} = e^{\mathrm{i} t G^\frac{1}{2} H G^{-\frac{1}{2}}}$ is unitary for every $t \in \mathbb{R}.$ Thus,
\be
\| U(t) \| = \| G^{-\frac{1}{2}} e^{\mathrm{i} t G^\frac{1}{2} H G^{-\frac{1}{2}}} G^\frac{1}{2} \| \le \| G^{-\frac{1}{2}} \| \| G^\frac{1}{2} \|
\ee
for all $t \in \mathbb{R},$ i.e., $\{U(t): t \in \mathbb{R} \}$ is a bounded set of operators.

For 4) and 5), as above there is a metric operator $G$ such that $G^\frac{1}{2} U(t) G^{-\frac{1}{2}} = e^{\mathrm{i} t G^\frac{1}{2} H G^{-\frac{1}{2}}}$ is unitary for every $t \in \mathbb{R}.$ Thus, by \cite[Theorem VIII.7 (c) and (d)]{RS1980I} we obtain 4) and 5), since $x \in \mathcal{D} (H)$ if and only if $G^\frac{1}{2} x \in \mathcal{D} (G^\frac{1}{2} H G^{-\frac{1}{2}}).$
\end{proof}

\begin{remark}\label{rk:SkewUnitaryGroup}\rm
An operator-valued function $t \mapsto U(t)$ from $\mathbb{R}$ into $\tilde{\mathcal{U}} (\mathbb{H})$ satisfying $1)$ and $2)$ is called {\it a strongly continuous one-parameter para-unitary group}.
\end{remark}

The following theorem says that every strongly continuous bounded one-parameter para-unitary group arises as the exponential of a para-Hermitian operator, that is, Stone's theorem holds true in the para-Hermitian case.

\begin{theorem}\label{th:StoneTh}
Let $(U(t):\; t \in \mathbb{R})$ be a strongly continuous bounded one-parameter para-unitary group on a Hilbert space $\mathbb{H}.$ Then there is a para-Hermitian operator $H$ on $\mathbb{H}$ so that $U(t) = e^{\mathrm{i}t H}$ for every $t \in \mathbb{R}.$
\end{theorem}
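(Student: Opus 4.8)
The plan is to reduce the statement to the classical Stone theorem for unitary groups by producing a \emph{single} metric operator $G$ that simultaneously conjugates every $U(t)$ into a unitary operator. Proposition \ref{prop:paraUop} already supplies, for each fixed $t$, a metric operator making $G^{\frac{1}{2}}U(t)G^{-\frac{1}{2}}$ unitary, but such an operator a priori depends on $t$; the whole difficulty is to obtain one $G$ valid for the entire group. I expect this \emph{uniform unitarization} to be the main obstacle.

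First I would record that the group is uniformly bounded and uniformly invertible: writing $M = \sup_{t \in \mathbb{R}}\|U(t)\|$, which is finite by hypothesis, the identity $U(-t)U(t) = U(0) = I$ gives $\|x\| \le M\|U(t)x\|$ for all $t$ and $x$, so that $M^{-1}\|x\| \le \|U(t)x\| \le M\|x\|$. Next, exploiting the amenability of $\mathbb{R}$, I would fix a translation-invariant mean $m$ on $L^\infty(\mathbb{R})$ and define a new sesquilinear form by $\langle x,y\rangle_G = m_t\langle U(t)x, U(t)y\rangle$, the integrand being continuous in $t$ by strong continuity, hence bounded and measurable. The two-sided bound above yields $M^{-2}\|x\|^2 \le \langle x,x\rangle_G \le M^2\|x\|^2$, so this form is an inner product equivalent to the original one; by Riesz representation it is given by a bounded self-adjoint operator $G$ with $M^{-2}I \le G \le M^2 I$, that is, a metric operator. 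Translation invariance of $m$ together with the group law $U(t)U(s) = U(t+s)$ forces $\langle U(s)x, U(s)y\rangle_G = \langle x,y\rangle_G$, so each $U(s)$ is unitary for $\langle\cdot,\cdot\rangle_G$; equivalently $V(t) = G^{\frac{1}{2}}U(t)G^{-\frac{1}{2}}$ is unitary for every $t$.

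Then I would check that $(V(t):\, t\in\mathbb{R})$ is a strongly continuous one-parameter unitary group: the group law is inherited from $U$, while strong continuity follows from that of $U$ and the boundedness of $G^{\pm\frac{1}{2}}$. The classical Stone theorem (cf. \cite[Theorem VIII.8]{RS1980I}) now furnishes a self-adjoint operator $\tilde H$ with $V(t) = e^{\mathrm{i}t\tilde H}$ for all $t \in \mathbb{R}$.

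Finally I would set $H = G^{-\frac{1}{2}}\tilde H G^{\frac{1}{2}}$. By the third characterization in Proposition \ref{prop:paraHop} the operator $H$ is para-Hermitian, since $G^{\frac{1}{2}}HG^{-\frac{1}{2}} = \tilde H$ is self-adjoint; moreover $H$ is a spectral operator of scalar type with spectral resolution $\mathbf{F}(\cdot) = G^{-\frac{1}{2}}\mathbf{E}(\cdot)G^{\frac{1}{2}}$, where $\mathbf{E}$ is the self-adjoint resolution of $\tilde H$, exactly as in the converse computation of Proposition \ref{prop:paraUop}. Applying the functional calculus of Definition \ref{df:FunctCalculusparaHop} to $f(z) = e^{\mathrm{i}tz}$ and using that conjugation by the bounded operators $G^{\pm\frac{1}{2}}$ commutes with the integrals defining $f(H)$, one obtains $e^{\mathrm{i}tH} = G^{-\frac{1}{2}}e^{\mathrm{i}t\tilde H}G^{\frac{1}{2}} = G^{-\frac{1}{2}}V(t)G^{\frac{1}{2}} = U(t)$, as required. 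Beyond the unitarization step, the only points needing care are the equivalence of the two inner products and the compatibility of the scalar-type functional calculus with the similarity $G^{\pm\frac{1}{2}}$; both become routine once $G$ is in hand.
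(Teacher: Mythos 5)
Your proof is correct, and its skeleton coincides with the paper's: simultaneously unitarize the whole group by a single similarity, apply the classical Stone theorem to $V(t)=G^{\frac{1}{2}}U(t)G^{-\frac{1}{2}}$, then set $H=G^{-\frac{1}{2}}\tilde H G^{\frac{1}{2}}$ and invoke Proposition \ref{prop:paraHop} together with the compatibility of the scalar-type functional calculus with the similarity. The only real divergence is in how the uniform unitarization is obtained. The paper quotes \cite[Lemma XV.6.1]{DS1971} as a black box: it supplies a bounded self-adjoint $K$ with bounded inverse such that $KU(t)K^{-1}$ is unitary for \emph{all} $t$ at once, and then passes to $G=|K|^{2}$ via the polar decomposition. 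You instead build the metric operator from scratch by averaging $\langle U(t)x,U(t)y\rangle$ against a translation-invariant mean on $L^{\infty}(\mathbb{R})$ --- essentially the Sz.-Nagy--Dixmier--Day unitarizability argument for amenable groups, which is in substance how the cited Dunford--Schwartz lemma is proved. Your route is self-contained, produces a positive $G$ directly (no polar-decomposition detour), and correctly isolates the crux that the paper passes over silently: Proposition \ref{prop:paraUop} only yields a $t$-dependent metric, and the theorem genuinely requires one $G$ valid for the entire group. What the paper's citation buys is brevity. Two small points deserve a sentence in a final write-up: first, $U(0)=I$ is not part of the definition in Remark \ref{rk:SkewUnitaryGroup}, but it follows because $U(0)=U(0)^{2}$ is an idempotent which is invertible (a para-unitary operator has spectrum in $\mathbb{T}$, so $0$ lies in its resolvent set), and this is what legitimizes $U(-t)=U(t)^{-1}$ in your boundedness estimate; second, in the last step one should verify that $\mathbf{F}(\cdot)=G^{-\frac{1}{2}}\mathbf{E}(\cdot)G^{\frac{1}{2}}$ is a spectral measure and is the unique spectral resolution of $H$, so that the truncated integrals of Definition \ref{df:FunctCalculusparaHop} for $e^{\mathrm{i}tH}$ are exactly the $G^{\mp\frac{1}{2}}$-conjugates of those for $\tilde H$ --- which is what the paper compresses into the phrase ``by the uniqueness of spectral resolution and functional calculus.''
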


\begin{proof}
By \cite[Lemma XV.6.1]{DS1971}, there exists a bounded self-adjoint operator $K$ with bounded inverse $K^{-1}$ such that $K U(t) K^{-1}$ is unitary for every $t \in \mathbb{R}.$ By the polar decomposition of $K,$ $G^\frac{1}{2} U(t) G^{-\frac{1}{2}}$ is unitary for every $t \in \mathbb{R},$ where $G = |K|^2$ is a bounded and strictly positive self-adjoint operator with bounded inverse $G^{-1}.$ Since $(G^\frac{1}{2} U(t) G^{-\frac{1}{2}}: t \in \mathbb{R})$ is a strongly continuous one-parameter unitary group, by Stone's theorem (cf.\cite[Theorem VIII.8]{RS1980I}) there exists a self-adjoint operator $A$ such that $e^{\mathrm{i} t A} = G^\frac{1}{2} U(t) G^{-\frac{1}{2}}$ for every $t \in \mathbb{R}.$ By the uniqueness of spectral resolution and functional calculus, we conclude that
\be
U(t) = G^{-\frac{1}{2}} e^{\mathrm{i} t A} G^\frac{1}{2} = e^{\mathrm{i} t H},\quad \forall t \in \mathbb{R},
\ee
where $H = G^{-\frac{1}{2}} A  G^\frac{1}{2}$ is a para-Hermitian operator by Proposition \ref{prop:paraHop}.
\end{proof}

\begin{remark}\label{rk:paraUnitaryGroupInfGenerator}\rm
If $(U(t):\; t \in \mathbb{R})$ is a strongly continuous bounded one-parameter para-unitary group, then the para-Hermitian operator $H$ with $U(t)= e^{\mathrm{i}t H}$ ($\forall t \in \mathbb{R}$) is called the {\it infinitesimal generator} of $(U(t):\; t \in \mathbb{R}).$
\end{remark}

\begin{remark}\label{rk:paraUnitaryGroupPl}\rm
Concerning Theorem \ref{th:StoneTh}, a question arises: Whether does there exist a strongly continuous one-parameter para-unitary group on an infinite-dimensional Hilbert space $\mathbb{H}$ which is unbounded as a subset of $\mathcal{B} (\mathbb{H})$? At the time of this writing, we have no such example.
\end{remark}

\subsection{Evolution systems}\label{Pre:EvoSys}

Any two-parameter family of bounded operators $\{U(t,s) \in \mathcal{B} (\mathbb{H}): s,t \in [0, T]\}$ is said to be an {\it evolution system} (cf. \cite{SG2017}), if it satisfies the following conditions:
\begin{enumerate}[(i)]

\item $U(t,t) = I$ and $U(t, r) U (r, s) = U(t,s)$ for all $s,r,t \in [0, T];$

\item $(t,s) \mapsto U(t,s)$ is strongly continuous on $[0,T] \times [0, T].$

\end{enumerate}
Note that by (i), $U(t,s)$ are all bounded operators with bounded inverse and $U(t,s)^{-1} = U(s,t),$ namely $U(t,s) \in \mathcal{T} (\mathbb{H}).$ In some literatures, an evolution system is only assumed to satisfy (i) and (ii) on the triangle region $0\le s \le t \le T$ (cf. \cite{Pazy1983}). In this case, $U(t,s)$ need not to be invertible.

Let $\{A(t): t \in [0, T]\}$ be a family of densely defined closed operators in $\mathbb{H}$ with a property that there exists a dense subset $\mathbb{D}$ of $\mathbb{H}$ such that $\mathbb{D} \subset \mathcal{D} (A(t))$ for all $t \in [0,T].$ If a evolution system $\{U(t,s) \in \mathcal{T} (\mathbb{H}): s,t \in [0, T]\}$ satisfies the condition that $U(t,s) \mathbb{D} \subset \mathbb{D}$ for all $s,t \in[0,T],$ and for any $v \in \mathbb{D}$ and $s \in [0,T],$ the map $t \mapsto U(t,s) v$ is continuously differentiable in $[0, T]$ such that
\begin{equation}\label{eq:EvoSysEqu}
\frac{d}{d t} U(t,s)v = A(t) U(t,s)v, \quad \forall t \in [0,T],
\end{equation}
then it is called an evolution system for $A(t)$ on $\mathbb{D}$ (cf. \cite{SG2017}).

It is well known that if $A(t)$'s are all bounded operators such that $[0,T] \ni t \mapsto A(t) \in \mathcal{B}(\mathbb{H})$ is strongly continuous, then there exists a unique evolution system $\{U(t,s):t,s \in [0,T]\}$ such that the evolution equations
\begin{equation}\label{eq:EvoSysEquI}
\frac{d}{d t} U(t,s) = A(t) U(t,s),
\end{equation}
and
\begin{equation}\label{eq:EvoSysEquII}
\frac{d}{d t} U(s,t) = -U(s,t) A(t)
\end{equation}
hold in the strong topology of $\mathcal{B} (\mathbb{H})$ for any $s \in [0, T]$ (see \cite{Pazy1983} for the details).

We refer to \cite{NZ2009, Schmid2016, SG2017} for the details on the existence of the evolution systems for a family of densely defined closed operators in a Hilbert space. In fact, by \cite[Theorem 2.1]{SG2017}, we have the following result:

\begin{proposition}\label{prop:paraEvoSys}
Let $\{h(t): t \in [0, T]\}$ be a family of para-Hermitian operators in $\mathbb{H},$ having the same domain $\mathbb{D},$ namely $\mathbb{D} = \mathcal{D} (h(t))$ for all $t \in [0,T].$ If there exists $\omega >0$ such that
\be
\| e^{-\mathrm{i} s h(t)} \| \le e^{\omega |s|},\quad \forall s \in \mathbb{R},\forall t \in [0,T],
\ee
and if the map $[0,T] \ni t \mapsto h(t) \in \mathcal{B} (\mathbb{D}, \mathbb{H})$ is continuous and of bounded variation, where $\mathbb{D}$ is endowed with the graph norm of $h(0),$ then there exists a unique evolution system $\{U(t,s):t,s \in [0,T]\}$ for $A(t)= - \mathrm{i} h(t)$ on $\mathbb{D}.$
\end{proposition}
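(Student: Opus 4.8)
The plan is to verify the hypotheses of \cite[Theorem 2.1]{SG2017} for the family $A(t) = -\mathrm{i} h(t)$ and then to invoke that theorem directly; the substance of the argument is a translation of the para-Hermitian structure into the Kato-type conditions required there. First I would record that each $A(t) = -\mathrm{i} h(t)$ generates a strongly continuous one-parameter group. Since $h(t)$ is para-Hermitian, Propositions \ref{prop:paraHUop} and \ref{prop:paraUgro} show that $(e^{-\mathrm{i} s h(t)})_{s \in \mathbb{R}}$ is a strongly continuous one-parameter para-unitary group whose infinitesimal generator is $-\mathrm{i} h(t)$ (Theorem \ref{th:StoneTh}). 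The standing hypothesis $\| e^{-\mathrm{i} s h(t)} \| \le e^{\omega |s|}$ then says precisely that each $A(t)$ generates a group of type $\omega$, uniformly in $t \in [0,T]$.

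Second, I would deduce the stability of the family $\{A(t): t \in [0,T]\}$ in the sense required by Kato's theory. For any finite partition $0 \le t_1 \le \cdots \le t_k \le T$ and any reals $s_1, \ldots, s_k$, submultiplicativity of the operator norm together with the uniform bound gives
\[
\Big\| \prod_{j=1}^{k} e^{-\mathrm{i} s_j h(t_j)} \Big\| \le \prod_{j=1}^{k} e^{\omega |s_j|} = e^{\omega \sum_{j} |s_j|},
\]
so the family is stable with stability constants $M=1$ and $\omega$. Because the bound holds for all $s \in \mathbb{R}$, it controls the groups in both time directions, which is exactly what is needed to obtain an evolution system on the whole square $[0,T]^2$ rather than only on the triangle $0 \le s \le t \le T$.

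Third, the remaining hypotheses of \cite[Theorem 2.1]{SG2017} are precisely those assumed: the operators $h(t)$ share the common domain $\mathbb{D} = \mathcal{D}(h(t))$, and the map $[0,T] \ni t \mapsto h(t) \in \mathcal{B}(\mathbb{D}, \mathbb{H})$ is continuous and of bounded variation when $\mathbb{D}$ carries the graph norm of $h(0)$. With stability, the common admissible domain, and this regularity in hand, \cite[Theorem 2.1]{SG2017} yields a unique evolution system $\{U(t,s): t,s \in [0,T]\}$ for $A(t) = -\mathrm{i} h(t)$ on $\mathbb{D}$, which is the assertion.

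I expect the main obstacle to be bookkeeping rather than hard analysis: one must confirm that the para-unitary groups $e^{-\mathrm{i} s h(t)}$ leave the regularity space $(\mathbb{D}, \| \cdot \|_{h(0)})$ invariant and restrict there to strongly continuous groups, so that $\mathbb{D}$ is admissible in the sense of \cite{SG2017}. Since $e^{-\mathrm{i} s h(t)}$ commutes with $h(t)$ through the functional calculus of Definition \ref{df:FunctCalculusparaHop}, it preserves $\mathcal{D}(h(t)) = \mathbb{D}$ as a set; the delicate point is the uniform-in-$t$ control of the graph norms. This is exactly where the continuity and bounded-variation hypothesis on $t \mapsto h(t)$ enters, ensuring that the graph norms of the various $h(t)$ are uniformly comparable to that of $h(0)$. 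Once this admissibility is checked, the cited theorem applies verbatim.
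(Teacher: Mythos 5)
Your proposal is correct and follows essentially the same route as the paper: the proposition is framed there as a direct application of \cite[Theorem 2.1]{SG2017}, and the paper's one-line proof notes exactly what you establish first, namely that each $A(t) = -\mathrm{i}h(t)$ generates a strongly continuous group by Proposition \ref{prop:paraUgro}, with the remaining hypotheses (the quasicontractive bound $e^{\omega |s|}$, the common domain $\mathbb{D}$, and continuity plus bounded variation of $t \mapsto h(t)$ in $\mathcal{B}(\mathbb{D}, \mathbb{H})$) assumed verbatim from that theorem. Your additional bookkeeping is sound but not strictly needed: the stability estimate is already subsumed by the quasicontractivity hypothesis of \cite{SG2017}, and the admissibility check you anticipate as the main obstacle is precisely what that theorem dispenses with on a uniformly convex space such as the Hilbert space $\mathbb{H}$.
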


\begin{proof}
This is so, because every $A(t) = - \mathrm{i} h(t)$ generates a strongly continuous group $\{e^{-\mathrm{i} s h(t)}: s \in \mathbb{R}\}$ by Proposition \ref{prop:paraUgro}.
\end{proof}

\section{Mathematical axiom}\label{Axiom}

Following the Dirac-von Neumann formalism of quantum mechanics \cite{Dirac1958, vN1955}, we present a mathematical formalism of non-Hermitian quantum mechanics in what follows. Precisely, this formalism includes the following five postulates:

\begin{definition}\label{df:MathFNHQM}
The mathematical formalism of non-Hermitian quantum mechanics is defined by a set of postulates as follows:
\begin{enumerate}[$(P_1)$]

\item {\bf The state postulate}\; Associated with a non-Hermitian quantum system is a complex separable Hilbert space $\mathbb{H},$ the system at any given time is described by a state, which is determined by a nonzero vector in $\mathbb{H}.$

\item {\bf The observable postulate}\; Each observable for a non-Hermitian quantum system associated with a complex separable Hilbert space $\mathbb{H}$ is represented by a para-Hermitian operator in $\mathbb{H}.$

\item {\bf The measurement postulate}\; For an observable represented by a para-Hermitian operator $A$ in $\mathbb{H},$ if $G$ is a metric operator associated with $A,$ then $G$ introduces a measurement context for the observable $A$ such that the expectation of $A$ at a certain state determined by a nonzero vector $\psi$ with $ G^{-\frac{1}{2}} \psi \in \mathcal{D} (A)$ is given by
\beq\label{eq:BornRuleNH}
\langle A \rangle_{\psi, G} = \frac{\langle \psi, G^\frac{1}{2} A G^{-\frac{1}{2}} \psi \rangle}{\| \psi \|^2}.
\eeq
In particular, if $A$ has a discrete spetrum $\{\lambda_n\}_{n \ge 1},$ whose eigenstates $\{e_n\}_{n \ge 1}$ is a unconditional basis in $\mathbb{H},$ then the expectation of $A$ at $\psi$ with $G^{-\frac{1}{2}} \psi \in \mathcal{D} (A),$ where $G = \sum_{n \ge 1} |e^*_n\rangle \langle e^*_n|$ being a metric operator for $A,$ is given by
\beq\label{eq:BornRuleNHV}
\langle A \rangle_{\psi, \Pi} = \sum^\8_{n = 1} \lambda_n \frac{|\langle e^*_n, G^{-\frac{1}{2}} \psi\rangle |^2}{\| \psi \|^2},
\eeq
under the measurement $\Pi = \{|e_n\rangle \langle e^*_n|: n \ge 1 \}$ or equivalently in the measurement context of $G.$ In this case, $\psi$ will be changed to the state $e_n$ with probability
\beq\label{eq:TransProba}
p(\psi | e_n) = \frac{|\langle e^*_n, G^{-\frac{1}{2}} \psi\rangle |^2}{\| \psi \|^2}
\eeq
for each $n \ge 1.$

\item {\bf The evolution postulate}\; The system described by vectors is changed with time according to the Schr\"{o}dinger equation
\beq\label{eq:BiSchrEqu}
\mathrm{i} \frac{d \psi (t) }{d t} = H \psi (t),
\eeq
where $H$ is a para-Hermitian operator, which is called the energy operator of the system.

\item {\bf The composite-systems postulate}\; The Hilbert space associated with a composite non-Hermitian quantum system is the Hilbert space tensor product of the Hilbert spaces of its components. If systems numbered $1$ through $n$ are prepared in states $\psi_k,$ $k=1,\ldots, n,$ then the joint state of the composite total system is the tensor product $\psi_1 \otimes \cdots \otimes \psi_n.$

\end{enumerate}
\end{definition}

\begin{remark}\label{rk:NHQM}\rm
\begin{enumerate}[$1)$]

\item Both postulates $(P_1)$ and $(P_5)$ are the same as in the Dirac-von Neumann formalism of quantum mechanics. For the sake of completeness, we include them here.

\item Let $A$ be a para-Hermitian such that $A = \sum_{n \ge 1} \lambda_n |e_n\rangle \langle e^*_n|$ with a discrete spectrum whose eigenstates $(e_n)_{n \ge 1}$ constitute a unconditional basis in $\mathbb{H}.$ Define $G = \sum_{n \ge 1} |e^*_n\rangle \langle e^*_n|.$ Then $e^*_n = G e_n$ for every $n \ge 1,$ and $(e_n)_{n \ge 1}$ is orthogonal in the inner product $\langle \phi, \psi\rangle_G = \langle \phi, G \psi\rangle$ defined by $G.$ Moreover, we have
\be
\| \psi \|^2 = \langle G^{-\frac{1}{2}}\psi, G^{-\frac{1}{2}}\psi\rangle_G = \sum_{n \ge 1} |\langle e^*_n, G^{-\frac{1}{2}} \psi\rangle|^2,
\ee
and
\be
\sum_n p(\psi | e_n) = 1.
\ee
Thus, the formula \eqref{eq:BornRuleNHV} of the discrete case coincides with \eqref{eq:BornRuleNH}.

\item By the non-Hermitian Born formula \eqref{eq:BornRuleNH}, each state is scalar free and uniquely determined by a complex line through the origin of $\mathbb{H},$ i.e.,
\beq\label{eq:BornRuleScalarfree}
\langle A \rangle_{\alpha \psi, G} = \langle A \rangle_{\psi, G}
\eeq
for any nonzero scalar $\alpha \in \mathbb{C}.$ In what follows, without specified otherwise, we always use a unit vector to represent a state, simply called a {\it vector state}.
\end{enumerate}

\end{remark}

\begin{proposition}\label{prop:BornRuleHermit}\rm
Let $A$ be a Hermitian operator in $\mathbb{H}.$ If $G$ is a metric operator associated with $A,$ then
\beq\label{eq:BornRuleHermit}
\langle A \rangle_{\psi, G} = \langle A \rangle_\psi : = \frac{\langle \psi, A \psi \rangle}{\| \psi \|^2}
\eeq
for any nonzero $\psi \in \mathbb{H}.$
\end{proposition}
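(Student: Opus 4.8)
The plan is to show that the operator $B := G^\frac{1}{2} A G^{-\frac{1}{2}}$ appearing in the non-Hermitian Born formula \eqref{eq:BornRuleNH} actually coincides with $A$ itself whenever $A$ is Hermitian and $G \in \mathcal{M}(A)$; the identity \eqref{eq:BornRuleHermit} is then immediate. First I would record, from Proposition \ref{prop:paraHop}, that $G \in \mathcal{M}(A)$ means precisely that $B = G^\frac{1}{2} A G^{-\frac{1}{2}}$ is self-adjoint. Writing $S = G^\frac{1}{2}$, which is bounded, strictly positive, self-adjoint and boundedly invertible, and using $A^* = A$ together with $S^* = S$, one computes $B^* = S^{-1} A S$. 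Hence the self-adjointness $B = B^*$ reads $S A S^{-1} = S^{-1} A S$, and left-multiplying by $S$ and then right-multiplying by $S$ gives $S^2 A = A S^2$, i.e. $G A = A G$: the metric operator associated with a Hermitian observable must commute with it.

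Next I would promote this commutation relation from $G$ to its square root. Since $G$ is bounded, self-adjoint and boundedly invertible, its spectrum lies in some interval $[c,C]$ with $0 < c \le C$, so by the Weierstrass theorem $\sqrt{\cdot}$ is a uniform limit of polynomials on $[c,C]$ and $G^\frac{1}{2} = \lim_n p_n(G)$ in operator norm. For $x \in \mathcal{D}(A)$ the relation $G A = A G$ yields $p_n(G) A x = A\, p_n(G) x$; letting $n \to \infty$ and invoking closedness of $A$, I would conclude that $G^\frac{1}{2} x \in \mathcal{D}(A)$ and $A G^\frac{1}{2} x = G^\frac{1}{2} A x$, and likewise for $G^{-\frac{1}{2}}$. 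Thus both $G^\frac{1}{2}$ and $G^{-\frac{1}{2}}$ commute with $A$.

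Finally, for a state $\psi$ with $G^{-\frac{1}{2}}\psi \in \mathcal{D}(A)$ as required in the measurement postulate, I set $\phi = G^{-\frac{1}{2}}\psi$; commutation gives $\psi = G^\frac{1}{2}\phi \in \mathcal{D}(A)$ and $B \psi = G^\frac{1}{2} A G^{-\frac{1}{2}} \psi = G^\frac{1}{2} A \phi = A G^\frac{1}{2}\phi = A\psi$, so that $\langle \psi, B\psi\rangle = \langle\psi, A\psi\rangle$ and \eqref{eq:BornRuleHermit} follows. The main obstacle is entirely a domain/unboundedness issue: when $A$ is unbounded, the formula $B^* = S^{-1} A S$ and the rearrangement to $GA = AG$ must be read as equalities of operators on matching domains (one checks $G\,\mathcal{D}(A) = \mathcal{D}(A)$ from the equality of the domains of $B$ and $B^*$), and the passage to $G^{\pm\frac{1}{2}}$ must genuinely use closedness of $A$ rather than naive algebra. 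For bounded Hermitian $A$ every step reduces to a one-line computation.
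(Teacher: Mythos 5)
Your proposal is correct and follows essentially the same route as the paper's proof: both derive $GA = AG$ from the self-adjointness of $G^{\frac{1}{2}} A G^{-\frac{1}{2}}$ together with $A^* = A$, then promote the commutation to $G^{\frac{1}{2}}$ (the paper via the spectral theorem and functional calculus, you via Weierstrass polynomial approximation and closedness of $A$ --- two standard implementations of the same step), and conclude $G^{\frac{1}{2}} A G^{-\frac{1}{2}}\psi = A\psi$. Your treatment is in fact somewhat more careful than the paper's on the unbounded case, since you verify $B^* = G^{-\frac{1}{2}} A G^{\frac{1}{2}}$ with matching domains and extract $G\,\mathcal{D}(A) = \mathcal{D}(A)$ explicitly, details the paper leaves implicit.
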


\begin{proof}
Since $G$ and $G^\frac{1}{2} A G^{-\frac{1}{2}}$ are both self-adjoint, it follows that
\be
G^\frac{1}{2} A G^{-\frac{1}{2}} = G^{-\frac{1}{2}} A G^\frac{1}{2}
\ee
and so $GA = AG,$ i.e., $G$ commutes with $A.$ Note that $\sigma (G)$ is a bounded closed set in $(0, \8)$ by the assumption. Thus $f(x) = x^\frac{1}{2}$ is a Borel function in $\sigma (G).$ By the spectral theorem (cf. \cite[Theorem 13.33]{Rudin1991}) and functional calculus, we conclude that $G^\frac{1}{2} A = A G^\frac{1}{2}.$ This implies the required \eqref{eq:BornRuleHermit}.
\end{proof}

\begin{remark}\label{rk:NHQMtoDvN}\rm
By Proposition \ref{prop:BornRuleHermit}, the non-Hermitian Born formula \eqref{eq:BornRuleNH} is independent of the choice of a metric operator $G$ associated with a Hermitian operator $A.$ Therefore, the mathematical formalism of non-Hermitian quantum mechanics as in Definition \ref{df:MathFNHQM} is an extension of the Dirac-von Neumann formalism of quantum mechanics to the non-Hermitian setting.
\end{remark}

The following example shows that the non-Hermitian Born formula \eqref{eq:BornRuleNH} is not equal to the usual Born rule for a non-Hermitian observable in general.

\begin{example}\rm
Consider the operator $A: \mathbb{C}^2 \mapsto \mathbb{C}^2$ defined by the matrix
\be
A = \left (
\begin{matrix}
0 & 1 \\
4 & 0
\end{matrix} \right )
\ee
in the standard basis of $\mathbb{C}^2$ and is not Hermitian. Define $G: \mathbb{C}^2 \mapsto \mathbb{C}^2$ by the matrix
\be
G = \left (
\begin{matrix}
1 & 0 \\
0 & \frac{1}{4}
\end{matrix} \right ).
\ee
Then $G^\frac{1}{2} A G^{- \frac{1}{2}} = 2 \sigma_x $ is Hermitian and so $A$ is para-Hermitian in $\mathbb{C}^2,$ that is, $A$ is a non-Hermitian observable. For $\psi = \frac{1}{\sqrt{2}} \left ( \begin{matrix} 1 \\  - \mathrm{i} \end{matrix} \right ),$ we have
\be
\langle A \rangle_\psi = \frac{3}{2} \mathrm{i}, \quad \langle A \rangle_{\psi, G} = 0,
\ee
and thus $\langle A \rangle_{\psi, G} \not= \langle A \rangle_\psi.$
\end{example}

The following example shows that the measurement of a non-Hermitian observable represented by a para-Hermitian operator may depend on the choice of a metric operator associated with it.

\begin{example}\label{ex:NHObM}\rm (cf. \cite[Section 3.5]{Mosta2010})\;
Consider the operator $A: \mathbb{C}^2 \mapsto \mathbb{C}^2$ defined by the matrix
\be
A = \frac{1}{2}\left (
\begin{matrix}
1+\delta & -1+\delta \\
1-\delta & -1 - \delta
\end{matrix} \right )
\ee
in the standard basis of $\mathbb{C}^2,$ where $\delta >0.$ It has two eigenvalues $\lambda_\pm = \pm \delta^\frac{1}{2}$ with the corresponding eigenstates
\be
e_+ (A) = c_+ \left (
\begin{matrix}
1 + \delta^\frac{1}{2}\\
1 - \delta^\frac{1}{2}
\end{matrix} \right ),\quad
e_- (A) = c_- \left (
\begin{matrix}
1 - \delta^\frac{1}{2}\\
1 + \delta^\frac{1}{2}
\end{matrix} \right ),
\ee
where $c_1, c_2 \in \mathbb{C}_*,$ and with the dual eigenstaes
\be
e^*_+ (A) = \frac{1}{4 \bar{c}_+} \left (
\begin{matrix}
1 + \delta^{-\frac{1}{2}}\\
1 - \delta^{-\frac{1}{2}}
\end{matrix} \right ),\quad
e^*_- (A) & = \frac{1}{4 \bar{c}_-} \left (
\begin{matrix}
1 - \delta^{-\frac{1}{2}}\\
1 + \delta^{-\frac{1}{2}}
\end{matrix} \right ),
\ee
such that
\be
A = \delta^\frac{1}{2} |e_+ (A) \rangle \langle e^*_+ (A)| - \delta^\frac{1}{2} |e_- (A)\rangle \langle e^*_- (A)|.
\ee
Hence, $A$ is para-Hermitian for all $\delta>0,$ and is Hermitian only when $\delta =1.$

Define $G: \mathbb{C}^2 \mapsto \mathbb{C}^2$ by the matrix
\be
G = r_+ \left (
\begin{matrix}
(1 + \delta^{-\frac{1}{2}})^2 & 1 - \delta^{-1} \\
1 - \delta^{-1} & (1 - \delta^{-\frac{1}{2}})^2
\end{matrix} \right ) + r_- \left (
\begin{matrix}
(1 - \delta^{-\frac{1}{2}})^2 & 1 - \delta^{-1} \\
1 - \delta^{-1} & (1 + \delta^{-\frac{1}{2}})^2
\end{matrix} \right ),
\ee
in the standard basis, where $r_+ = |4 c_+|^{-2}$ and $r_- = |4 c_-|^{-2}.$ By computation, one finds that $A$ is Hermitian in the inner product $(\cdot, \cdot)_G,$ and thus $G$ is a metric operator associated with $A.$ Note that $G$ has two positive eigenvalues
\be
\lambda_\pm = (1+ \delta^{-1}) (r_+ + r_-) \pm [(1+ \delta^{-1})^2 (r_+ + r_-)^2 - 4^2 \delta^{-1} r_+ r_-]^\frac{1}{2}.
\ee

Taking $\delta = \frac{1}{4},$ we have $\lambda_\pm = 5(r_+ + r_-) \pm \gamma$ with the corresponding eigenstates
\be\begin{split}
e_+ (G) = & \left (
\begin{matrix}
-3(r_+ + r_-)\\
\gamma -4(r_+ - r_-)
\end{matrix} \right ),\\
e_-(G) = & \left (
\begin{matrix}
-3(r_+ + r_-)\\
-\gamma -4(r_+ - r_-)
\end{matrix} \right ),
\end{split}\ee
where $\gamma = [25( r_+^2 + r_-^2) - 14 r_+ r_-]^\frac{1}{2}.$ It is then easy to see that the expectation $\langle A \rangle_{|0\rangle, G}$ of $A$ at $|0\rangle = \left (
\begin{matrix}
1\\
0
\end{matrix} \right )$ is dependent on the values of $r_+$ and $r_-,$ and thus $\langle A \rangle_{|0\rangle, G}$ depends on the choice of the metric operator $G$ associated with it.
\end{example}

\begin{example}\label{ex:para-PaulMat}\rm
Consider the non-Hermitian qubit system associated with the Hilbert space $\mathbb{C}^2.$ Given a real number $-\frac{\pi}{2} < \omega < \frac{\pi}{2},$ the non-Hermitian (deformed) Pauli matrices are defined by (cf. \cite{Brody2014})
\beq\label{eq:biPauliM}
\left \{
\begin{split}
\sigma^\omega_x & = \frac{1}{\cos \omega} \left (
\begin{matrix}
- \mathrm{i} \sin \omega & 1 \\
1 & \mathrm{i} \sin \omega
\end{matrix} \right ),\\
\sigma^\omega_y & = \left (
\begin{matrix}
0 & - \mathrm{i} \\
\mathrm{i} & 0
\end{matrix} \right ),\\
\sigma^\omega_z & = \frac{1}{\cos \omega} \left (
\begin{matrix}
1 & \mathrm{i}\sin \omega \\
\mathrm{i}\sin \omega & -1
\end{matrix} \right ).
\end{split}\right.\eeq
All $\sigma^\omega_x, \sigma^\omega_y,$ and $\sigma^\omega_z$ have eigenvalues $1$ and $-1,$ and satisfy the canonical commutation relations
\beq\label{eq:PauliCommuLation}
\sigma^\omega_x \sigma^\omega_y = \mathrm{i} \sigma^\omega_z,\; \sigma^\omega_y \sigma^\omega_z = \mathrm{i} \sigma^\omega_x,\;\sigma^\omega_z \sigma^\omega_x = \mathrm{i} \sigma^\omega_y.
\eeq

The eigenstates of $\sigma^\omega_x$ are
\beq\label{eq:Xeigenstate}
\left \{ \begin{split}
e_+ (\sigma^\omega_x) & = \frac{1}{\sqrt{2}} \left (
\begin{matrix}
1 \\
e^{\mathrm{i}\omega}
\end{matrix} \right ),\\
e_- (\sigma^\omega_x) & = \frac{1}{\sqrt{2}} \left (
\begin{matrix}
1 \\
- e^{-\mathrm{i}\omega}
\end{matrix} \right ),
\end{split}
\right.\eeq
and
\beq\label{eq:XstarEigenstate}
\left \{\begin{split}
e^*_+ (\sigma^\omega_x) & = \frac{1}{\sqrt{2} \cos \omega} \left (
\begin{matrix}
e^{\mathrm{i}\omega}\\
1
\end{matrix} \right ),\\
e^*_- (\sigma^\omega_x) & = -\frac{1}{\sqrt{2} \cos \omega} \left (
\begin{matrix}
- e^{-\mathrm{i}\omega}\\
1
\end{matrix} \right ),
\end{split}
\right.\eeq
where we have written $e_+$ for $e_1$ and $e_-$ for $e_2.$ Note that
\be
\langle e_- (\sigma^\omega_x), e_+ (\sigma^\omega_x) \rangle = \frac{1}{2} ( 1 - e^{2\mathrm{i}\omega}) \not=0
\ee
for $\omega \not=0,$ namely if $\omega \not=0,$ $e_+$ and $e_-$ are not orthogonal, due to the fact that $\sigma^\omega_x$ is not a self-adjoint operator.

Define
\be
G= |e^*_+ (\sigma^\omega_x) \rangle \langle e^*_+ (\sigma^\omega_x)| + |e^*_- (\sigma^\omega_x)\rangle \langle e^*_- (\sigma^\omega_x)| = \frac{1}{ \cos^2 \omega} \left (
\begin{matrix}
1 & \mathrm{i} \sin \omega\\
-\mathrm{i}\sin \omega & 1
\end{matrix} \right )
\ee
which has eigenvalues $\lambda_\pm = \frac{1}{\cos^2 \omega} (1 \pm \sin \omega),$ whose eigenstates are respectively
\be
e_+ (G) = \frac{1}{\sqrt{2}} \left (
\begin{matrix}
1\\
- \mathrm{i}
\end{matrix} \right ), \quad
e_- (G) = \frac{1}{\sqrt{2}} \left (
\begin{matrix}
- \mathrm{i}\\
1
\end{matrix} \right ).
\ee
Then
\be\begin{split}
G^\frac{1}{2} = & \frac{(1 + \sin \omega)^\frac{1}{2}}{\cos \omega} |e_+ (G) \rangle \langle e_+ (G)| + \frac{(1 - \sin \omega)^\frac{1}{2}}{\cos \omega} |e_- (G)\rangle \langle e_- (G)|\\
= & \frac{(1 + \sin \omega)^\frac{1}{2}}{2 \cos \omega} \left (
\begin{matrix}
1 & \mathrm{i} \\
-\mathrm{i} & 1
\end{matrix} \right )
+
\frac{(1 - \sin \omega)^\frac{1}{2}}{2\cos \omega} \left (
\begin{matrix}
1 & - \mathrm{i} \\
\mathrm{i} & 1
\end{matrix} \right )
\end{split}\ee
and
\be
G^{-\frac{1}{2}}= \frac{\cos \omega}{2(1 + \sin \omega)^\frac{1}{2}} \left (
\begin{matrix}
1 & \mathrm{i} \\
-\mathrm{i} & 1
\end{matrix} \right ) + \frac{\cos \omega}{2(1 - \sin \omega)^\frac{1}{2}} \left (
\begin{matrix}
1 & - \mathrm{i} \\
\mathrm{i} & 1
\end{matrix} \right ).
\ee
Thus,
\be
G^\frac{1}{2} \sigma^\omega_x G^{-\frac{1}{2}}= \frac{(1 - \sin \omega)^\frac{1}{2}}{\cos \omega} \left (
\begin{matrix}
0 & 1 \\
1 & 0
\end{matrix} \right ),
\ee
and so $\sigma^\omega_x$ is a para-Hermitian operator in $\mathbb{C}^2.$ Then, for a state determined by
\beq\label{eq:psi}
\psi = \cos \frac{\theta}{2} \left (
\begin{matrix}
1 \\
0
\end{matrix} \right ) + e^{\mathrm{i} \phi}\sin \frac{\theta}{2} \left (
\begin{matrix}
0 \\
1
\end{matrix} \right )
= \left (
\begin{matrix}
\cos \frac{\theta}{2} \\
 e^{\mathrm{i} \phi}\sin \frac{\theta}{2}
\end{matrix} \right ), \quad \theta,\phi \in [0, 2 \pi),
\eeq
the expectation of $\sigma^\omega_x$ at $\psi$ is
\be
\langle \sigma^\omega_x \rangle_{\psi, G} = \frac{(1 - \sin \omega)^\frac{1}{2}}{\cos \omega} \sin \theta \cos \phi
\ee
under the measurement $\{ |e_+ (\sigma^\omega_x) \rangle \langle e^*_+ (\sigma^\omega_x)|, |e_- (\sigma^\omega_x)\rangle \langle e^*_- (\sigma^\omega_x)|\}$ or equivalently in the measurement context of $G.$

Since $\sigma^\omega_y = \sigma_y,$ two eigenstates of it are
\beq\label{eq:Yeigenstate}
\left \{\begin{split}
e_+ (\sigma^\omega_y)& =\frac{1}{\sqrt{2}} \left (
\begin{matrix}
1 \\
\mathrm{i}
\end{matrix} \right )= e^*_+ (\sigma^\omega_y),\\
e_- (\sigma^\omega_y)&= \frac{1}{\sqrt{2}} \left (
\begin{matrix}
\mathrm{i} \\
1
\end{matrix} \right )= e^*_- (\sigma^\omega_y).
\end{split}
\right.\eeq
Then the expectation of $\sigma^\omega_x$ at $\psi$ is
\be
\langle \sigma^\omega_y \rangle_\psi = \langle \sigma^\omega_y \rangle_{\psi, G} = \sin \theta \sin \phi
\ee
in the measurement context of $G,$ where $G$ is any strictly positive self-adjoint operator such that $G^\frac{1}{2} \sigma_y G^{-\frac{1}{2}}$ is self-adjoint in $\mathbb{C}^2$ or equivalently $G$ commutes with $\sigma_y.$

Finally, the eigenstates of $\sigma^\omega_z$ are
\beq\label{eq:Zeigenstate}
\left \{\begin{split}
e_+ (\sigma^\omega_z) & = \frac{(1+ \cos \omega)^\frac{1}{2}}{\sqrt{2}}\left (
\begin{matrix}
1 \\
\frac{\mathrm{i}\sin \omega}{1+ \cos \omega}
\end{matrix} \right ),\\
e_- (\sigma^\omega_z) & = \frac{(1+ \cos \omega)^\frac{1}{2}}{\sqrt{2}} \left (
\begin{matrix}
-\frac{\mathrm{i}\sin \omega}{1+ \cos \omega} \\
1
\end{matrix} \right ),
\end{split}
\right.\eeq
and so,
\beq\label{eq:ZstarEigenstate}
\left \{\begin{split}
e^*_+ (\sigma^\omega_z) & =\frac{(1+ \cos \omega)^\frac{1}{2}}{\sqrt{2} \cos \omega} \left (
\begin{matrix}
1 \\
-\frac{\mathrm{i}\sin \omega}{1+ \cos \omega}
\end{matrix} \right ),\\
e^*_- (\sigma^\omega_z) & =\frac{(1+ \cos \omega)\frac{1}{2}}{\sqrt{2} \cos \omega} \left (
\begin{matrix}
\frac{\mathrm{i}\sin \omega}{1+ \cos \omega}\\
1
\end{matrix} \right ).
\end{split}
\right.\eeq
Note that
\be
\langle e_- (\sigma^\omega_z), e_+ (\sigma^\omega_z) \rangle =\mathrm{i} \sin \omega \not=0
\ee
if $\omega \not=0,$ in which case $\sigma^\omega_z$ is not a self-adjoint operator. Define
\be
G = |e^*_+ (\sigma^\omega_z) \rangle \langle e^*_+ (\sigma^\omega_z)| + |e^*_- (\sigma^\omega_z)\rangle \langle e^*_- (\sigma^\omega_z)| = \frac{1}{\cos^2 \omega} \left (
\begin{matrix}
1 & \mathrm{i} \sin \omega\\
-\mathrm{i}\sin \omega & 1
\end{matrix} \right )
\ee
which is the same $G$ as appearing in the case of $\sigma^\omega_x.$ Then
\be
G^\frac{1}{2} \sigma^\omega_z G^{-\frac{1}{2}}= \left (
\begin{matrix}
1 & 0 \\
0 & -1
\end{matrix} \right ),
\ee
and so $\sigma^\omega_z$ is a para-Hermitian operator in $\mathbb{C}^2.$ Therefore, the expectation of $\sigma^\omega_z$ at $\psi$ is
\be
\langle \sigma^\omega_z \rangle_{\psi, G} = \langle \sigma_z \rangle_\psi = \cos \theta
\ee
under the measurement $\{ |e_+ (\sigma^\omega_z) \rangle \langle e^*_+ (\sigma^\omega_z)|, |e_- (\sigma^\omega_z)\rangle \langle e^*_- (\sigma^\omega_z)|\}$ or equivalently in the measurement context of $G.$
\end{example}

\begin{example}\label{ex:qubitParaHermitian}\rm
Consider a two-dimensional Hamiltonian of the form
\beq\label{eq:2-dParaHamiltonian}
H = \left (
\begin{matrix}
r e^{\mathrm{i}\theta} & \gamma\\
\gamma & r e^{-\mathrm{i}\theta}
\end{matrix} \right )
\eeq
where the three parameters $r, \theta, $ and $\gamma$ are real numbers. Then $H$ has two real eigenvalues $\lambda_\pm = r \cos \theta \pm \sqrt{\gamma^2 - r^2 \sin^2 \theta}$ provided that $\gamma^2 > r^2 \sin^2 \theta,$ and the associated eigenstates of $H$ are
\be
e_+ (H) = \frac{1}{\sqrt{2}}\left (
\begin{matrix}
e^{\mathrm{i} \phi /2} \\
e^{-\mathrm{i} \phi /2}
\end{matrix} \right ),\quad
e_- (H) = \frac{1}{\sqrt{2}}\left (
\begin{matrix}
\mathrm{i} e^{-\mathrm{i} \phi /2} \\
-\mathrm{i} e^{\mathrm{i} \phi /2}
\end{matrix} \right ),
\ee
where the real number $\phi$ is defined by $\sin \phi = \frac{r}{\gamma} \sin \theta,$ and so
\be
e^*_+ (H)= \frac{1}{\sqrt{2}\cos \phi}\left (
\begin{matrix}
e^{-\mathrm{i} \phi /2} \\
e^{\mathrm{i} \phi /2}
\end{matrix} \right ),\quad
e^*_- (H)= \frac{1}{\sqrt{2}\cos \phi}\left (
\begin{matrix}
\mathrm{i} e^{\mathrm{i} \phi /2} \\
-\mathrm{i} e^{-\mathrm{i} \phi /2}
\end{matrix} \right ).
\ee
Note that
\be
\langle e_- (H), e_+ (H) \rangle = 2 \sin \phi \not=0,
\ee
when $r, \theta \not=0,$ in this case $H$ is not a self-adjoint operator.

Define
\be
G = |e^*_+ (H) \rangle \langle e^*_+ (H)| + |e^*_- (H)\rangle \langle e^*_- (H)| = \frac{1}{\cos^2 \phi} \left (
\begin{matrix}
1 & -\mathrm{i}\sin \phi\\
\mathrm{i}\sin \phi & 1
\end{matrix} \right )
\ee
which is the same $G$ ($\phi = -\omega$) as appearing in the case of $\sigma^\omega_x$ in Example \ref{ex:para-PaulMat}. Then
\be
G^\frac{1}{2} H G^{-\frac{1}{2}}= \cos \theta \left (
\begin{matrix}
r & 0 \\
0 & r
\end{matrix} \right ) + |\cos \phi| \gamma \left (
\begin{matrix}
0 & 1 \\
1 & 0
\end{matrix} \right ),
\ee
and so $H$ is a para-Hermitian operator in $\mathbb{C}^2$ provided $r, \theta \not=0$ such that $\gamma^2 > r^2 \sin^2 \theta.$ Note that, for $\gamma =1$ and $\theta = \frac{\pi}{2},$ the non-Hermitian operator $H =\sigma_r = \sigma_x - \mathrm{i} r \sigma_z$ is para-Hermitian for $0 < r < 1.$

Moreover, the expectation of $H$ at $\psi = \left (
\begin{matrix}
a \\
b
\end{matrix} \right )$ ($|a|^2 + |b|^2 =1$) is
\be
\langle H \rangle_{\psi, G} = r \cos \theta + \gamma |\cos \phi| (a \bar{b} + \bar{a} b)
\ee
under the measurement $\{ |e_+ (H) \rangle \langle e^*_+ (H)|, |e_- (H)\rangle \langle e^*_- (H)|\}$ or equivalently in the measurement context of $G.$

\end{example}

\section{PT-symmetric and biorthogonal quantum mechanics}\label{PTBIqm}

This section shows how the above formalism can recover PT-symmetric and biorthogonal quantum mechanics.

\subsection{PT-symmetric quantum mechanics}\label{PT-qm}

Recall that the linear parity operator $\mathcal{P}$ is defined by $\mathcal{P} f (x) = f(-x),$ whereas the anti-linear time-reversal operator $\mathcal{T}$ is defined by $\mathcal{T} f (x) = \overline{f(x)},$ so that $\mathcal{P}\mathcal{T} f (x) = \overline{f(-x)}.$ In \cite{BB1998}, Bender and Boettcher found that non-Hermitian Hamiltonian $H = p^2 + x^2 (\mathrm{i} x)^\nu$ ($\nu \ge 0$) may have real and positive spectrum, by noticing that while $H$ is not symmetric under $\mathcal{P}$ or $\mathcal{T}$ separately, it is invariant under their combined operation $\mathcal{P}\mathcal{T},$ that is, $H$ is $\mathcal{P}\mathcal{T}$-symmetric. The reality of the spectrum of $H$ is a consequence of its unbroken $\mathcal{P}\mathcal{T}$ symmetry. The unbroken $\mathcal{P}\mathcal{T}$ symmetry of $H$ means that eigenfunctions of $H$ are also eigenfunctions of $\mathcal{P}\mathcal{T}.$ In this case, there appears to be a natural choice for an inner produce given by
 \be
 (f, g)_{\mathcal{P}\mathcal{T}} = \int [\mathcal{P}\mathcal{T} f (x)] g (x) d x
 \ee
such that the eigenfunctions $\{\psi_n\}$ of $H$ are orthogonal, i.e., $(\psi_m, \psi_n)_{\mathcal{P}\mathcal{T}} = (-1)^n \delta_{mn}.$ However, the $\mathcal{P}\mathcal{T}$ inner product $ (f, g)_{\mathcal{P}\mathcal{T}}$ is indefinite, because the norms of the eigenfunctions via this inner product may be negative.

In \cite{BBJ2002}, Bender {\it et al.} observed that for any $\mathcal{P}\mathcal{T}$-symmetric Hamiltonian $H$ having an unbroken $\mathcal{P}\mathcal{T}$ symmetry, there exists a symmetry of $H$ described by a linear operator $\mathcal{C}$ defined by
\be
 C(x,y)= \sum_n \psi_n(x) \psi_n(y),
 \ee
such that the eigenfunctions $\{\psi_n\}$ of $H$ are orthogonal in the $\mathcal{C}\mathcal{P}\mathcal{T}$-inner product
 \be
 (f, g)_{\mathcal{C}\mathcal{P}\mathcal{T}} = \int [\mathcal{C}\mathcal{P}\mathcal{T} f (x)] g (x) d x,
 \ee
that is, $(\psi_m, \psi_n)_{\mathcal{C}\mathcal{P}\mathcal{T}} = \delta_{mn}$ and so the inner product $(f, g)_{\mathcal{C}\mathcal{P}\mathcal{T}}$ is positively definite. This means that any $\mathcal{P}\mathcal{T}$-symmetric Hamiltonian $H$ having an unbroken $\mathcal{P}\mathcal{T}$ symmetry has a metric operator $G =\mathcal{C}\mathcal{P}\mathcal{T}$ such that $H$ is a Hermitian operator in the $\mathcal{C}\mathcal{P}\mathcal{T}$ inner product $(f, g)_{\mathcal{C}\mathcal{P}\mathcal{T}}.$ Then, by Proposition \ref{prop:paraHop}, we have

\begin{proposition}\label{prop:PTqm}\rm
If a densely defined closed $\mathcal{P}\mathcal{T}$-symmetric operator $H$ with discrete spectrum has an unbroken $\mathcal{P}\mathcal{T}$ symmetry, then $H$ is a para-Hermitian operator.
\end{proposition}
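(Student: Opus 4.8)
The plan is to produce an explicit metric operator with respect to which $H$ is self-adjoint, and then invoke Proposition \ref{prop:paraHop}. The unbroken $\mathcal{P}\mathcal{T}$ symmetry guarantees that the (discrete) eigenfunctions $\{\psi_n\}$ of $H$ are simultaneously eigenfunctions of $\mathcal{P}\mathcal{T}$ and form a complete system in $\mathbb{H}$; the associated operator $\mathcal{C}$, defined through $C(x,y)=\sum_n \psi_n(x)\psi_n(y)$ as in Bender {\it et al.}, yields the positive-definite $\mathcal{C}\mathcal{P}\mathcal{T}$ inner product in which the $\psi_n$ are orthonormal. First I would set $G=\mathcal{C}\mathcal{P}\mathcal{T}$, so that $(f,g)_{\mathcal{C}\mathcal{P}\mathcal{T}}=\langle f, G g\rangle$, and observe that $H$ is self-adjoint with respect to this inner product: it is diagonalized by the $\langle\cdot,\cdot\rangle_G$-orthonormal eigenbasis $\{\psi_n\}$ with real eigenvalues, so on this complete system $H$ coincides with its $G$-adjoint.

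The decisive step is to verify that $G$ is a genuine metric operator in the sense required here, namely a bounded, strictly positive self-adjoint operator with bounded inverse. Positivity and self-adjointness of $G$ are immediate from the positive-definiteness of the $\mathcal{C}\mathcal{P}\mathcal{T}$ inner product, but the boundedness of $G$ together with that of $G^{-1}$ is equivalent to the assertion that the $\mathcal{C}\mathcal{P}\mathcal{T}$ norm $\|\cdot\|_G$ is equivalent to the original Hilbert norm of $\mathbb{H}$. This in turn amounts to the eigenfunctions $\{\psi_n\}$ forming a Riesz (bounded unconditional) basis of $\mathbb{H}$, so that the transformation between $\{\psi_n\}$ and an orthonormal basis is implemented by an operator in $\mathcal{T}(\mathbb{H})$. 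Under the standing hypotheses on $\mathcal{P}\mathcal{T}$-symmetric Hamiltonians this basis property is precisely what makes the $\mathcal{C}$ construction well defined, and I would invoke it at this point to certify that $G\in\mathcal{T}(\mathbb{H})$ is strictly positive.

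Having established that $G=\mathcal{C}\mathcal{P}\mathcal{T}$ is a metric operator with respect to which $H$ is self-adjoint, the conclusion follows at once from the equivalence $1)\Leftrightarrow 2)$ in Proposition \ref{prop:paraHop}: $H$ is a para-Hermitian operator. I expect the main obstacle to be exactly the functional-analytic verification that $G$ is bounded with bounded inverse; discreteness of the spectrum and completeness of the eigenfunctions alone do not force this, and one genuinely needs the Riesz-basis property of the eigenfunctions, which is where the unbroken $\mathcal{P}\mathcal{T}$ symmetry does its essential work.
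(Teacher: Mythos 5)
Your proposal follows exactly the paper's route: the paper likewise takes $G=\mathcal{C}\mathcal{P}\mathcal{T}$ from the construction of Bender \emph{et al.}, notes that $H$ is Hermitian with respect to the $\mathcal{C}\mathcal{P}\mathcal{T}$ inner product, and concludes via the equivalence $1)\Leftrightarrow 2)$ of Proposition \ref{prop:paraHop}. If anything, you are more explicit than the paper about the one delicate point---that $G$ must be bounded, strictly positive, and boundedly invertible, i.e.\ that the eigenfunctions form a Riesz basis---which the paper silently absorbs into the assertion that $\mathcal{C}\mathcal{P}\mathcal{T}$ is a metric operator.
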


Mathematically, the $\mathcal{P}\mathcal{T}$ symmetry of a Hamiltonian provides a way to construct a metric operator for the Hamiltonian. Once having the metric operator $G$ associated with a $\mathcal{P}\mathcal{T}$-symmetric Hamiltonian $H,$ the measurement and dynamics relative to $H$ can be resolved in the mathematical formwork of Definition \ref{df:MathFNHQM}. In this sense, we say that the mathematical formalism of non-Hermitian quantum mechanics given by Definition \ref{df:MathFNHQM} recovers $PT$-symmetric quantum mechanics.

For illustration, in what follows, we show the fact, which was previously found by Bender {\it et al.} \cite{BBJM2007} in the formwork of $PT$-symmetric quantum mechanics, that the transformation between a pair of orthogonal states according to non-Hermitian quantum mechanics in the formwork of Definition \ref{df:MathFNHQM} can be arbitrarily faster than Hermitian quantum mechanics under the same energy constraint.

To this end, for the sake of clarity, let us consider the transformation from the initial state $|0\rangle = \left (
\begin{matrix}
1 \\
0
\end{matrix} \right )$ to the final state $|1\rangle = \left (
\begin{matrix}
0 \\
1
\end{matrix} \right )$ in a qubit system associated with the two-dimensional Hilbert space $\mathbb{C}^2.$ As shown in \cite{BBJM2007}, for any two-dimensional Hermitian Hamiltonian $H,$ the smallest time required to transform $|0\rangle$ to $|1\rangle$ is $\tau = \frac{\pi \hbar}{\omega},$ i.e., $|1\rangle = \alpha e^{- \mathrm{i}H\tau /\hbar} |0\rangle$ with some phase factor $\alpha \not= 0,$ where $\omega$ is the difference between the energy eigenvalues of the Hamiltonian.

However, for a para-Hermitian Hamiltonian of the form
\be\label{eq:2-dParaHamiltonian}
H = \left (
\begin{matrix}
r e^{\mathrm{i}\theta} & \gamma\\
\gamma & r e^{-\mathrm{i}\theta}
\end{matrix} \right )
\ee
as shown in Example \ref{ex:qubitParaHermitian}, one has
\be
e^{- \mathrm{i}H t /\hbar} |0\rangle = \frac{e^{- \mathrm{i} r t \cos \theta /\hbar}}{\cos \phi} \left (
\begin{matrix}
\cos (\frac{\omega}{2 \hbar} t - \phi) \\
-\mathrm{i} \sin(\frac{\omega t}{2 \hbar})
\end{matrix} \right ),
\ee
where $\omega = 2 (\gamma^2 - r^2 \sin^2 \theta )^\frac{1}{2}$ is the difference between the eigenvalues of $H,$ and $\phi$ is defined by $\sin \phi = \frac{r}{\gamma} \sin \theta.$ Then we see that the evolution time to reach $|1\rangle$ from $|0\rangle$ is $t = (2 \phi + \pi)\hbar /\omega.$ Taking allowable values for $r, \gamma,$ and $\theta,$ we can make $\phi$ approaching $- \pi/2$ such that the optimal time $\tau$ tends to $0.$ Thus, the transformation from the initial state $|0\rangle$ to the final state $|0\rangle$ according to non-Hermitian quantum mechanics in the formwork of Definition \ref{df:MathFNHQM} can be arbitrarily faster than Hermitian quantum mechanics under the same energy constraint.

\subsection{Biorthogonal quantum mechanics}\label{Bi-qm}

Given an unconditional basis $\{e_n\}$ in a Hilbert space $\mathbb{H}$ with the unique dual basis $\{ e^*_n\}$ ($\{e_n, e^*_n\}$ is called a biorthogonal basis, cf. \cite{LT1977}), a densely defined closed operator $T$ in $\mathbb{H}$ can be expressed by
\beq\label{eq:BiOpRep}
T = \sum_{n,m} f_{n m} |e_n \rangle \langle e^*_m |
\eeq
with $f_{n m} = \langle e^*_n, T e_m \rangle,$ provided $\{e_n\} \subset \mathcal{D} (T).$ In biorthogonal quantum mechanics \cite{Brody2014}, such a operator $T$ is said to be a `biorthogonally Hermitian' operator with respect to the biorthogonal basis $\mathcal{F} =\{e_n, e^*_n\},$ if $\bar{f}_{n m} = f_{m n}$ for any $n,m.$

Given a biorthogonal basis $\mathcal{F}=\{e_n, e^*_n\},$ as assumed in \cite{Brody2014}, each observable is represented by a `biorthogonally Hermitian' operator $T$ relative to $\mathcal{F},$ and the expectation of $T$ at a state $\psi$ is defined by
\beq\label{eq:BiOpExp}
\langle T \rangle_{\psi, \mathcal{F}}  = \frac{\langle \tilde{\psi}, T \psi \rangle}{\langle \tilde{\psi}, \psi \rangle},
\eeq
where $\tilde{\psi} = \sum_n a_n e^*_n$ with $a_n = \langle e^*_n, \psi \rangle$ (noticing that $\psi = \sum_n a_n e_n$). Then $\langle T \rangle_{\psi, \mathcal{F}}$ defined by \eqref{eq:BiOpExp} is real for any `biorthogonally Hermitian' operator $T$ relative to a biorthogonal basis $\mathcal{F}$ and for all states $\psi,$ since
\be
\langle T \rangle_{\psi, \mathcal{F}}  = \frac{\sum_{n,m} \bar{a}_n a_m f_{n m}}{\sum_n |a_n|^2}.
\ee
Note that a `biorthogonally Hermitian' operator $T$ relative to a biorthogonal basis $\mathcal{F} =\{e_n, e^*_n\}$ is not necessarily Hermitian if $\{e_n\}$ is not an orthogonal basis, and so $\langle \psi, T \psi \rangle/ \langle \psi, T \psi \rangle$ is not real for most states $\psi,$ as noted in \cite{Brody2014}.

\begin{proposition}\label{prop:BiHop}\rm
Let $\mathcal{F} =\{e_n, e^*_n\}$ be a biorthogonal basis in $\mathbb{H}.$ A densely defined closed operator $T$ with $\{e_n\} \subset \mathcal{D} (T)$ is a `biorthogonally Hermitian' operator $T$ relative to $\mathcal{F}$ if and only if $T$ is self-adjoint with respect to the inner product $\langle \cdot, \cdot \rangle_G,$ where $G = \sum_n |e^* \rangle \langle e^*_n|$ is a metric operator associated with $T.$ Consequently, if a densely defined closed operator is `biorthogonally Hermitian' then it is a para-Hermitian operator. Moreover,
\be
\langle T \rangle_{\psi, \mathcal{F}} = \langle T \rangle_{G^\frac{1}{2} \psi, G}
\ee
for any nonzero $\psi \in \mathbb{H}.$
\end{proposition}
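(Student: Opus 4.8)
The plan is to transport the whole problem into the Hilbert space $\mathbb{H}_G := (\mathbb{H}, \langle\cdot,\cdot\rangle_G)$, in which the biorthogonal data become orthonormal, and then read off everything from matrix elements. First I would record the basic identities for $G = \sum_n |e_n^*\rangle\langle e_n^*|$. Since $\langle e_k^*, e_n\rangle = \delta_{kn}$, one gets $G e_n = e_n^*$ and, for all $\phi,\psi$,
\[
\langle\phi,\psi\rangle_G = \sum_n \overline{\langle e_n^*,\phi\rangle}\,\langle e_n^*,\psi\rangle ,
\]
so that $\langle e_m, e_n\rangle_G = \delta_{mn}$, i.e. $\{e_n\}$ is orthonormal for $\langle\cdot,\cdot\rangle_G$. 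To check that $G$ is genuinely a metric operator (bounded, strictly positive, boundedly invertible) I would exploit the Riesz-basis form of the unconditional basis: writing $e_n = S u_n$ with $\{u_n\}$ an orthonormal basis of $\mathbb{H}$ and $S$ bounded with bounded inverse, the dual is $e_n^* = (S^*)^{-1} u_n$, and a short computation gives $G = (SS^*)^{-1}$, which is exactly a metric operator.

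For the equivalence I would then compute matrix elements. Using $G e_n = e_n^*$, biorthogonality, and self-adjointness of $G$,
\[
\langle e_m, T e_n\rangle_G = \langle G e_m, T e_n\rangle = \langle e_m^*, T e_n\rangle = f_{mn}, \qquad \langle T e_m, e_n\rangle_G = \langle T e_m, e_n^*\rangle = \overline{f_{nm}} .
\]
Hence $T$ is symmetric for $\langle\cdot,\cdot\rangle_G$ on the span of $\{e_n\}$ precisely when $\overline{f_{nm}} = f_{mn}$, which is exactly the biorthogonal Hermiticity condition. Equivalently, setting $B = G^{1/2} T G^{-1/2}$ (closed because $T$ is closed and $G^{\pm 1/2}$ are bounded), the vectors $G^{1/2} e_n$ form an orthonormal basis of $\mathbb{H}$ with $\langle G^{1/2} e_m, B\,G^{1/2} e_n\rangle = f_{mn}$, so biorthogonal Hermiticity is the statement that the matrix of $B$ in this orthonormal basis is Hermitian. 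Since self-adjointness of $T$ for $\langle\cdot,\cdot\rangle_G$ is the same as self-adjointness of $B$ on $\mathbb{H}$, Proposition \ref{prop:paraHop} then upgrades this to ``$T$ is para-Hermitian'', which yields the ``Consequently'' clause.

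The hard part is the passage from a Hermitian matrix (symmetry) to genuine self-adjointness, since a closed symmetric operator need not be self-adjoint. I would dispatch the easy direction (self-adjoint $\Rightarrow$ biorthogonally Hermitian) purely by the matrix computation above. For the converse I would argue that the $G$-adjoint $T^{\dagger_G}$ again contains $\{e_n\}$ in its domain with matrix $(\overline{f_{nm}}) = (f_{mn})$, and then try to show $\mathcal{D}(T^{\dagger_G}) = \mathcal{D}(T)$ by expanding in the $G$-orthonormal basis $\{e_n\}$ and matching graph norms. This is exactly the step where the unconditional-basis hypothesis and the closedness of $T$ are essential, and where genuine care (or a further structural assumption, e.g. $T$ bounded or diagonalized by $\{e_n\}$, as is the case for the observables actually arising in biorthogonal quantum mechanics) is needed to close the gap between symmetry and self-adjointness.

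Finally, the expectation identity is a direct computation once one notices that $\tilde\psi = \sum_n \langle e_n^*,\psi\rangle e_n^* = G\psi$. Then
\[
\langle\tilde\psi, T\psi\rangle = \langle G\psi, T\psi\rangle = \langle\psi, T\psi\rangle_G, \qquad \langle\tilde\psi,\psi\rangle = \langle\psi, G\psi\rangle = \|\psi\|_G^2 ,
\]
so $\langle T\rangle_{\psi,\mathcal{F}} = \langle\psi, T\psi\rangle_G / \|\psi\|_G^2$. On the other side, applying the non-Hermitian Born formula \eqref{eq:BornRuleNH} to the vector $G^{1/2}\psi$ (which requires $G^{-1/2}(G^{1/2}\psi) = \psi \in \mathcal{D}(T)$) gives
\[
\langle T\rangle_{G^{1/2}\psi,G} = \frac{\langle G^{1/2}\psi, G^{1/2} T\psi\rangle}{\|G^{1/2}\psi\|^2} = \frac{\langle\psi, T\psi\rangle_G}{\|\psi\|_G^2},
\]
which is the same quantity, completing the identity.
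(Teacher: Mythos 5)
Your proposal is correct in substance and takes essentially the same route as the paper's own proof, which is considerably terser: the paper's entire argument for the equivalence is the one-line computation $\langle u, Tv\rangle_G = \langle u, \sum_{n,m} f_{nm}\,|e_n^*\rangle\langle e_m^*|\,v\rangle = \langle \sum_{n,m}\bar{f}_{nm}\,|e_m\rangle\langle e_n^*|\,u,\, Gv\rangle = \langle Tu, v\rangle_G$ under $\bar{f}_{nm}=f_{mn}$, which is your matrix-element identity $\langle e_m, Te_n\rangle_G = f_{mn}$, $\langle Te_m, e_n\rangle_G = \overline{f_{nm}}$ in packaged form, and its proof of the expectation identity is exactly your computation $\tilde{\psi}=G\psi$, $\langle \psi, T\psi\rangle_G = \sum_{n,m}\bar{a}_n a_m f_{nm}$, $\|G^{1/2}\psi\|^2 = \sum_n|a_n|^2$. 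Two places where you go beyond the paper are worth keeping. First, you actually verify that $G$ is a metric operator, via $e_n = Su_n$, $e_n^* = (S^*)^{-1}u_n$, $G = (SS^*)^{-1}$; the paper asserts this without argument, and your derivation makes visible the tacit hypothesis that the unconditional basis is semi-normalized (equivalently, a Riesz basis) --- for a rescaled unconditional basis the operator $G$ need not be bounded or boundedly invertible. Second, your reduction to $B = G^{1/2}TG^{-1/2}$ with the orthonormal basis $\{G^{1/2}e_n\}$, followed by Proposition \ref{prop:paraHop}, is the intended mechanism behind the ``Consequently'' clause, which the paper leaves implicit.

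The gap you flag --- that a closed $G$-symmetric operator need not be $G$-self-adjoint --- is genuine, but you should note that it is equally a gap in the paper's own proof: the displayed computation there establishes only the symmetry identity (and, strictly speaking, only on vectors for which the expansion of $Tv$ in the representation \eqref{eq:BiOpRep} converges appropriately), yet the proposition asserts self-adjointness with respect to $\langle\cdot,\cdot\rangle_G$; the passage from one to the other is made silently. So your candor does not put you behind the paper, and your observation that an extra structural hypothesis closes the gap (e.g.\ $T$ bounded, or $T$ diagonalized by the basis with real eigenvalues, which is the case for the observables actually arising in biorthogonal quantum mechanics) is the honest statement of what is proved. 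One final small point: in the expectation identity, ``for any nonzero $\psi\in\mathbb{H}$'' must be read with $\psi\in\mathcal{D}(T)$, which is precisely the domain condition $G^{-1/2}(G^{1/2}\psi)\in\mathcal{D}(T)$ that you correctly noted when applying \eqref{eq:BornRuleNH} to the vector $G^{1/2}\psi$.
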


\begin{proof}
Note that
\be
\langle u, T v \rangle_G = \langle u, \sum_{n,m} f_{n m} |e^*_n \rangle \langle e^*_m | v \rangle = \langle \sum_{n,m} \bar{f}_{n m} |e_m \rangle \langle e^*_n |u, G v \rangle = \langle T u, v \rangle_G
\ee
for any $u,v \in \mathbb{H},$ whenever $\bar{f}_{n m} = f_{m n}$ for any $n,m.$ This concludes the first assertion.

For the second assertion, since
\be
\langle \psi, T \psi \rangle_G = \langle \psi, G T \psi \rangle = \sum_{n,m} \bar{a}_n a_m f_{n m}
\ee
for $\psi = \sum_n a_n e_n,$ and $\|G^\frac{1}{2} \psi\| = \sum_n |a_n|^2$ (see Remark \ref{rk:NHQM}), this follows the required equality.
\end{proof}

By this proposition, we see that a densely defined closed `biorthogonally Hermitian' operator is a para-Hermitian operator, and the expectation value formula \eqref{eq:BiOpExp} reduces to the non-Hermitian Born formula \eqref{eq:BornRuleNH}. In this sense, we say that the mathematical formalism of non-Hermitian quantum mechanics given by Definition \ref{df:MathFNHQM} recovers biorthogonal quantum mechanics.

\section{Observable-geometric phase}\label{GeoPhase}

The notion of the geometric phase for a quantum system was introduced by Berry (cf. \cite{Berry1984, Simon1983}), on which there exist extensive works (cf. \cite{AA1987, BMKNZ2003, PLC2022, Sj2015} and references therein). The geometric phases for non-Hermitian quantum systems have been studied in \cite{CZ2012, GW1988, MM2008, SB1988, ZW2019}, etc. As usual, these geometric phases are associated with the quantum state. Recently, the notion of the geometric phase for the observable (the so-called observable-geometric phase) was introduced in \cite{Chen2020}, which is defined as a sequence of phases associated with a complete set of eigenstates of the observable. In this section, we will study the observable-geometric phase in the non-Hermitian setting via the mathematical framework of non-Hermitian quantum mechanics given by Definition \ref{df:MathFNHQM}. We first define the notion of an observable-geometric phase in the non-Hermitian case using the evolution system mentioned in Section \ref{Pre:EvoSys}. Then we give the geometric interpretation of it using the geometry of the non-Hermitian observable space based on the group $\mathcal{T} (\mathbb{H})$ of invertible bounded operators (see Section \ref{App} for the details).

Consider a non-Hermitian quantum system with a time-dependent Hamiltonian $\{h(t): t \in [0,T]\},$ where $h(t)$'s are all para-Hermitian operators. Assume that $h(t)$'s have the same domain $\mathbb{D}$ and there exists an evolution system $\{U(t,s) \in \mathcal{T} (\mathbb{H}): t, s \in [0,T]\}$ for $A(t) = - \mathrm{i}h(t)$ on $\mathbb{D}.$ By \eqref{eq:EvoSysEqu} and $U(t,s)^{-1} = U(s,t),$ we then have
the Schr\"{o}dinger equation
\begin{equation}\label{eq:SchrodingerEquPropagatorI}
\mathrm{i} \frac{d}{d t} U(t,s)\phi = h(t) U(t,s)\phi, \quad \forall \phi \in \mathbb{D},
\end{equation}
and the skew Schr\"{o}dinger equation
\begin{equation}\label{eq:SchrodingerEquPropagatorII}
\mathrm{i} \frac{d}{d t} U(s,t) \phi = - \tilde{h}_s(t)U(s,t) \phi, \quad \forall \phi \in \mathbb{D},
\end{equation}
where $\tilde{h}_s(t) = U(s,t) h(t) U(t, s).$ In this case, the evolution system $\{U(t,s): t, s \in [0,T]\}$ is also called the {\it time evolution operator} or {\it propagator} generated by $h(t)$ (cf. \cite[X.69]{RS1980II}).

Note that, by \eqref{eq:SchrodingerEquPropagatorI}, for any $s \in [0,T)$ and $\phi \in \mathbb{H},$ $\phi_s (t) = U (t,s) \phi$ is the unique solution of the time-dependent Schr\"{o}dinger equation
\begin{equation}\label{eq:SchrodingerEquTime}
\mathrm{i} \frac{d}{d t} \phi_s(t) = h(t) \phi_s (t),\quad \phi_s (s) = \phi.
\end{equation}
Given any observable $X_0,$ namely a para-Hermitian operator on $\mathbb{H},$ by \eqref{eq:SchrodingerEquPropagatorI} and \eqref{eq:SchrodingerEquPropagatorII} we conclude that $X(t) = U(0,t) X_0 U(t,0)$ is the unique solution of the time-dependent Heisenberg equation
\begin{equation}\label{eq:HeisenbergEquTime}
\mathrm{i} \frac{d X (t)}{d t} = [X (t), \tilde{h}(t)],\quad X(0) = X_0,
\end{equation}
where $\tilde{h}(t) = U(0,t) h(t) U(t, 0).$ If there exists $\tau \in (0, T)$ such that $X(\tau) = X(0),$ the time evolution of observable $X(t)$ is then called {\it cyclic} with period $\tau,$ and $X_0 = X(0)$ is said to be a cyclic observable.

Suppose that the observable $X_0$ has a non-degenerate eigenvalue associated with every eigenstate $\psi_n \in \mathbb{D}$ for $n \ge 1,$ and $X(t) = U(0,t) X_0 U(t,0)$ is cyclic with period $\tau \in (0,T),$ namely $X(\tau) = X_0.$ Then $U(0, \tau)\psi_n = e^{\mathrm{i} \theta_n}\psi_n$ with some $\theta_n \in \mathbb{C}$ for $n \ge 1.$ Denoting $\psi_n (t) = U(0,t) \psi_n$ for $n \ge 1,$ which are the eigenstates of $X(t),$ by \eqref{eq:SchrodingerEquPropagatorII} we conclude that $\psi_n (t)$ satisfies the skew (time-dependent) Schr\"{o}dinger equation
\begin{equation}\label{eq:SchrodingerEquTimeEigenstate}
\mathrm{i} \frac{d}{d t} \psi_n (t) = - \tilde{h}(t) \psi_n (t),\quad \psi_n (0) = \psi_n.
\end{equation}
Note that $\{\psi_n (t): n \ge 1\}$ is an unconditional basis in $\mathbb{H}$ for every $t.$ Let $\psi^*_n (t) = U^*(t,0) \psi^*_n$ for $n \ge 1,$ then $\{\psi^*_n (t): n \ge 1\}$ is the dual basis of $\{\psi_n (t): n \ge 1\}$ such that $\langle \psi^*_n (t),\psi_m (t) \rangle = \delta_{n m}$ (see \cite{LT1977} for the details of the unconditional basis).

For each $n \ge 1,$ we define
\begin{equation}\label{eq:Parallelvect}
|\tilde{\psi}_n (t) \rangle = e^{ - \mathrm{i} \int^t_0 \langle \psi^*_n (s) | \tilde{h} (s) | \psi_n (s) \rangle d s } |\psi_n (t) \rangle.
\end{equation}
Since
\be
|\tilde{\psi}_n (t) \rangle = e^{ - \mathrm{i} \int^t_0 \langle \psi^*_n (0) | h (s) | \psi_n (0) \rangle d s } |\psi_n (t) \rangle,
\ee
then $|\tilde{\psi}_n (\tau) \rangle = e^{\mathrm{i} \beta_n}|\psi_n (0) \rangle,$ where
\beq\label{eq:ObGP-NH}
\beta_n = \theta_n - \int^\tau_0 \langle \psi^*_n (0) | h(t) | \psi_n (0)\rangle d t.
\eeq
Moreover, from \eqref{eq:SchrodingerEquTimeEigenstate} we conclude
\begin{equation}\label{eq:ParallelCondVect}
\langle \tilde{\psi}_n (t) | \frac{d}{d t} |\tilde{\psi}_n (t) \rangle =0.
\end{equation}

Also, for any closed path
\begin{equation}\label{eq:ClosedCurve}
|\bar{\psi}_n (t) \rangle =  e^{- \mathrm{i} \alpha_n (t)} |\psi_n (t) \rangle,
\end{equation}
where $\alpha_n: [0, \tau) \mapsto \mathbb{C}$ is continuously differential and $\alpha_n (\tau) - \alpha_n (0) = \theta_n$ for every $n \ge 1,$ i.e., $|\bar{\psi}_n (\tau) \rangle = |\bar{\psi}_n (0) \rangle,$ we have
\begin{equation}\label{eq:ObGP-ClosedCurve}
\beta_n = \int^\tau_0\mathrm{i} \langle \bar{\psi}^*_n (t) | \frac{d}{d t} |\bar{\psi}_n (t) \rangle d t,
\end{equation}
where $\bar{\psi}^*_n (t) = e^{-\mathrm{i} \alpha_n (t)} |\psi^*_n (t) \rangle$ for every $n \ge 1.$

Following \cite{Chen2020}, this leads to the notion of the observable-geometric phase in the non-Hermitian setting as follows.

\begin{definition}\label{df:ObGeoPhase}
Using the above notations, the observable-geometric phases of the periodic evolution of observable $X(t)$ in a non-Hermitian quantum system are defined by
\begin{equation}\label{eq:q-GeoPhase}
\beta_n = \theta_n - \int^\tau_0 \langle \psi^*_n (0) | h(t) | \psi_n (0)\rangle d t
\end{equation}
which is uniquely defined up to $2 \pi k$ ($k$ is integer) for every $n \ge 1.$
\end{definition}

\begin{remark}\rm
\begin{enumerate}[{\rm 1)}]

\item Note that for every $n \ge 1,$ $\beta_n$ may be a complex number (see Example \ref{ex:qubit} below). This is different from the ones of a Hermitian quantum system as defined in \cite{Chen2020}.

\item If $h(t)$'s are all Hermitian, then $\psi^*_n (t) = \psi_n (t)$ and the observable-geometric phases $\beta_n$'s are all real and coincide with the ones defined in \cite{Chen2020}.

\item When some eigenvalues of the initial observable $X_0$ are degenerate as eigenstates, this would lead to the notion of non-Abelian observable-geometric phase as similar to the usual non-Abelian geometric phase (cf. \cite{Anandan1988,STAHJS2012}). We will discuss it elsewhere.

\item We can also discuss the adiabatic case of the observable-geometric phase in the non-Hermitian setting, as done in \cite{Chen2020} in the Hermitian case. We omit the details.

\end{enumerate}
\end{remark}

For illustrating the observable-geometric phase in a non-Hermitian quantum system, we consider a qubit case, namely the Hilbert space $\mathbb{H} = \mathbb{C}^2.$

\begin{example}\label{ex:qubit}\rm
Consider a non-Hermitian qubit system, whose Hamiltonian is $H = -\sigma^\omega_z$ (see Example \ref{ex:para-PaulMat}). Given a spin observable $X_0$ with two non-degenerate eigenstates
\be\label{eq:InitalObsQubit}
\psi_1 = \left ( \begin{matrix} \cos \frac{\phi}{2} \\
\sin \frac{\phi}{2}
\end{matrix}\right ),\;  \psi_2 = \left ( \begin{matrix} -\sin \frac{\phi}{2} \\
\cos \frac{\phi}{2}
\end{matrix}\right )
\ee
in $\mathbb{C}^2,$ $X(t) = U(0,t) X_0 U(t,0)$ satisfies Eq.\eqref{eq:HeisenbergEquTime} with
\be
\tilde{h} (t) = h (t)= -\sigma^\omega_z
\ee
and $U(t,0) = e^{\mathrm{i}t \sigma^\omega_z}.$ Note that $\sigma^\omega_z$ has eigenvalues $1$ and $-1,$ and the corresponding eigenstates are
\be\label{eq:Zeigenstate}
\left \{ \begin{split}
e_+ (\sigma^\omega_z) & = \left (
\begin{matrix}
1 \\
\frac{\mathrm{i}\sin \omega}{1+ \cos \omega}
\end{matrix} \right ),\\
e_- (\sigma^\omega_z) & = \left (
\begin{matrix}
-\frac{\mathrm{i}\sin \omega}{1+ \cos \omega} \\
1
\end{matrix} \right ),
\end{split}\right.\ee
and so,
\be\label{eq:ZstarEigenstate}
\left \{ \begin{split}
e^*_+ (\sigma^\omega_z) & =\frac{1+ \cos \omega}{2 \cos \omega} \left (
\begin{matrix}
1 \\
-\frac{\mathrm{i}\sin \omega}{1+ \cos \omega}
\end{matrix} \right ),\\
e^*_- (\sigma^\omega_z) & =\frac{1+ \cos \omega}{2\cos \omega} \left (
\begin{matrix}
\frac{\mathrm{i}\sin \omega}{1+ \cos \omega}\\
1
\end{matrix} \right ).
\end{split}
\right.\ee
Then, by Definition \ref{df:FunctCalculusparaHop} we have
\be\begin{split}
U(t,0)& =e^{\mathrm{i}t} |e_+ (\sigma^\omega_z) \rangle \langle e^*_+ (\sigma^\omega_z)| + e^{-\mathrm{i} t} |e_- (\sigma^\omega_z) \rangle \langle e^*_- (\sigma^\omega_z)|,\\
U(0,t)& =e^{-\mathrm{i}t} |e_+ (\sigma^\omega_z) \rangle \langle e^*_+ (\sigma^\omega_z)| + e^{\mathrm{i} t} |e_- (\sigma^\omega_z) \rangle \langle e^*_- (\sigma^\omega_z)|.
\end{split}\ee
Define $\psi_n (t) = U(0,t) \psi_n$ for $n=1,2,$ we have
\be
\begin{split}
\psi_1 (t) = & \frac{e^{-\mathrm{i}t}}{2 \cos \omega} \left ( \begin{matrix} (1 + \cos \omega) \cos \frac{\phi}{2} + \mathrm{i} \sin \omega \sin \frac{\phi}{2} \\
- \frac{\sin^2 \omega \sin \frac{\phi}{2}}{1+\cos \omega} + \mathrm{i} \sin \omega \cos \frac{\phi}{2}
\end{matrix}\right ) + \frac{e^{\mathrm{i}t}}{2 \cos \omega} \left ( \begin{matrix} - \frac{\sin^2 \omega \cos \frac{\phi}{2}}{1+\cos \omega} - \mathrm{i} \sin \omega \sin \frac{\phi}{2} \\
(1 + \cos \omega) \sin \frac{\phi}{2} - \mathrm{i} \sin \omega \cos \frac{\phi}{2}
\end{matrix}\right ),\\
\psi_2 (t) = & \frac{e^{-\mathrm{i}t}}{2 \cos \omega} \left ( \begin{matrix} - (1 + \cos \omega) \sin \frac{\phi}{2} + \mathrm{i} \sin \omega \cos \frac{\phi}{2} \\
- \frac{\sin^2 \omega \cos \frac{\phi}{2}}{1+\cos \omega} - \mathrm{i} \sin \omega \sin \frac{\phi}{2}
\end{matrix}\right ) + \frac{e^{\mathrm{i}t}}{2 \cos \omega} \left ( \begin{matrix} \frac{\sin^2 \omega \sin \frac{\phi}{2}}{1+\cos \omega} - \mathrm{i} \sin \omega \cos \frac{\phi}{2} \\
(1 + \cos \omega) \cos \frac{\phi}{2} + \mathrm{i} \sin \omega \sin \frac{\phi}{2}
\end{matrix}\right ),
\end{split}\ee
which satisfies the skew Schr\"{o}dinger equation \eqref{eq:SchrodingerEquTimeEigenstate}, namely
$$
\mathrm{i} \frac{d \psi_n (t) }{d t} =\sigma^\omega_z \psi_n (t),\quad n=1,2.
$$
The evolution $X(t)$ is periodic with period $\tau = \pi,$ precisely $\psi_n (\pi) = e^{\mathrm{i} \pi} \psi_n (0)$ for $n=1,2.$

Since $\psi^*_n = \psi_n$ for $n=1,2,$ by \eqref{eq:q-GeoPhase} we have
\be
\beta_1 = \pi + \int^\pi_0 \langle \psi_1 | \sigma^\omega_z | \psi_1\rangle d t = \pi + \frac{\cos \phi + \mathrm{i} \sin \omega \sin \phi}{\cos \omega} \pi = \pi \big (1 + \frac{\cos \phi}{\cos \omega} \big ) +\frac{\mathrm{i} \pi \sin \omega \sin \phi}{\cos \omega},
\ee
and
\be
\beta_2 = \pi + \int^\pi_0 \langle \psi_2 | \sigma^\omega_z | \psi_2 \rangle d t = \pi \big (1 - \frac{\cos \phi}{\cos \omega} \big ) - \frac{\mathrm{i} \pi\sin \omega \sin \phi}{\cos \omega}.
\ee
Both are complex numbers if $\omega, \phi \not= 0.$
\end{example}

Finally, we give a geometric interpretation of $\beta_n$'s defined as in \eqref{eq:q-GeoPhase}, involving the geometry of the non-Hermitian observable space developed in Section \ref{App}.

Given a point $O_0 = \{|e_n \rangle \langle e_n^*|\}_{n \ge 1}$ in $\tilde{\mathcal{W}} (\mathbb{H}),$ using the above notations, we define $\tilde{V} (t) \in \mathcal{T} (\mathbb{H})$ for $0 \le t \le \tau$ by
$$
\tilde{V} (t) = \sum_{n \ge 1} |\tilde{\psi}_n (t) \rangle \langle e_n^* |,
$$
where $|\tilde{\psi}_n (t) \rangle$'s are defined in \eqref{eq:Parallelvect}. Then,
$$
\tilde{C}_P: \; [0, \tau] \ni t \longmapsto \tilde{V} (t) \in \mathcal{T} (\mathbb{H})
$$
is a smooth $O_0$-lift of $C_W: [0, \tau] \ni t \mapsto O(t) = \{ |\psi_n (t) \rangle \langle \psi_n^* (t)|\}_{n \ge 1}.$ Since $\tilde{V}^{-1} (t) = \sum_{n \ge 1} |e_n \rangle \langle \tilde{\psi}^*_n (t)|,$ by \eqref{eq:ParallelCondVect}, we have
\begin{equation}\label{eq:CanonicalParallelCond}
\check{\Omega}_{\tilde{V} (t)} \Big [ \frac{d \tilde{V} (t)}{d t} \Big ] = 0
\end{equation}
for all $t \in [0,\tau],$ where $\check{\Omega}$ is the canonical quantum connection (cf. Example \ref{Ex:CanonicalConnection}). This means that $[0, \tau] \ni t \mapsto \tilde{V} (t)$ is the parallel transportation along $C_W$ with respect to the {\it canonical connection} $\check{\Omega}$ on $\xi_{O_0}.$ Therefore,  $\tilde{C}_P$ is the {\it horizontal} $O_0$-lift of $C_W$ with respect to $\check{\Omega}$ in the principal bundle $\xi_{O_0}$ such that
\be
\tilde{V} (\tau) |e_n \rangle = |\tilde{\psi}_n (\tau) \rangle = e^{\mathrm{i} \beta_n} |\psi_n \rangle,\quad \forall n \ge 1,
\ee
and so
\begin{equation}\label{eq:HolonomyUnitaryOper}
\tilde{V} (\tau) = \sum^d_{n = 1} e^{\mathrm{i} \beta_n} | \psi_n \rangle \langle e_n^* |
\end{equation}
is the holonomy element associated with the connection $\check{\Omega}, C_W,$ and $V_0 = \sum_{n \ge 1}| \psi_n \rangle \langle e_n^* |$ in $\xi_{O_0}.$

In conclusion, we have the following theorem.

\begin{theorem}\label{thm:q-GeoPhase}\rm
\begin{enumerate}[\rm (1)]

\item For every $n \ge 1,$ the geometric phase $\beta_n$ defined in \eqref{eq:q-GeoPhase} is given by
\begin{equation}\label{eq:q-GeoPhaseExpression}
\beta_n =\langle e^*_n | \mathrm{i} \int^\tau_0 \check{\Omega}_{\bar{V} (t)} \Big [ \frac{d \bar{V} (t)}{d t} \Big ] d t |e_n \rangle = \langle e^*_n | \mathrm{i} \oint_{C_W} \bar{V}^{-1} \star d \bar{V} |e_n \rangle,
\end{equation}
where $\bar{C}_P: [0, \tau] \ni t \mapsto \bar{V} (t) \in \mathcal{T} (\mathbb{H})$ corresponds to any of the closed smooth $O_0$-lifts of $C_W$ with $\bar{V} (0) = \bar{V}_0,$ and $\check{\Omega}_V = V^{-1} \star d V$ is the canonical connection on $\xi_{O_0} (\mathbb{H}).$ Thus, $\beta_n$'s are independent of the choice of the time parameterization of $V (t),$ namely the speed with which $V (t)$ traverses its closed path. It is also independent of the choice of the Hamiltonian as long as the Heisenberg equations \eqref{eq:HeisenbergEquTime} involving these Hamiltonians describe the same closed path $C_W$ in $\tilde{\mathcal{W}} (\mathbb{H}).$

\item The set $\{ \beta_n: n \ge 1\}$ is independent of the choice of the starting point $V_0.$

\item The set $\{ \beta_n: n \ge 1\}$ is independent of the choice of the measurement point $O_0.$ Therefore, this number set is considered to be a set of geometric invariants for $C_W.$

\end{enumerate}
\end{theorem}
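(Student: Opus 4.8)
The plan is to anchor all three parts to the holonomy picture already set up just before the theorem, where the horizontal $O_0$-lift $\tilde{C}_P$ of $C_W$ ends at the holonomy operator $\tilde V(\tau)=\sum_n e^{\mathrm{i}\beta_n}|\psi_n\rangle\langle e^*_n|$ of \eqref{eq:HolonomyUnitaryOper}, relative to the frame $V_0=\sum_n|\psi_n\rangle\langle e^*_n|$. The whole argument rests on the observation that the $\beta_n$ are exactly the eigenvalue-phases of the holonomy group element $V_0^{-1}\tilde V(\tau)=\sum_n e^{\mathrm{i}\beta_n}|e_n\rangle\langle e^*_n|$, together with the naturality of the canonical connection $\check\Omega$ established in Section \ref{App}.

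For part (1) I would first evaluate the connection $1$-form on an arbitrary closed smooth $O_0$-lift $\bar C_P:\,t\mapsto \bar V(t)=\sum_m|\bar\psi_m(t)\rangle\langle e^*_m|$. Writing $\bar V^{-1}(t)=\sum_k|e_k\rangle\langle\bar\psi^*_k(t)|$ and recalling that $\check\Omega_V=V^{-1}\star dV$ applies the projection $\star$ onto the diagonal part relative to $O_0$ (the Lie algebra of the structure group), one obtains
\[
\check\Omega_{\bar V(t)}\!\left[\frac{d\bar V(t)}{dt}\right]=\sum_n \langle\bar\psi^*_n(t)|\tfrac{d}{dt}|\bar\psi_n(t)\rangle\,|e_n\rangle\langle e^*_n|,
\]
so that $\langle e^*_n|\check\Omega_{\bar V(t)}[\,d\bar V(t)/dt\,]|e_n\rangle=\langle\bar\psi^*_n(t)|\tfrac{d}{dt}|\bar\psi_n(t)\rangle$ by biorthogonality of $\{e_n,e^*_n\}$. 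Multiplying by $\mathrm{i}$, integrating over $[0,\tau]$ and invoking \eqref{eq:ObGP-ClosedCurve} gives the first equality in \eqref{eq:q-GeoPhaseExpression}; the second equality merely rewrites the time integral of the pulled-back $1$-form as the line integral $\oint_{C_W}\bar V^{-1}\star d\bar V$, which is invariant under reparameterization of $t$. Independence of the Hamiltonian is then immediate, since two families $h(t)$ whose Heisenberg equations \eqref{eq:HeisenbergEquTime} trace the same closed curve $C_W$ yield lifts covering the same path, hence the same line integral.

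For part (2) I would use equivariance of $\check\Omega$. Replacing the starting point $V_0$ by $V_0 g$, with $g$ in the structure group of $\xi_{O_0}$, sends the horizontal lift $\tilde V(t)$ to $\tilde V(t)g$, so the holonomy group element is conjugated, $V_0^{-1}\tilde V(\tau)\mapsto g^{-1}\big(V_0^{-1}\tilde V(\tau)\big)g$. Since the numbers $e^{\mathrm{i}\beta_n}$ are its spectrum and spectra are conjugation-invariant, the set $\{\beta_n\bmod 2\pi\}$ is unchanged. For part (3) the identical mechanism operates at the base: a change of measurement point $O_0\mapsto O_0'$ is realized by a global intertwiner $W\in\mathcal T(\mathbb H)$ carrying $O_0$ to $O_0'$, and the naturality of the canonical connection (Section \ref{App}) yields a connection-preserving bundle isomorphism $\xi_{O_0}\cong\xi_{O_0'}$; under it the holonomy around $C_W$ is once more conjugated, so its spectrum, and hence $\{\beta_n\}$, is preserved. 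This is precisely what makes $\{\beta_n\}$ a geometric invariant of $C_W$.

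The analytic identity in part (1) is essentially bookkeeping once $\star$ is written out, so I expect the real work to lie in justifying the naturality and equivariance of $\check\Omega$ on the infinite-dimensional bundle $\xi_{O_0}(\mathbb H)$ over $\mathcal T(\mathbb H)$: one must verify from Section \ref{App} that $\check\Omega=V^{-1}\star dV$ transforms correctly under right translation by the structure group (needed for (2)) and under the global frame change $W$ (needed for (3)), and in particular that the diagonal projection $\star$ intertwines with these maps. Establishing these transformation laws cleanly is the main obstacle; the phase computation and the reparameterization argument are comparatively routine.
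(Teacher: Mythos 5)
Your proposal is correct, and its skeleton is the paper's: for (1) you expand the canonical connection on a closed lift, read off the diagonal entry $\langle\bar\psi^*_n(t)|\frac{d}{dt}|\bar\psi_n(t)\rangle$ by biorthogonality, and reduce to \eqref{eq:ObGP-ClosedCurve}, exactly as the paper does; for (2) and (3) you use the same two bundle maps the paper uses, namely right translation of the horizontal lift by $G\in\mathcal{G}_{O_0}$ (legitimate by the equivariance condition \eqref{eq:GaugeTransConnection}) and the global intertwiner $W$ with $W|e_n\rangle=|e'_n\rangle$. Where you diverge is only in the final step: the paper computes explicitly that $\check{V}(\tau)|e_n\rangle=e^{\mathrm{i}\beta_{\sigma(n)}}\check{V}(0)|e_n\rangle$, so the phases are merely permuted under a change of starting frame, and that $\tilde{V}'(\tau)|e'_n\rangle=e^{\mathrm{i}\beta_n}\tilde{V}'(0)|e'_n\rangle$ under the intertwiner; you instead package both as conjugation-invariance of the holonomy element $V_0^{-1}\tilde{V}(\tau)=\sum_n e^{\mathrm{i}\beta_n}|e_n\rangle\langle e^*_n|$. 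That packaging is sound, with one caveat in infinite dimensions: since this operator is diagonal with respect to an unconditional basis, you should invoke invariance of the point spectrum (whose elements are exactly the $e^{\mathrm{i}\beta_n}$, well defined even for complex $\beta_n$), not the full spectrum, which is the closure of that set; also note that conjugation preserves only the set $\{\beta_n \bmod 2\pi\}$, which suffices since Definition \ref{df:ObGeoPhase} fixes $\beta_n$ only up to $2\pi k$, whereas the paper's permutation computation tracks multiplicities for free. Finally, the naturality you flag as the main obstacle is genuine but routine to verify, and in fact it slightly tightens the paper's argument for (3): since the dual basis of $\{W e_n\}$ is $\{(W^*)^{-1}e^*_n\}$, one checks directly that $W\,\check{\Omega}_{PW^{-1}\cdot W}\,$ transported as $\Omega'_P(Q)=W\check{\Omega}_{PW}(QW)W^{-1}$ coincides with the canonical $O'_0$-connection $P^{-1}\star' Q=\sum_n\langle (e'_n)^*|P^{-1}Q|e'_n\rangle\,|e'_n\rangle\langle (e'_n)^*|$; the paper only pushes the connection forward and compares holonomies without identifying the transported connection with the canonical one on $\xi_{O'_0}$, so your computation closes that small gap.
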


\begin{remark}\label{rk:q-ObsGeoPhase}\rm
In the definition \eqref{eq:q-GeoPhase}, the $\beta_n$'s are in fact independent of the choice of measurement points $O_0$ and background geometry over the fiber bundle $\xi_{O_0}.$
\end{remark}

\begin{proof}
(1).\; Let $\bar{C}_P: [0, \tau] \ni t \longmapsto \bar{V} (t) \in \mathcal{F}^{O(t)}_{O_0}$ be a smooth $O_0$-lift of $C_W$ such that $\bar{V} (\tau) = \bar{V} (0)=\bar{V}_0.$ By definition, $\bar{V} (t) = \sum_{n \ge 1} |\bar{\psi}_n (t) \rangle \langle e_n^* |$ and $\bar{V}^{-1} (t) = \sum_{n \ge 1} |e_n \rangle \langle \bar{\psi}^*_n (t) |,$ where $\bar{\psi}_n (\tau) = \bar{\psi}_n (0)$ for all $n \ge 1.$ By \eqref{eq:ObGP-ClosedCurve}, we
conclude \eqref{eq:q-GeoPhaseExpression}.

(2).\; For any $\check{V}_0 \in \mathcal{F}^{O(0)}_{O_0}$ there exists some $G = \sum_{n\ge 1} c_n| e_{\sigma (n)} \rangle \langle e^*_n | \in \mathcal{G}_{O_0}$ with $\sigma \in \Pi (d)$ and $(c_n)_{n \ge 1} \in \mathcal{G}_\8 (\mathbb{C}_*)$ such that $\check{V}_0 = V_0 G.$ Then $\check{C}_P: [0, T] \ni t \longmapsto \check{V}(t) = \tilde{V} (t) G$ is the horizontal $O_0$-lift of $C_W$ with the starting point $\check{V}(0) = V_0 G$ such that $\check{V}(T) |e_n\rangle = e^{\mathrm{i} \beta_{\sigma (n)}} \check{V} (0) |e_n\rangle$ for all $n \ge 1.$ Thus, the set $\{\beta_n: n \ge 1 \}$ is invariant for any starting point $V_0 \in \mathcal{F}^{O(0)}_{O_0}.$ Combining this fact with \eqref{eq:q-GeoPhaseExpression} yields
$$
\{\beta_n: n \ge 1 \} = \Big \{ \langle e^*_n | \mathrm{i} \oint_{C_W} \bar{V} \star d \bar{V} | e_n \rangle:\; n \ge 1 \Big \}
$$
for any closed smooth $O_0$-lift $\bar{C}_P$ of $C_W.$ Therefore, the observable-geometric phases are independent of the choice of the starting point and only depends on the geometry of the curve $C_W$ with respect to the $O_0$-connection $\check{\Omega}.$

(3).\; Let $\tilde{C}_P: [0, \tau] \ni t \longmapsto \tilde{V} (t)$ be the horizontal $O_0$-lift of $C_W$ with respect to $\Omega$ with the starting point $\tilde{V} (0) = V_0.$ For any $O'_0 = \{ | e_n' \rangle \langle (e'_n)^*|: n \ge 1\} \in \tilde{\mathcal{W}} (\mathbb{H})$ there exists some $T \in \mathcal{T} (\mathbb{H})$ such that $O'_0= T O_0 T^{-1}$ with $| e_n' \rangle = T | e_n \rangle$ for $n\ge 1.$ Then $\Omega' = \{\Omega'_P: P \in \mathcal{T} (\mathbb{H}) \}$ is a $O'_0$-connection on $\xi_{O'_0},$ where $\Omega'_P (Q) = T \Omega_{P T} (Q T) T^{-1}$ for any $P \in \mathcal{T} (\mathbb{H})$ and for all $Q \in T_P \xi_{O'_0} (\mathbb{H}).$ By computation, we conclude that $\tilde{C}'_P: [0, T] \ni t \longmapsto \tilde{V}' (t) = \tilde{V} (t) T^{-1}$ is the horizontal $O'_0$-lift of $C_W$ with respect to $\Omega'$ with the starting point $\tilde{V}' (0) = V_0 T^{-1}.$ Therefore,
$$
\tilde{V}' (\tau) | e_n' \rangle = \tilde{V} (\tau) T^{-1} | e_n' \rangle = \tilde{V} (\tau) | e_n \rangle = e^{\mathrm{i} \beta_n} \tilde{V} (0) | e_n \rangle = e^{\mathrm{i} \beta_n} \tilde{V}' (0) | e_n' \rangle,
$$
and hence the set of the geometric phases of $C_W$ with respect to $\Omega'$ is the same as that of $\Omega.$
\end{proof}
\begin{remark}\label{rk:RH}\rm
In a connection between quantum mechanics and mathematics, there is the Hilbert-P\'{o}lya conjecture that the nontrivial zeros of the Riemann zeta function correspond to the eigenvalues of a quantum mechanical Hamiltonian (cf. \cite{Sierra2019} and references therein). That the Hamiltonian is once proved to be Hermitian will confirm the Riemann hypothesis (cf. \cite{BBM2017}). Concerning with the Hilbert-P\'{o}lya conjecture, we would conjecture that the nontrivial zeros of the Riemann zeta function correspond to the observable-geometric phases of a cyclic (non-Hermitian) quantum evolution. This should shed light on the understanding of the Riemann hypothesis from the physical point of view.
\end{remark}
\section{Summary}\label{Sum}

Based on a theorem of Antoine and Trapani \cite{AT2014}, we introduce the notions of para-Hermitian and para-unitary operators, and prove a Stone type theorem for the one-parameter group of the para-unitary operators. In terms of the para-Hermitian and para-unitary operators, we present a mathematical formalism of non-Hermitian quantum mechanics, including the five postulates: the state postulate, the observable postulate, the measurement postulate, the evolution postulate, and the composite-systems postulate. These postulates are non-Hermitian analogies of those found in the Dirac-von Neumann formalism of quantum mechanics (the state and composite-systems postulates are the same in both formalisms). In particular, in the measurement postulate, we give the Born formula in the non-Hermitian setting. Indeed, our formalism is an extension of the Dirac-von Neumann formalism of quantum mechanics to the non-Hermitian setting. In the framework of this formalism, we generalize the notion of the observable-geometric phase \cite{Chen2020} to the non-Hermitian setting. We hope this formalism could play a role of a mathematical foundation for non-Hermitian quantum mechanics and its application to quantum computation and quantum information theory.

\section{Appendix: Geometry of non-Hermitian observable space}\label{App}

\subsection{Non-Hermitian observable space}\label{ObSpaceNH}

A complete decomposition in $\mathbb{H}$ is defined as a set $\{ | n \rangle \langle n^*|: n \ge 1 \}$ of projections of rank one satisfying
\begin{equation}\label{eq:OrthDecomp}
\sum_{n \ge 1} | n \rangle \langle n^*| = I,\quad \langle n^* |m \rangle = \delta_{n m}.
\end{equation}
We denote by $\tilde{\mathcal{W}} (\mathbb{H})$ the set of all complete decompositions in $\mathbb{H}.$ Note that a complete decomposition $O= \{ | n \rangle \langle n^*|: n \ge 1 \}$ determines uniquely a unconditional basis $\{| n \rangle \}_{n \ge 1}$ up to phases for basic vectors (we refer to \cite{LT1977} for the details of the unconditional basis). Conversely, a unconditional basis uniquely defines a complete decomposition in $\mathbb{H}.$ Since a non-Hermitian observable $X$ represented by a para-Hermitian operator with discrete spectrum has a complete decomposition, the evolution of a non-Hermitian quantum system by the Heisenberg equation
\begin{equation}\label{eq:HeisenbergEqu}
\mathrm{i} \frac{d X}{d t} = [X, H]
\end{equation}
for the observable $X,$ gives rise to a curve in  $\tilde{\mathcal{W}} (\mathbb{H}).$ This is the reason why $\tilde{\mathcal{W}} (\mathbb{H})$ can be regarded as the observable space, whose geometry induces a geometric structure for a non-Hermitian quantum system.

We equip $\tilde{\mathcal{W}} (\mathbb{H})$ with the Hausdorff distance $D_{\tilde{\mathcal{W}}}$ defined by
\begin{equation}\label{eq:HausdDist}
D_{\tilde{\mathcal{W}}} ( O, O') = \max_{a \in O} \inf_{b \in O'} \| a - b \| + \max_{a \in O'} \inf_{b \in O} \| a - b \|,\quad \forall  O, O' \in \tilde{\mathcal{W}}(\mathbb{H}).
\end{equation}
Then $\tilde{\mathcal{W}} (\mathbb{H})$ is a complete metric space under the distance $D_{\tilde{\mathcal{W}}}.$ Also, we define $\tilde{\mathcal{X}} (\mathbb{H})$ to be the set of all ordered sequences $(|n\rangle \langle n^* |)_{n \ge 1},$ where $\{|n\rangle \langle n^* |:\; n \ge 1\}$'s are all complete decompositions in $\mathbb{H}.$ We equip $\tilde{\mathcal{X}} (\mathbb{H})$ with the distance $D_{\tilde{\mathcal{X}}}$ defined as follows: For $(|n\rangle \langle n^* |)_{n \ge 1}, (|\bar{n}\rangle \langle \bar{n}^* |)_{n \ge 1} \in \tilde{\mathcal{X}} (\mathbb{H}),$
$$
D_{\tilde{\mathcal{X}}} ((|n\rangle \langle n^* |)_{n \ge 1}, (|\bar{n}\rangle \langle \bar{n}^* |)_{n \ge 1}) = \max_{n \ge 1} \| |n\rangle \langle n^* | - |\bar{n}\rangle \langle \bar{n}^* | \|.
$$
Then $\tilde{\mathcal{X}} (\mathbb{H})$ is a complete metric space under $D_{\tilde{\mathcal{X}}}$ such that
$$
\tilde{\mathcal{W}} (\mathbb{H}) \cong \frac{\tilde{\mathcal{X}} (\mathbb{H})}{\Pi (d)},
$$
where $\Pi (d)$ denotes the permutation group of $d$ objects ($d$ denotes the dimension of $\mathbb{H}$), which has a representation in $\mathbb{H}$ as follows: For a given unconditional basis $\{|e_n\rangle \}_{n \ge 1}$ of $\mathbb{H},$
\begin{equation}\label{eq:PermutationGroupRepresentation}
\Pi (d) = \bigg \{ V_\sigma = \sum_{n \ge 1} |e_{\sigma (n)} \rangle \langle e_n^* | \in \mathcal{T} (\mathbb{H}):\; \forall \sigma \in \Pi (d) \bigg \}.
\end{equation}

We denote by
\be
\mathcal{G}_\8 (\mathbb{C}_*) = \big \{ (c_n)_{n \ge 1} \in \mathbb{C}_*^d:\; 0< \inf_{n \ge 1} |c_n| \le \sup_{n \ge 1} |c_n|< \8 \big \}.
\ee
Then $\mathcal{G}_\8 (\mathbb{C}_*)$ is an abelian topological group under pointwise multiplication and has a representation in $\mathbb{H}$ as follows: For a given unconditional basis $\{|e_n\rangle \}_{n \ge 1}$ of $\mathbb{H},$
\begin{equation}\label{eq:ScalarGroupRepresentation}
\mathcal{G}_\8 (\mathbb{C}_*) = \bigg \{ V_{(c_n)} = \sum_{n \ge 1} c_n |e_n \rangle \langle e_n^* | \in \mathcal{T} (\mathbb{H}):\; \forall (c_n)_{n \ge 1} \in \mathcal{G}_\8 (\mathbb{C}_*) \bigg \}.
\end{equation}

\begin{proposition}\label{prop:TopoSpaceQ-system}\rm
For a given unconditional basis $\{|e_n\rangle \}_{n \ge 1}$ of $\mathbb{H},$
$$
\tilde{\mathcal{W}} (\mathbb{H}) \cong \{ \mathcal{G} (V):\; V \in \mathcal{T} (\mathbb{H})\}
$$
with
\begin{equation}\label{eq:FiberForm}
\mathcal{G} (V) = \Big \{ \sum_{n \ge 1} c_n |\sigma (n) \rangle \langle e_n^* |:\; \forall \sigma \in \Pi (d), \forall (c_n)_{n \ge 1} \in \mathcal{G}_\8 (\mathbb{C}_*) \Big \},
\end{equation}
where $|n\rangle = V |e_n\rangle$ for any $n \ge 1,$ and the distance between two elements is defined by
$$
d (\mathcal{G} (V), \mathcal{G} (V')) = \inf \{ \| K - G \|: K \in \mathcal{G} (V), G \in \mathcal{G} (V') \}.
$$
\end{proposition}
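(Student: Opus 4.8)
The plan is to realize $\tilde{\mathcal{W}}(\mathbb{H})$ as the orbit space of $\mathcal{T}(\mathbb{H})$ under the monomial group, so that the sets $\mathcal{G}(V)$ become fibers. Write $O_0 = \{|e_n\rangle\langle e_n^*|: n \ge 1\}$ for the complete decomposition attached to the fixed basis, and let $\mathcal{G}_{O_0}$ be the group of bounded monomial operators $\sum_n c_n|e_{\sigma(n)}\rangle\langle e_n^*|$ with $\sigma \in \Pi(d)$ and $(c_n) \in \mathcal{G}_\infty(\mathbb{C}_*)$, as in the proof of Theorem \ref{thm:q-GeoPhase}(2). Define the orbit map $\Phi(V) = \{V|e_n\rangle\langle e_n^*|V^{-1}: n \ge 1\}$ for $V \in \mathcal{T}(\mathbb{H})$. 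First I would check $\Phi(V) \in \tilde{\mathcal{W}}(\mathbb{H})$: putting $|n\rangle = V|e_n\rangle$ and $\langle n^*| = \langle e_n^*|V^{-1}$, biorthogonality $\langle n^*, m\rangle = \langle e_n^*, V^{-1}V e_m\rangle = \delta_{nm}$ and completeness $\sum_n |n\rangle\langle n^*| = V(\sum_n |e_n\rangle\langle e_n^*|)V^{-1} = I$ are immediate, with unconditional convergence inherited from $\{|e_n\rangle\}$. Since $\sum_n c_n|\sigma(n)\rangle\langle e_n^*| = V\big(\sum_n c_n|e_{\sigma(n)}\rangle\langle e_n^*|\big)$, this also identifies $\mathcal{G}(V)$ with the left coset $V\mathcal{G}_{O_0}$.

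Second, I would identify the fibers of $\Phi$ with these cosets. If $\Phi(V) = \Phi(V')$ and $W := V^{-1}V'$, then equality of the two sets of rank-one projections yields a permutation $\sigma$ with $W\,|e_{\sigma(n)}\rangle\langle e_{\sigma(n)}^*|\,W^{-1} = |e_n\rangle\langle e_n^*|$ for all $n$. Conjugation by $W$ therefore carries the line $\mathbb{C}|e_{\sigma(n)}\rangle$ onto $\mathbb{C}|e_n\rangle$, forcing $W$ to be the monomial operator $W = \sum_n c_n|e_n\rangle\langle e_{\sigma(n)}^*|$; boundedness of $W$ and $W^{-1}$ (relative to the Riesz constants of $\{|e_n\rangle\}$) then forces $0 < \inf_n|c_n| \le \sup_n|c_n| < \infty$, i.e. $(c_n) \in \mathcal{G}_\infty(\mathbb{C}_*)$, so $W \in \mathcal{G}_{O_0}$. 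Conversely any such $W$ fixes $O_0$ as a set, so $\Phi$ descends to an injection of $\{\mathcal{G}(V): V \in \mathcal{T}(\mathbb{H})\} = \mathcal{T}(\mathbb{H})/\mathcal{G}_{O_0}$ into $\tilde{\mathcal{W}}(\mathbb{H})$.

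Third, for surjectivity I would invoke the classical fact that a normalized unconditional basis of a Hilbert space is a Riesz basis, hence equivalent to an orthonormal basis. Given $O = \{|n\rangle\langle n^*|\} \in \tilde{\mathcal{W}}(\mathbb{H})$, choose vector representatives with $\||n\rangle\| = \|e_n\|$; then $\{|n\rangle/\||n\rangle\|\}$ and $\{|e_n\rangle/\|e_n\|\}$ are Riesz bases, each sent to a fixed orthonormal basis by operators $T, S \in \mathcal{T}(\mathbb{H})$, and one checks $V = \sum_n|n\rangle\langle e_n^*| = T^{-1}S \in \mathcal{T}(\mathbb{H})$ with $\Phi(V) = O$. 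The residual scalar freedom in the choice of representatives lies in $\mathcal{G}_\infty(\mathbb{C}_*) \subset \mathcal{G}_{O_0}$, so $\Phi$ induces a well-defined bijection $\mathcal{T}(\mathbb{H})/\mathcal{G}_{O_0} \cong \tilde{\mathcal{W}}(\mathbb{H})$.

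Finally, I would upgrade this bijection to a homeomorphism between the quotient metric $d$ and the Hausdorff metric $D_{\tilde{\mathcal{W}}}$. One direction is routine: inversion is norm-continuous on $\mathcal{T}(\mathbb{H})$, so the projections $V|e_n\rangle\langle e_n^*|V^{-1}$ depend norm-continuously on $V$; if $d(\mathcal{G}(V), \mathcal{G}(V'))$ is small there are representatives close in norm realizing the same decompositions, whence $D_{\tilde{\mathcal{W}}}(\Phi(V), \Phi(V'))$ is small. \emph{The hard part} will be the reverse continuity, namely that a small Hausdorff perturbation of a decomposition can be realized by a small perturbation of \emph{some} coset representative. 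This is a local-lifting problem for the quotient map $\mathcal{T}(\mathbb{H}) \to \tilde{\mathcal{W}}(\mathbb{H})$, and the delicate points are matching indices and absorbing the permutation and scalar freedoms simultaneously while keeping the chosen representative vectors uniformly bounded away from $0$, so that the induced rank-one projections stay uniformly controlled. I expect to resolve it by working near a base point $\Phi(V)$, pairing nearby projections through the bijection that minimizes the Hausdorff matching, and assembling the resulting monomial correction inside $\mathcal{G}_{O_0}$ to produce a continuous local section; this yields a continuous inverse and completes the identification.
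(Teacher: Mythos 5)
Your construction is, at its core, the same as the paper's: realize $\tilde{\mathcal{W}}(\mathbb{H})$ as the orbit space of $\mathcal{T}(\mathbb{H})$ under the monomial group, so that each $\mathcal{G}(V)$ is the coset $V\mathcal{G}_{O_0}$. The only structural difference is that the paper factors the quotient in two stages --- it proves $\tilde{\mathcal{X}}(\mathbb{H}) \cong \mathcal{T}(\mathbb{H})/\mathcal{G}_{\infty}(\mathbb{C}_*)$ for ordered decompositions and then invokes the identification $\tilde{\mathcal{W}}(\mathbb{H}) \cong \tilde{\mathcal{X}}(\mathbb{H})/\Pi(d)$ set up earlier in the appendix --- whereas you quotient by the full group $\mathcal{G}_{O_0}$ in one step. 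Where you genuinely add value is in rigor: the paper's proof consists of defining the map $T[(|n\rangle\langle n^*|)_{n\ge 1}] \mapsto [V]$, ``where $V$ is the invertible operator so that $|n\rangle = V|e_n\rangle$,'' and asserting that $T$ is surjective and isometric. The existence of such a $V \in \mathcal{T}(\mathbb{H})$ is exactly your Riesz-basis argument (a semi-normalized unconditional basis of a Hilbert space is a Riesz basis), which the paper leaves entirely implicit; likewise your stabilizer computation (that $\Phi(V)=\Phi(V')$ forces $V^{-1}V'$ to be a bounded invertible monomial operator, hence in $\mathcal{G}_{O_0}$) is the well-definedness and injectivity that the paper never spells out. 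One caveat applying to both you and the paper: the claims that $\mathcal{G}_{O_0} \subset \mathcal{T}(\mathbb{H})$ and that the stabilizer scalars $(c_n)$ land in $\mathcal{G}_{\infty}(\mathbb{C}_*)$ use semi-normalization of $\{|e_n\rangle\}$, since for a badly normalized basis the ratios $\|e_n\|/\|e_{\sigma(n)}\|$ can spoil boundedness; your parenthetical about Riesz constants shows you are aware of this.

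On the metric step: the part you defer as ``the hard part'' --- that Hausdorff-small perturbations of a decomposition lift to norm-small perturbations of some coset representative --- is precisely the content of the paper's single unproved clause that $T$ is ``isometric.'' The paper offers no local-lifting argument, no matching of projections, and indeed no verification that the inf-distance between cosets satisfies the triangle inequality (the operator norm is not bi-invariant under $\mathcal{T}(\mathbb{H})$, so this is not automatic), let alone that it is literally isometric to $D_{\tilde{\mathcal{X}}}$ rather than merely topologically equivalent. So your proposal is not missing anything the paper actually supplies; but as a standalone complete proof it is unfinished at exactly that point, and you should be aware that the program you sketch (minimal Hausdorff matching, uniform lower bounds on representative vectors, assembling the monomial correction continuously) is real work that the published proof does not do. If one reads the proposition's $\cong$ as a bijection with the quotient distance simply \emph{defined} on the coset side --- which is the most charitable reading of the statement --- then your first three steps already prove everything the paper proves, and your fourth step is an honest acknowledgment of what would remain for a genuinely metric identification.
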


\begin{proof}
We need to prove that
$$
\mathcal{X} (\mathbb{H}) \cong \frac{\mathcal{T} (\mathbb{H})}{\mathcal{G}_\8 (\mathbb{C}_*)},
$$
from which we conclude the result.

Indeed, for a fixed unconditional basis $\{|e_n\rangle \}_{n \ge 1}$ of $\mathbb{H},$ we have that $\frac{\mathcal{T} (\mathbb{H})}{\mathcal{G}_\8 (\mathbb{C}_*)} = \{ [V]:\; V \in \mathcal{T} (\mathbb{H}) \}$ with
$$
[V] = V \cdot \mathbb{C}_*^d = \Big \{ \sum_{n \ge 1} c_n |n \rangle \langle e_n^* |:\; \forall (c_n)_{n \ge 1} \in \mathcal{G}_\8 (\mathbb{C}_*) \Big \},
$$
where $|n\rangle = V |e_n\rangle$ for $n \ge 1.$ Define $T: \tilde{\mathcal{X}} (\mathbb{H}) \mapsto \frac{\mathcal{T} (\mathbb{H})}{\mathcal{G}_\8 (\mathbb{C}_*)}$ by
$$
T [(|n\rangle \langle n^* |)_{n \ge 1}] \longmapsto [V]
$$
for any $(|n\rangle \langle n^* |)_{n \ge 1} \in \tilde{\mathcal{X}} (\mathbb{H}),$ where $V$ is the invertible operator so that $|n\rangle = V |e_n\rangle$ for $n \ge 1.$ Then, $T$ is surjective and isometric, and so the required assertion follows. This completes the proof.
\end{proof}

\subsection{Fibre bundles over the non-Hermitian observable space}\label{FibleBundleObNH}

According to \cite{Isham1999}, a bundle is a triple $(E, \pi, B),$ where $E$ and $B$ are two Hausdorff topological spaces, and $\pi: E \mapsto B$ is a continuous map which is always assumed to be surjective. The space $E$ is called the total space, the space $B$ is called the base space, and the map $\pi$ is called the projection of the bundle. For each $b \in B,$ the set $\pi^{-1} (b)$ is called the fiber of the bundle over $b.$ Given a topological space $F,$ a bundle $(E, \pi, B)$ is called a fiber bundle with the fiber $F$ provided every fiber $\pi^{-1} (b)$ for $b \in B$ is homeomorphic to $F.$ For a topological group $G,$ a bundle $(E, \pi, B)$ is called a $G$-bundle, denoted by $(E, \pi, B, G),$ provided $G$ acts on $E$ from the right preserving the fibers of $E$ such that the map $f$ from the quotient space $E/G$ onto $B$ defined by $f (x G) = \pi (x)$ for $x G \in E/G$ is a homeomorphism, namely
 \[
 \xymatrix{
E \ar[d]_{P_G} \ar[rr]^{id} & & E \ar[d]^\pi  \\
E/G \ar[rr]^{f:\cong} & & B }
 \]
where $P_G$ is the usual projection. A $G$-bundle $(E, \pi, B, G)$ is principal if the action of $G$ on $E$ is free in the sense that $x g = x$ for some $x \in E$ and $g \in G$ implies $g=1,$ and the group $G$ is then called the structure group of the bundle $(E, \pi, B, G)$ (in physical literatures $G$ is also called the gauge group, cf. \cite{BMKNZ2003}). Note that, in a principal $G$-bundle $(E, \pi, B, G),$ every fiber $\pi^{-1} (b)$ for $b \in B$ is homeomorphic to $G$ by the freedom of the $G$-action, hence it is a fiber bundle $(E, \pi, B, G)$ with the fiber $G$ and is simply called a principal fiber bundle with the structure group $G.$

Next, we construct principal fiber bundles over the observable space $\tilde{\mathcal{W}} (\mathbb{H}).$ To this end, fix a point $O_0 = \{ | e_n \rangle \langle e_n^*|: n \ge 1 \}$ in $\tilde{\mathcal{W}} (\mathbb{H}).$ For any $O \in \tilde{\mathcal{W}} (\mathbb{H}),$ we write
$$
\mathcal{F}^O_{O_0} = \{ V \in \mathcal{T} (\mathbb{H}):\; V^{-1} O V = O_0 \},
$$
that is, $V \in \mathcal{F}^O_{O_0}$ if and only if $\{ V| e_n \rangle: n \ge 1 \}$ is an unconditional basis such that $O = \{ V| e_n \rangle \langle e_n^*| V^{-1}: n \ge 1 \}.$ Indeed, if $O = \{ |n \rangle \langle n^*|: n \ge 1 \},$ then
$$
\mathcal{F}^O_{O_0} = \mathcal{G} (V) = \bigg \{ \sum_{n \ge 1} c_n | \sigma (n) \rangle \langle e_n^* |:\; \forall \sigma \in \Pi (d), \forall (c_n)_{n \ge 1} \in \mathcal{G}_\8 (\mathbb{C}_*) \bigg \},
$$
where $V$ is an invertible operator so that $|n\rangle = V |e_n\rangle$ for $n \ge 1.$ Also, define
\begin{equation}\label{eq:GaugeGroup}
\mathcal{G}_{O_0} = \bigg \{ \sum_{n \ge 1} c_n |e_{\sigma (n)} \rangle \langle e_n^* |:\; \forall \sigma \in \Pi (d), \forall (c_n)_{n \ge 1} \in \mathcal{G}_\8 (\mathbb{C}_*) \bigg \}.
\end{equation}
By \eqref{eq:PermutationGroupRepresentation} and \eqref{eq:ScalarGroupRepresentation}, $\mathcal{G}_{O_0}$ is a (non-abelian) subgroup of $\mathcal{T} (\mathbb{H})$ generated by $\mathcal{G}_\8 (\mathbb{C}_*)$ and $\Pi (d).$

The (right) action of $\mathcal{G}_{O_0}$ on $\mathcal{F}^O_{O_0}$ is defined as: For any $G \in \mathcal{G}_{O_0},$
$$
(G, V) \mapsto V G
$$
for all $V \in \mathcal{F}^O_{O_0}.$ Evidently, this action is free and invariant, namely $\mathcal{F}^O_{O_0}\cdot G = \mathcal{F}^O_{O_0}$ for any $G \in \mathcal{G}_{O_0}$ and every $O \in \tilde{\mathcal{W}} (\mathbb{H}).$ Note that
$$
\mathcal{T} (\mathbb{H}) = \bigcup_{O \in \tilde{\mathcal{W}} (\mathbb{H})} \mathcal{F}^O_{O_0},
$$
and $\mathcal{F}^O_{O_0}$ is homeomorphic to $\mathcal{G}_{O_0}$ as topological spaces since $\mathcal{F}^O_{O_0} = \mathcal{G} [V]$ for some $V \in \mathcal{T} (\mathbb{H})$ such that $O = \{ V| e_n \rangle \langle e_n^*| V^{-1}: n \ge 1 \}.$

The following is then principal fiber bundles over the observable space.

\begin{definition}\label{df:PrincipalFiber}
Given $O_0 \in \tilde{\mathcal{W}} (\mathbb{H}),$ a principal fiber bundle over $\tilde{\mathcal{W}} (\mathbb{H})$ associated with $O_0$ is defined to be
$$
\xi_{O_0} (\mathbb{H}) = (\mathcal{T} (\mathbb{H}), \Pi_{O_0}, \tilde{\mathcal{W}} (\mathbb{H}), \mathcal{G}_{O_0}),
$$
where $\mathcal{T} (\mathbb{H})$ is the total space, and the bundle projection $\Pi_{O_0}: \mathcal{T} (\mathbb{H}) \mapsto \tilde{\mathcal{W}} (\mathbb{H})$ is defined by
$$
\Pi_{O_0} (V) = O
$$
provided $V \in \mathcal{F}^O_{O_0}$ for (unique) $O \in \tilde{\mathcal{W}} (\mathbb{H}),$ namely $\Pi^{-1} (O) = \mathcal{F}^O_{O_0}$ for every $O \in \tilde{\mathcal{W}} (\mathbb{H}).$

We simply denote this bundle by $\xi_{O_0} = \xi_{O_0} (\mathbb{H}).$
\end{definition}

\begin{remark}\rm
In the sequel, we will see that the fixed point $O_0 \in \tilde{\mathcal{W}} (\mathbb{H})$ physically plays a role of measurement. On the other hand, the point $O_0$ induces a differential structure over the base space $\tilde{\mathcal{W}} (\mathbb{H})$ and determines the geometric structure of $\xi_{O_0},$ namely quantum connection and parallel transportation.
\end{remark}

For any two points $O_0, \bar{O}_0 \in \tilde{\mathcal{W}} (\mathbb{H})$ with $O_0 = \{ | e_n \rangle \langle e_n^*|: n \ge 1 \}$ and $\bar{O}_0 = \{ | \bar{e}_n \rangle \langle \bar{e}_n^* |: n \ge 1 \},$ we define an invertible operator $V_0$ by $V_0 |e_n \rangle = | \bar{e}_n \rangle$ for $n \ge 1.$ Then the map $T: \xi_{O_0} \mapsto \xi_{O'_0}$ defined by $T V = V V^{-1}_0$ for all $V \in \mathcal{T} (\mathbb{H})$ is an isometric isomorphism on $\mathcal{T} (\mathbb{H})$ such that $T$ maps the fibers of $\xi_{O_0}$ onto the fibers of $\xi_{O'_0}$ over the same points in the base space $\tilde{\mathcal{W}} (\mathbb{H}),$ namely the following diagram is commutative:
 \[
 \xymatrix{
\mathcal{T} (\mathbb{H}) \ar[dr]_{\Pi_{O_0}} \ar[rr]^{T} & & \mathcal{T} (\mathbb{H}) \ar[dl]^{\Pi_{O'_0}}  \\
      &  \tilde{\mathcal{W}} (\mathbb{H}) &            }
 \]
that is, $\Pi_{O_0} = \Pi_{O'_0} \circ T.$ Thus, $\xi_{O_0}$ and $\xi_{O'_0}$ are isomorphic as principal fiber bundles (cf. \cite{Isham1999}).

\subsection{Quantum connection}\label{q-connection}

In order to define the suitable concepts of quantum connection and parallel transportation over the principal fiber bundle $\xi_{O_0},$ we need to introduce a differential structure over $\tilde{\mathcal{W}} (\mathbb{H})$ associated with each fixed $O_0 \in \tilde{\mathcal{W}} (\mathbb{H}).$ Indeed, we will introduce a geometric structure over $\xi_{O_0}$ in a certain operator-theoretic sense (cf. \cite{Chen2020}).

Let us begin with the definition of tangent vectors for $\mathcal{G}_{O_0}$ in the operator-theoretic sense. We denote $\mathcal{Q} (\mathbb{H})$ to be the set of all densely defined closed operators in $\mathbb{H}.$

\begin{definition}\label{df:q-tangvectorUgroup}
Fix $O_0 \in \tilde{\mathcal{W}} (\mathbb{H}).$ For a given $V \in \mathcal{G}_{O_0},$ an operator $Q \in \mathcal{Q} (\mathbb{H})$ is called a tangent vector at $V$ for $\mathcal{G}_{O_0},$ if there is a curve $\chi: (-\varepsilon, \varepsilon) \ni t \mapsto V(t) \in \mathcal{G}_{O_0}$ with $\chi (0) = V$ such that for every $h \in \mathcal{D} (Q),$ the limit
$$
\lim_{t \to 0} \frac{V(t) (h) - V (h)}{t} = Q (h)
$$
in $\mathbb{H},$ denoted by $Q = \frac{d \chi (t)}{d t} \big |_{t=0}.$ The set of all tangent vectors at $V$ is denoted by $T_V \mathcal{G}_{O_0},$ and $T \mathcal{G}_{O_0}= \bigcup_{V \in \mathcal{G}_{O_0}} T_V \mathcal{G}_{O_0}.$ In particular, we denote $\mathrm{g}_{O_0} = T_V \mathcal{G}_{O_0}$ if $V =I.$
\end{definition}

Note that given $V \in \mathcal{G}_{O_0}$ with the form $V =\sum_{n \ge 1} c_n |e_{\sigma (n)} \rangle \langle e_n^* |$ for some $\sigma \in \Pi (d)$ and $(c_n)_{n \ge 1} \in \mathcal{G}_\8 (\mathbb{C}_*),$ for every $Q \in T_V \mathcal{G}_{O_0}$ there exists a unique sequence of complex number $(\alpha_n)_{n \ge 1}$ such that
\begin{equation}\label{eq:VertVectStrucGroupExpress}
Q = \sum_{n \ge 1} \alpha_n |e_{\sigma (n)} \rangle \langle e_n^* |.
\end{equation}
In particular, each element $Q \in \mathrm{g}_{O_0}$ is of form
\begin{equation}\label{eq:VertVectLieAlg}
Q = \sum_{n \ge 1} \alpha_n |e_n \rangle \langle e_n^* |,
\end{equation}
where $(\alpha_n)_{n \ge 1}$ is a sequence of complex number. Thus, $T_V \mathcal{G}_{O_0}$ is a linear subspace of $\mathcal{Q} (\mathbb{H}).$

The following is the tangent space for the base space $\tilde{\mathcal{W}} (\mathbb{H})$ in the operator-theoretic sense.

\begin{definition}\label{df:q-tangvectorBaseSpace}

\begin{enumerate}[{\rm 1)}]

\item Fix $O_0 \in \tilde{\mathcal{W}} (\mathbb{H}).$ A continuous curve $\chi: [a, b] \ni t \mapsto O(t) \in \tilde{\mathcal{W}} (\mathbb{H})$ is said to be differential at a fixed $t_0 \in (a, b)$ relative to $O_0,$ if there is a nonempty subset $\mathcal{A}$ of $\mathcal{Q} (\mathbb{H})$ satisfying that for any $Q \in \mathcal{A}$ there exist $\varepsilon>0$ such that $(t_0 -\varepsilon, t_0 + \varepsilon) \subset [a,b]$ and a strongly continuous curve $\gamma: (t_0 -\varepsilon, t_0 + \varepsilon) \ni t \mapsto V_t \in \mathcal{F}^{O(t)}_{O_0}$ such that the limit
$$
\lim_{t \to t_0} \frac{V_t (h) - V_{t_0} (h)}{t - t_0} = Q (h)
$$
for any $h \in \mathcal{D} (Q).$ In this case, $\mathcal{A}$ is called a tangent vector of $\chi$ at $t=t_0$ and denoted by
$$
\mathcal{A} = \frac{d O(t)}{d t}\big |_{t = t_0} = \frac{d \chi (t)}{d t} \big |_{t = t_0}.
$$
We can define the left (or, right) tangent vector of $\chi$ at $t = a$ (or, $t =b$) in the usual way.

\item Fix $O_0 \in \tilde{\mathcal{W}} (\mathbb{H}).$ Given $O \in \tilde{\mathcal{W}} (\mathbb{H}),$ a tangent vector of $\tilde{\mathcal{W}} (\mathbb{H})$ at $O$ relative to $O_0$ is define to be a nonempty subset $\mathcal{A}$ of $\mathcal{Q} (\mathbb{H}),$ provided $\mathcal{A}$ is a tangent vector of some continuous curve $\chi$ at $t=0,$ where $\chi: (-\varepsilon, \varepsilon) \ni t \mapsto O(t) \in \tilde{\mathcal{W}} (\mathbb{H})$ with $\chi (0) = O,$ i.e., $\mathcal{A} = \frac{d O (t)}{d t} \big |_{t=0}.$ We denote by $T_O \tilde{\mathcal{W}} (\mathbb{H})$ the set of all tangent vectors at $O,$ and write $T \tilde{\mathcal{W}} (\mathbb{H}) = \bigcup_{O \in \tilde{\mathcal{W}} (\mathbb{H})} T_O \tilde{\mathcal{W}} (\mathbb{H}).$

\end{enumerate}
\end{definition}

Note that, the tangent vectors for the base space $\tilde{\mathcal{W}} (\mathbb{H})$ is dependent on the choice of a measurement point $O_0.$ This is the same for the total space $\mathcal{T} (\mathbb{H})$ as follows.

\begin{definition}\label{df:q-tangvectorFiberSpace}

\begin{enumerate}[{\rm 1)}]

\item Fix $O_0 = \{ | e_n \rangle \langle e_n^*|: n \ge 1 \}$ in $\tilde{\mathcal{W}} (\mathbb{H}).$ A strongly continuous curve $\gamma: [a, b] \ni t \mapsto T(t) \in \mathcal{T} (\mathbb{H})$ is said to be differential at a fixed $t_0 \in (a, b)$ relative to $O_0,$ if there is an operator $Q \in \mathcal{Q} (\mathbb{H})$ such that $\{e_n\}_{n \ge 1} \subset \mathcal{D} (Q)$ and the limit
$$
\lim_{t \to t_0} \frac{T(t) (h) - T (t_0) (h)}{t - t_0} = Q (h)
$$
for all $h \in \mathcal{D} (Q).$ In this case, $Q$ is called the tangent vector of $\gamma$ at $t=t_0$ and denoted by
$$
Q = \frac{d \gamma (t)}{d t} \Big |_{t = t_0} = \frac{d T (t)}{d t} \Big |_{t = t_0}.
$$
We can define the left (or, right) tangent vector of $\gamma$ at $t = a$ (or, $t =b$) in the usual way.

Moreover, $\gamma$ is called a smooth curve relative to $O_0,$ if $\gamma$ is differential at each point $t \in [a, b]$ relative to $O_0,$ and for any $n \ge 1,$ the $\mathbb{H}$-valued function $t \mapsto \frac{d \gamma (t)}{d t} (e_n)$ is continuous in $[a, b].$

\item Fix $O_0 \in \tilde{\mathcal{W}} (\mathbb{H}).$ For a given $P \in \mathcal{T} (\mathbb{H}),$ an operator $Q \in \mathcal{Q} (\mathbb{H})$ is called a tangent vector of $\xi_{O_0}$ at $P,$ if there exists a strongly continuous curve $\gamma: (-\varepsilon, \varepsilon) \ni t \mapsto P_t \in \mathcal{T} (\mathbb{H})$ with $\gamma (0) = P,$ such that $\gamma$ is differential at $t=0$ relative to $O_0,$ and $Q = \frac{d \gamma (t)}{d t} \big |_{t =0}.$ Denote $T_P \xi_{O_0} (\mathbb{H})$ to be the set of all tangent vectors of $\xi_{O_0}$ at $P$ relative to $O_0,$ and write
$$
T \xi_{O_0} (\mathbb{H}) = \bigcup_{P \in \mathcal{T} (\mathbb{H})} T_P \xi_{O_0} (\mathbb{H}).
$$

\item Fix $O_0 \in \tilde{\mathcal{W}} (\mathbb{H}).$ Given $P \in \mathcal{T} (\mathbb{H}),$ a tangent vector $Q \in T_P \xi_{O_0} (\mathbb{H})$ is said to be vertical, if there is a strongly continuous curve $\gamma: (-\varepsilon, \varepsilon)\ni t \mapsto P_t \in \mathcal{F}^{\Pi (P)}_{O_0}$ with $\gamma (0) = P$ such that $\gamma$ is differential at $t=0$ relative to $O_0,$ and $Q = \frac{d \gamma (t)}{d t} \big |_{t =0}.$ We denote $V_P \xi_{O_0} (\mathbb{H})$ to be the set of all vertically tangent vectors at $P.$
\end{enumerate}
\end{definition}

\begin{remark}\rm
Note that for a given $P \in \mathcal{T} (\mathbb{H}),$ every $Q \in V_P \xi_{O_0} (\mathbb{H})$ with $O_0 = \{|e_n \rangle \langle e_n^* |\}_{n \ge 1}$ has the form
\begin{equation}\label{eq:VertTangentVect}
Q = \sum_{n \ge 1} \alpha_n P |e_n \rangle \langle e_n^* |,
\end{equation}
where $(\alpha_n)_{n \ge 1}$ is a sequence of complex number.
\end{remark}

Given $O_0 \in \tilde{\mathcal{W}} (\mathbb{H}),$ for each $G \in \mathcal{G}_{O_0},$ the right action $R_G$ of $\mathcal{G}_{O_0}$ on $\xi_{O_0}$ is defined by
$$
R_G (V) = V G,\quad \forall V \in \mathcal{T} (\mathbb{H}).
$$
This induces a map $(R_G)_*: T_P \xi_{O_0} (\mathbb{H}) \mapsto T_{R_G(P)} \xi_{O_0} (\mathbb{H})$ for each $P \in \mathcal{T} (\mathbb{H})$ such that
$$
(R_G)_* (Q) = Q G,\quad \forall Q \in T_P \xi_{O_0} (\mathbb{H}).
$$
Since $R_G$ preserves the fibers of $\xi_{O_0},$ then $(R_G)_*$ maps $V_P \xi_{O_0} (\mathbb{H})$ into $V_{R_G(P)} \xi_{O_0} (\mathbb{H}).$

Now, we are ready to define the concept of quantum connection over the observable space.

\begin{definition}\label{df:q-connetion}
Fix $O_0 = \{|e_n \rangle \langle e_n^* |\}_{n \ge 1}$ in $\tilde{\mathcal{W}} (\mathbb{H}).$ A connection on the principal fiber bundle $\xi_{O_0}= (\mathcal{T} (\mathbb{H}), \Pi_{O_0}, \tilde{\mathcal{W}} (\mathbb{H}), \mathcal{G}_{O_0})$ is a family of linear operators $\Omega = \{\Omega_P:\; P \in \mathcal{T} (\mathbb{H}) \},$ where $\Omega_P$ is a linear mapping from $T_P \xi_{O_0} (\mathbb{H})$ into $\mathrm{g}_{O_0}$ for $P \in \mathcal{T} (\mathbb{H}),$ satisfying the following conditions:
\begin{enumerate}[{\rm (1)}]

\item For any $P \in \mathcal{T} (\mathbb{H}),$
\begin{equation}\label{eq:q-ConnectionVertTangVect}
\Omega_P (Q) = P^{-1} Q, \quad \forall Q \in V_P \xi_{O_0} (\mathbb{H}).
\end{equation}

\item $\Omega_P$ depends continuously on $P$ in the sense that if $P_k$ converges to $P_0$ in $\mathcal{T} (\mathbb{H})$ in the uniform operator topology, and if $Q_k \in T_{P_k} \xi_{O_0} (\mathbb{H}), Q_0 \in T_{P_0} \xi_{O_0} (\mathbb{H})$ such that $\lim_k Q_k (e_n) = Q_0 (e_n)$ for all $n \ge 1,$ then
\be
\lim_k \Omega_{P_k} (Q_k) (e_n) = \Omega_{P_0} (Q_0) (e_n), \quad \forall n \ge 1.
\ee

\item For any $G \in \mathcal{G}_{O_0}$ and $P \in \mathcal{T} (\mathbb{H}),$
\begin{equation}\label{eq:GaugeTransConnection}
\Omega_{R_G(P)} [(R_G)_* (Q )] = G^{-1} \Omega_P (Q) G,\quad \forall Q \in T_P \xi_{O_0} (\mathbb{H}),
\end{equation}
namely, $\Omega$ transforms according to \eqref{eq:GaugeTransConnection} under the right action of $\mathcal{G}_{O_0}$ on $\xi_{O_0} (\mathbb{H}).$

\end{enumerate}
Such a connection is simply called an $O_0$-connection.
\end{definition}

Next, we present a canonical example of such quantum connections, which plays a crucial role in the expression of non-Hermitian observable-geometric phases.

\begin{example}\label{Ex:CanonicalConnection}\rm
Fix $O_0  = \{|e_n \rangle\langle e_n^*| \}_{n \ge 1}\in \tilde{\mathcal{W}} (\mathbb{H}),$ we define $\check{\Omega} = \{\check{\Omega}_P: P \in \mathcal{T} (\mathbb{H}) \}$ as follows: For each $P \in \mathcal{T} (\mathbb{H}),$ $\check{\Omega}_P : T_P \xi_{O_0} (\mathbb{H}) \mapsto \mathrm{g}_{O_0}$ is given by
\begin{equation}\label{eq:CanonConnction}
\check{\Omega}_P (Q) = P^{-1} \star Q,\quad \forall Q \in T_P \xi_{O_0} (\mathbb{H}),
\end{equation}
where
$$
P^{-1} \star Q = \sum_{n \ge 1}  \langle e_n^* | P^{-1} Q | e_n \rangle |e_n \rangle \langle e_n^*|.
$$

By \eqref{eq:VertTangentVect}, one has $P^{-1} \star Q = P^{-1} Q \in \mathrm{g}_{O_0}$ for any $Q \in V_P \xi_{O_0} (\mathbb{H}),$ namely $\check{\Omega}_P$ satisfies \eqref{eq:q-ConnectionVertTangVect}. The conditions (2) and (3) of Definition \ref{df:q-connetion} are clearly satisfied by $\check{\Omega}.$ Hence, $\check{\Omega}$ is an $O_0$-connection on $\xi_{O_0}.$ In this case, we write $\check{\Omega}_P = P^{-1} \star d P$ for any $P \in \mathcal{T} (\mathbb{H}).$
\end{example}

\subsection{Quantum parallel transportation}\label{q-ParallelTransport}

This section is devoted to the study of quantum parallel transport over the observable space.

\begin{definition}\label{df:q-lift}
Fix a point $O_0 \in \tilde{\mathcal{W}} (\mathbb{H}).$ For a continuous curve $C_W: [a, b] \ni t \longmapsto O (t) \in \tilde{\mathcal{W}} (\mathbb{H}),$ a lift of $C_W$ with respect to $O_0$ is defined to be a continuous curve
$$
C_P: [a, b] \ni t \longmapsto V(t) \in \mathcal{T} (\mathbb{H})
$$
satisfying the condition that $V(t) \in \mathcal{F}^{O(t)}_{O_0}$ for any $t\in [a, b].$
\end{definition}

\begin{remark}\label{rk:q-lift}\rm
Note that, a lift of $C_W$ depends on the choice of the point $O_0;$ for the same curve $C_W,$ lifts are distinct for different points $O_0.$ For this reason, such a lift $C_P$ is called a $O_0$-lift of $C_W.$
\end{remark}

\begin{definition}\label{df:SmoothCurve}
Fix a point $O_0 \in \tilde{\mathcal{W}} (\mathbb{H}).$ A continuous curve $C_W: [a, b] \ni t \longmapsto O (t) \in \tilde{\mathcal{W}} (\mathbb{H})$ is said to be smooth relative to $O_0,$ if it has a $O_0$-lift $C_P: [a, b] \ni t \longmapsto V(t) \in \mathcal{T} (\mathbb{H})$ which is a smooth curve relative to $O_0.$ In this case, $C_P$ is called a smooth $O_0$-lift of $C_W.$
\end{definition}

Note that, if a continuous curve $C_W: [a, b] \ni t \longmapsto O (t) \in \tilde{\mathcal{W}} (\mathbb{H})$ is smooth relative to $O_0,$ then it is differential at every point $t \in [a, b]$ relative to $O_0.$ Indeed, suppose that $C_P: [a, b] \ni t \longmapsto V(t) \in \mathcal{T} (\mathbb{H})$ is a smooth $O_0$-lift of $C_W.$ For each $t \in [a, b],$ we have $\frac{d C_P (t)}{d t} \in \frac{d O(t)}{d t},$ namely $\frac{d O(t)}{d t}$ is a nonempty subset of $\mathcal{Q} (\mathbb{H}),$ and hence $C_W$ is differential at $t$ relative to $O_0.$

\begin{definition}\label{df:q-HorizontalLift}
Fix $O_0 \in \tilde{\mathcal{W}} (\mathbb{H})$ and suppose $\Omega$ be an $O_0$-connection on $\xi_{O_0} (\mathbb{H}).$ Let $C_W: [0, T] \ni t \longmapsto O (t) \in \tilde{\mathcal{W}} (\mathbb{H})$ be a smooth curve. If $C_P: [0, T] \ni t \longmapsto \tilde{V} (t) \in \mathcal{T} (\mathbb{H})$ is a smooth $O_0$-lift of $C_W$ such that
\begin{equation}\label{eq:ParallelTransfConnectionCond}
\Omega_{\tilde{V} (t)} \Big [ \frac{d \tilde{V} (t)}{d t}\Big ] =0
\end{equation}
for every $t \in [0, T],$ then $C_P$ is called a horizontal $O_0$-lift of $C_W$ with respect to $\Omega.$

In this case, the curve $C_P: t \mapsto \tilde{V} (t)$ is also called the parallel transportation along $C_W$ with the starting point $C_P (0) = \tilde{V}(0)$ with respect to the connection $\Omega$ on $\xi_{O_0} (\mathbb{H}).$
\end{definition}

The following proposition shows the existence of the horizontal lifts in the case of finite dimension.

\begin{proposition}\label{prop:ParallelTransf}\rm
Let $\mathbb{H}$ be a finite-dimensional Hilbert space. Fix $O_0 \in \tilde{\mathcal{W}} (\mathbb{H})$ and let $\Omega$ be an $O_0$-connection on $\xi_{O_0} (\mathbb{H}).$ If $C_W: [0, T] \ni t \longmapsto O (t) \in \tilde{\mathcal{W}} (\mathbb{H})$ is a smooth curve, then for any $V_0 \in \mathcal{F}^{O(0)}_{O_0},$ there exists a unique horizontal $O_0$-lift $\tilde{C}_P$ of $C_W$ with respect to $\Omega$ such that $\tilde{C}_P (0) = V_0.$
\end{proposition}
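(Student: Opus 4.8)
The plan is to reduce the construction of the horizontal lift to a first-order linear ordinary differential equation living inside the structure group $\mathcal{G}_{O_0}$ and then apply the standard existence--uniqueness theorem for such equations, the finite-dimensionality of $\mathbb{H}$ being precisely what keeps this equation inside a finite-dimensional abelian group. First I would invoke the hypothesis that $C_W$ is smooth relative to $O_0$: by Definition \ref{df:SmoothCurve} this furnishes \emph{some} smooth $O_0$-lift $t \mapsto V(t) \in \mathcal{T}(\mathbb{H})$, which need neither be horizontal nor pass through $V_0$. Since the sought horizontal lift $\tilde V(t)$ must lie in the same fibre $\mathcal{F}^{O(t)}_{O_0}$ as $V(t)$ for each $t$, I would look for it in the gauge form $\tilde V(t) = V(t) g(t)$ with $g(t) \in \mathcal{G}_{O_0}$ a curve in the structure group to be determined and $\tilde V(0) = V_0$.

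The key step is to rewrite the horizontality condition $\Omega_{\tilde V(t)}\big[\frac{d}{dt}\tilde V(t)\big] = 0$ as an equation for $g$. Expanding $\frac{d}{dt}\tilde V = \frac{dV}{dt}\, g + V \frac{dg}{dt}$ and using the linearity of $\Omega_{\tilde V}$, I would treat the two summands separately. The term $V\frac{dg}{dt}$ is a vertical tangent vector at $\tilde V$ in the sense of \eqref{eq:VertTangentVect}, so condition (1) of Definition \ref{df:q-connetion} gives $\Omega_{\tilde V}\big(V \frac{dg}{dt}\big) = \tilde V^{-1} V \frac{dg}{dt} = g^{-1}\frac{dg}{dt}$; the term $\frac{dV}{dt}\, g$ equals the pushforward $(R_g)_*\big(\frac{dV}{dt}\big)$, so the gauge-covariance condition (3) gives $\Omega_{\tilde V}\big(\frac{dV}{dt}\, g\big) = g^{-1}\Omega_{V}\big(\frac{dV}{dt}\big) g$. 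Adding these and multiplying on the left by $g$ collapses the horizontality condition to
$$\frac{dg}{dt} = - \Omega_{V}\Big(\frac{dV}{dt}\Big)\, g .$$

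By \eqref{eq:VertVectLieAlg} the space $\mathrm{g}_{O_0}$ consists of the diagonal operators $\sum_n \alpha_n |e_n\rangle\langle e_n^*|$, and the identity component of $\mathcal{G}_{O_0}$ is $\mathcal{G}_\infty(\mathbb{C}_*)$; continuity therefore forces a fixed permutation $\sigma$ with $g(t) = \sum_n c_n(t)|e_{\sigma(n)}\rangle\langle e_n^*|$, while $\Omega_V\big(\frac{dV}{dt}\big) = \sum_m a_m(t)|e_m\rangle\langle e_m^*|$. The displayed equation then decouples into the $d$ scalar linear equations $\dot c_n = -a_{\sigma(n)} c_n$. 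I would check that the coefficients $a_m(t)$ are continuous on $[0,T]$ using the continuity axiom (2) of Definition \ref{df:q-connetion} together with the smoothness of the lift, namely that $t \mapsto V(t)$ is continuous in the uniform topology (which coincides with the strong one in finite dimensions) and that $t \mapsto \frac{dV}{dt}(e_n)$ is continuous. Continuous coefficients yield the unique global solutions $c_n(t) = c_n(0)\exp\big(-\int_0^t a_{\sigma(n)}(s)\,ds\big)$ on all of $[0,T]$.

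Finally I would fix the initial data. Because $V(0), V_0 \in \mathcal{F}^{O(0)}_{O_0}$, the element $g(0) = V(0)^{-1}V_0$ normalises $O_0$ and hence lies in $\mathcal{G}_{O_0}$, thereby prescribing $\sigma$ and the nonzero constants $c_n(0)$. The explicit solutions keep $c_n(t) \neq 0$, so $g(t) \in \mathcal{G}_{O_0}$ throughout and $\tilde V(t) = V(t)g(t) \in \mathcal{T}(\mathbb{H})$ is a genuine smooth $O_0$-lift, horizontal by construction, with $\tilde V(0) = V_0$. Uniqueness is immediate: any horizontal lift starting at $V_0$ lies in the same fibres, so it can be written $V(t)h(t)$ and must solve the same initial value problem as $g$. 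The main obstacle is the bookkeeping of the second paragraph --- correctly recognising $V\frac{dg}{dt}$ as vertical and $\frac{dV}{dt}\, g$ as $(R_g)_*\big(\frac{dV}{dt}\big)$ so that axioms (1) and (3) apply --- together with verifying continuity of the coefficients $a_m(t)$; finite-dimensionality enters precisely in guaranteeing that the identity component of the structure group is the finite-dimensional abelian $(\mathbb{C}_*)^d$, so that the connection equation reduces to finitely many decoupled scalar linear ODEs rather than an equation in an infinite-dimensional group.
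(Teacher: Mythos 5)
Your proposal is correct and takes essentially the same route as the paper's own proof: both obtain the horizontal lift by gauge-transforming a given smooth $O_0$-lift, reducing horizontality via axioms (1) and (3) of Definition \ref{df:q-connetion} to the linear initial-value problem $\frac{dg(t)}{dt} = -\Omega_{V(t)}\big[\frac{dV(t)}{dt}\big]\,g(t)$ (the paper's equation \eqref{eq:GeodesicEquaGaugeTransf}), with finite-dimensionality supplying continuity of the coefficient through axiom (2), and uniqueness following from the identical gauge computation. Your extra steps --- decoupling into the scalar equations $\dot{c}_n = -a_{\sigma(n)}c_n$, noting the fixed permutation by connectedness, and checking $c_n(t) \neq 0$ so that $g(t)$ stays in $\mathcal{G}_{O_0}$ and $\tilde{V}(t)$ stays in the fibre --- are a welcome refinement rather than a different method, since they make explicit why the abstract ODE solution remains in the structure group, a point the paper's proof leaves implicit.
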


\begin{proof}
Let $\Gamma: [0, T] \ni t \longmapsto V(t) \in \mathcal{T} (\mathbb{H})$ be a smooth $O_0$-lift of $C_W$ with respect to $\Omega$ with $\Gamma (0) = V_0.$ Note that if $\mathbb{H}$ is a Hilbert space of finite dimension, the condition (2) of Definition \ref{df:q-connetion} implies that the function $t \mapsto \Omega_{\Gamma (t)} \big [ \frac{d \Gamma (t)}{d t} \big ]$ is continuous in $[0, T].$ Then,
\begin{equation}\label{eq:GeodesicEquaGaugeTransf}
\frac{d G (t)}{d t} = - \Omega_{\Gamma (t)} \Big [ \frac{d \Gamma (t)}{d t} \Big ] \cdot G (t)
\end{equation}
with $G (0) = I$ has the unique solution in $[0, T].$ Therefore, $\tilde{C}_P (t) = \Gamma (t) \cdot G (t)$ is the required horizontal $O_0$-lift of $C_W$ for the initial point $V_0 \in \mathcal{F}^{O(0)}_{O_0}.$

To prove the uniqueness, suppose $\check{C}_P: [0, T] \ni t \longmapsto \check{V} (t) \in\mathcal{T} (\mathbb{H})$ be another horizontal $O_0$-lift of $C_W$ for the initial point $V \in \mathcal{F}^{O(0)}_{O_0}.$ Then, for every $t \in [0, T]$ there exists a unique $\check{G} (t) \in \mathcal{G}_0$ such that $\check{C}_P (t) = \tilde{C}_P (t) \cdot \check{G} (t)$ and $\check{G}(0) = I.$ Since
$$
0 = \Omega_{\check{V}(t)} \Big [ \frac{d \check{V}(t)}{d t} \Big ] = \check{G}(t)^{-1} \frac{d \check{G}(t)}{d t},
$$
this follows that $\check{G}(t) = I$ for all $t \in [0, T].$ Hence, the horizontal $O_0$-lift of $C_W$ is unique for the initial point $U \in \mathcal{F}^{O(0)}_{O_0}.$
\end{proof}

\begin{example}\label{Ex:QuantumParallelTransport}\rm
Let $C_P: [0,T] \ni t \mapsto V(t) \in \mathcal{T} (\mathbb{H})$ be a time evolution satisfying the Schr\"{o}dinger equation
\begin{equation}\label{eq:SchrodingerEquTimeUnitaryEvolution}
\mathrm{i} \frac{d V(t)}{d t} = h(t) V (t)
\end{equation}
where $h(t)$'s are time-dependent para-Hermitian operators in $\mathbb{H}.$ Given a fixed point $O_0 = \{|e_n \rangle\langle e_n^*|\}_{n \ge 1}$ in $\tilde{\mathcal{W}} (\mathbb{H}),$ define $C_W: [0, T] \ni t \longmapsto O (t) \in \tilde{\mathcal{W}} (\mathbb{H})$ by $O(t) = V(t) O_0 V^{-1}(t)$ for all $t \in [0, T].$ We define $\tilde{C}_P: [0,T] \ni t \mapsto \tilde{V}(t) \in \tilde{\mathcal{U}} (\mathbb{H})$ by
$$
\tilde{V} (t) = \sum_{n \ge 1} \exp \Big ( - \int^t_0\langle e_n^* | \Big [ V^{-1}(s) \frac{V(s)}{d s} \Big ] | e_n \rangle d s \Big )V(t) | e_n \rangle \langle e_n^*|
$$
for every $t \in [0, T],$ along with the initial point $\tilde{V}(0) = V(0) \in \mathcal{F}^{O(0)}_{O_0}.$ Then $\tilde{C}_P$ is a smooth $O_0$-lift of $C_W$ such that
$$
\check{\Omega}_{\tilde{V} (t)} \Big [ \frac{d \tilde{V}(t)}{d t} \Big ] =0
$$
for all $t \in [0, T],$ where $\check{\Omega}$ is the canonical $O_0$-connection introduced in Example \ref{Ex:CanonicalConnection}. Thus, $\tilde{C}_P$ is the horizontal $O_0$-lift of $C_W$ with respect to $\check{\Omega},$ namely $\tilde{C}_p$ is the parallel transportation along $C_W$ with the starting point $C_P (0) = U(0)$ with respect to the connection $\check{\Omega}$ on $\xi_{O_0} (\mathbb{H}).$
\end{example}

\

{\it Acknowledgments}\; This work is partially supported by the Natural Science Foundation of China under Grant No.11871468.

\bibliography{apssamp}

\begin{thebibliography}{**}

\bibitem{AA1987} Y. Aharonov, J. Anandan, Phase change during a cyclic quantum evolution, {\it Physical Review Letters} {\bf 58} (1987), 1593-1596.

\bibitem{Anandan1988} J. Anandan, Non-adiabatic non-Abelian geometric phase, {\it Physical Letters A} {\bf 133} (1988), 171-175.

\bibitem{AT2014} J.P. Antoine, C. Trapani, Some remarks on quasi-Hermitian operators, {\it Journal of Mathematical Physics} {\bf 55} (2014), 013503: 1-17.

\bibitem{BB1998} C.M. Bender, S. Boettcher, Real spectra in non-Hermitian Hamiltonians having PT symmetry, {\it Physical Review Letters} {\bf 80} (1998), 5243-5346.

\bibitem{BBM1999} C.M. Bender, S. Boettcher, P. N. Meisinger, PT-symmetric quantum mechanics, {\it Journal of Mathematical Physics} {\bf 40} (1999), 2201-2229.

\bibitem{BBJ2002} C. M. Bender, D. C. Brody, H. F. Jones, Complex extension of quantum mechanics, {\it Physical Review Letters} {\bf 89} (2002), 270401: 1-4.

\bibitem{BBJM2007} C. M. Bender, D. C. Brody, H. F. Jones, B.K. Meister, Faster than Hermitian quantum mechanics, {\it Physical Review Letters} {\bf 98} (2007), 040403: 1-4.

\bibitem{BBM2017} C. M. Bender, D. C. Brody, M.P. M\"{u}ller, Hamiltonian for the zeros of the Riemann zeta function, {\it Physical Review Letters} {\bf 118} (2017), 130201: 1-5.

\bibitem{Berry1984} M. V. Berry, Quantal phase factors accompanying adiabatic changes, {\it Proceedings of the Royal Society of London, Series A} {\bf 392} (1984), 45-57

\bibitem{BMKNZ2003} A. Bohm, A. Mostafazadeh, H. Koizumi, Q. Niu, J. Zwanziger,
{\it The Geometric Phase in Quantum Systems,} Springer-Verlag, Berlin, 2003.

\bibitem{Brody2014} D. C. Brody, Biorthogonal quantum mechanics, {\it Journal of Physics A: Mathematical and Theoretical} {\bf 47} (2014), 035305: 1-21.

\bibitem{Chen2020} Z. Chen, Observable-geometric phases and quantum computation, {\it International Journal of Theoretical Physics} {\bf 59} (2020), 1255-1276.

\bibitem{CZ2012} X. Cui, Y. Zheng, Geometric phases in non-Hermitian quantum mechanics, {\it Physical Review A} {\bf 86} (2012), 064104:1-4.

\bibitem{Dunf1958} N. Dunford, A survey of the theory of spectral operators, {\it Bulletin of the American Mathematical Society} {\bf 64} (1958), 217-274.

\bibitem{DS1963} N. Dunford, J.T. Schwartz, {\it Linear Operators Part II. Spectral Theory on Self-adjoint Operators in Hilbert Space,} Interscience, New York, 1963.

\bibitem{DS1971} N. Dunford, J.T. Schwartz, {\it Linear Operators Part III. Spectral Operators,} Wiley-Interscience, New York, 1971.

\bibitem{Dirac1958} P. A. M. Dirac, {\it The Principles of Quantum Mechanics} (Fourth Edition), Oxford University Press, London, 1958.

\bibitem{GW1988} J.C. Garrison, E.M. Wright, Complex geometrical phases for dissipative systems, {\it Physics Letters A} {\bf 128} (1988), 177-181.

\bibitem{Isham1999} C. J. Isham, {\it Morden Differential Geometry for Physicists} (Second Edition), World Scientific, Singapore, 1999.


\bibitem{LT1977}  J. Lindentrauss, L. Tzafriri, {\it Classical Banach Space {\rm I}: Sequence Spaces,} Springer-Verlag, Berlin, 1977.


\bibitem{MM2008} H. Mehri-Dehnavi, A. Mostafazadeh, Geometric phase for non-Hermitian Hamiltonians and its holonomy interpretation, {\it Journal of Mathematical Physics} {\bf 49} (2008), 082105:1-17.


\bibitem{Mosta2010} A. Mostafazadeh, Pseudo-Hermitian representation of quantum mechanics, {\it International Journal of Geometric Methods in Modern Physics} {\bf 7} (2010), 1191-1306.

\bibitem{Mosta2013} A. Mostafazadeh, Pseudo-Hermitian quantum mechanics with unbounded metric operators, {\it Philosophical Transactions of the Royal Society A} {\bf 371} (2013), 20120050: 1-7.

\bibitem{NZ2009} H. Neidhardt, V.A. Zagrebnov, Linear non-autonomous Cauchy problems and evolution semigroups, {\it Advances in Differential Equations} {\bf 14} (2009), 289-340.

\bibitem{vN1955} J. von Neumann, {\it Mathematical Foundations of Quantum Mechanics,} Princeton University Press, Princeton, 1955.

\bibitem{Ovch1993} P.G. Ovchinnikov, Automorphisms of the poset of skew projections, {\it Journal of Functional Analysis} {\bf 115} (1993), 184-189.

\bibitem{PLC2022} I.L. Paiva, R. Lenny, E. Cohen, Geometric phases and the Sagnac effect: Foundational aspects and sensing applications, {\it Advanced Quantum Technologies} {\bf 5} (2022), 2100121.

\bibitem{Pauli1943} W. Pauli, On Dirac'a new method of field quantization, {\it Reviews of Modern Physics} {\bf 15} (1943), 175-207.

\bibitem{Pazy1983} A. Pazy, {\it Semigroups of Linear Operators and Applications to Partial Differential Equations,} Springer-Verlag, New York, 1983.

\bibitem{RS1980I} M. Reed, B. Simon, {\it Method of Mordern Mathematical Physics,} Vol. I, Academic Press, San Diego, 1980.

\bibitem{RS1980II} M. Reed, B. Simon, {\it Method of Mordern Mathematical Physics,} Vol. II, Academic Press, Cambridge, 1980.

\bibitem{Rudin1991} W. Rudin, {\it Functional Analysis,} Second Edition, The McGraw-Hill Companies, Inc., New York, 1991.

\bibitem{SB1988} J. Samuel, R. Bhandari, General setting for Berry's phase, {\it Physical Review Letters} {\bf 60} (1988), 2339-2342.

\bibitem{Schmid2016} J. Schmid, Well-posedness of non-autonomous linear evolution equations for generators whose commutators are scalar, {\it Journal of Evolution Equations} {\bf 16} (2016), 21-50.

\bibitem{SG2017} J. Schmid, M. Griesemer, Well-posedness of non-autonomous linear evolution equations in uniformly convex spaces, {\it Mathematische Nachrichten} {\bf 290} (2017), 435-441.

\bibitem{SGH1992} F.G. Scholtz, H.B. Geyer, F.J. W. Hahne, Quasi-Hermitian operators in quantum mechanics and the variational principle, {\it Annals of physics} {\bf 213} (1992), 74-101.

\bibitem{Sierra2019} G. Sierra, The Riemann zeros as spectrum and the Riemann hypothesis, {\it Symmetry} {\bf 11} (2019), 494:1-37.

\bibitem{Simon1983} B. Simon, Holonomy, the quantum adiabatic theorem, and Berry's phase, {\it Physical Review Letters} {\bf 51} (1983), 2167-2170.

\bibitem{Sj2015} E. Sj\"{o}qvist, Geometric phases in quantum information, {\it International Journal of Quantum Chemistry} {\bf 115} (2015), 1311-1326


\bibitem{STAHJS2012} E. Sj\"{o}qvist, D.M. Tong, L.M. Andersson, B. Hessmo, M. Johansson, K. Singh, Non-adiabatic holonomic quantum computation, {\it New Journal of Physics} {\bf 14} (2012), 103035: 1-10.

\bibitem{ZW2019} Q. Zhang, B. Wu, Non-Hermitian quantum systems and their geometric phases, {\it Physical Review A} {\bf 99} (2019), 032121:1-7.

\end{thebibliography}

\end{document}